\theoremstyle{plain}
\newtheorem{thm}{Theorem}[section]
\newtheorem{prop}[thm]{Proposition}
\theoremstyle{definition}
\theoremstyle{remark}
\newtheorem*{rmk}{Remark}
\newtheorem*{ex}{Example}
\newtheorem*{example}{Example}
\newcommand{\beq}{\begin{equation*}}
\newcommand{\eeq}{\end{equation*}}
\newcommand{\ket}[1]{|{#1}\rangle}
\newcommand{\bra}[1]{\langle #1 |}
\title{Geometric formulation of quantum mechanics}
\author{Hoshang Heydari}
\date{}
\begin{document}

\maketitle

\abstract{
Quantum mechanics is among the most important and successful mathematical model for describing our physical reality. The traditional formulation of quantum mechanics is linear and algebraic. In contrast classical mechanics is a geometrical and non-linear theory that is defined on a symplectic manifold.
However, after invention of general relativity, we are convinced that geometry is physical and effect us in all scale.
Hence the geometric formulation of quantum mechanics sought to give a unified picture of physical systems based on its underling geometrical structures, e.g., now, the states are represented by points of a symplectic manifold with a compatible Riemannian metric, the observables are real-valued functions on the manifold, and the quantum evolution is governed by a symplectic flow that is generated by a Hamiltonian function. In this work we will give a compact introduction to main ideas of geometric formulation of quantum mechanics. We will provide the reader  with the details of geometrical structures of both pure and mixed quantum states. We will also discuss and review some important applications of geometric  quantum mechanics.
}


\tableofcontents




%
%
\section{Introduction}
\label{introduction}

In the geometrical description of  classical mechanics  the states are represented by the points of a
symplectic manifold $\mathcal{M}$ which is called the phase space \cite{Abraham_1978}. The space of observables consists of the
real-valued and smooth functions on the phase space. The measurement of an observable $f:\mathcal{M}\longrightarrow \mathbb{R}$ in a
state $p\in \mathcal{M}$ is given by $f(p)$. The space of observables is  equipped
with the structure of a commutative and associative algebra. The symplectic structure of the phase space also provides it with
the Poisson bracket. An observable $f$ is associated with a vector field $X_{f}$. Hence, flow on the phase space is generated by each observable. Moreover, the dynamics
is given by a particular observer called the Hamiltonian $H$ and the flow generated by the Hamiltonian vector field $X_{H}$
 describes  the time evolution of the system on the phase space.

In quantum mechanics
the systems correspond to rays in the Hilbert space $\mathcal{H}$, and the observables are represented by hermitian/self-adjoint
linear operators on $\mathcal{H}$. Moreover, the space of observables is a real vector
space equipped with  two algebraic structures, namely  the Jordan product
and the commutator bracket. Thus  the space of observables is equipped with the
structure of a Lie algebra.
However, the measurement theory is different compare with the classical mechanics. In
 the standard interpretation of quantum mechanics, the measurement
of an observable $\hat{A}$ in a state $\ket{\psi}\in\mathcal{H}$ gives an eigenvalue of $\hat{A}$.
The observable $\hat{A}$ also gives rise to
a flow on the state space as in the classical theory. But the flow is generated by the 1-parameter group $e^{i\hat{A}t}$
that preserves the linearity of the Hilbert space. The  dynamics is governed by a specific observable, called the
Hamiltonian operator $\hat{H}$.

One can directly see that these theories both have  several points in common and
 also in difference. The classical mechanical framework is geometric and non-linear. But the quantum
mechanical framework is algebraic and linear. Moreover, the standard postulates of quantum mechanics
cannot be stated without reference to this linearity.
However, some researcher think that this difference seems quite surprising \cite{Gunter_1977,Kibble_1979,Ashtekar_etal1998,Brody_etal1999} and deeper investigation shows that quantum mechanics is not a linear theory either.
Since, the space of physical systems is not the Hilbert space $\mathcal{H}$ but it is the projective Hilbert space $\mathcal{P}(\mathcal{H})$ which  is a nonlinear manifold.
Moreover, the Hermitian inner-product of the Hilbert space naturally equips the projective space with
the structure of a K\"{a}hler manifold which
is also a symplectic manifold like the classical mechanical phase space $\mathcal{M}$. The projective space $\mathcal{P}(\mathcal{H})$ is usually called quantum phase space of the pure quantum states.

Let $\hat{H}$ be a Hamiltonian operator on $\mathcal{H}$. Then we can take  its expectation
value to obtain a real function on the Hilbert space which admits a
 projection $h$ to  $\mathcal{P}(\mathcal{H})$. The flow $X_{h}$ is exactly the
flow defined by the Schr\"{o}dinger equation on  $\mathcal{H}$. This means that Schr\"{o}dinger
evolution of quantum theory is the Hamiltonian flow on $\mathcal{P}(\mathcal{H})$.
These similarities show us that the classical mechanics and quantum mechanics have many points in common. However, the quantum phase space has additional structures such as a Riemannian metric  which are missing in the classical mechanics (actually Riemannian metric exists but it is not important in the classical mechanics). The Riemannian metric is part of underlying K\"{a}hler structure of quantum phase space. Some important features such as uncertainty relation and state vector reduction in quantum measurement processes are provided by the Riemannian metric. \\
In this work we will also illustrate the interplay between theory and the applications of geometric formulation of quantum mechanics. Recently, many researcher   \cite{Adler2000,Anandan1990b,Anandan1991,Gibbson1992,Marsden1999,Schilling1996, Marmo1,Marmo2,Montgomery1991,Hosh2,Levay} have contributed to development  of geometric formulation of quantum mechanics and how this formulation provide us with insightful information about our quantum world with many applications in foundations of quantum mechanics and quantum information theory such as quantum probability, quantum uncertainty relation, geometric phases, and quantum speed limit.
\\
In the early works, the most effort in geometric quantum mechanics were concentrated around understanding geometrical structures of pure quantum states and less attention were given to the mixed quantum states. Uhlmann was among  the first researcher to consider a geometric formulation of  mixed quantum states with the emphasizes on geometric phases \cite{Uhlmann1986,Uhlmann(1989),Uhlmann(1991)}. Recent attempt to uncover hidden geometrical structures of mixed quantum states were achieved in the following works \cite{GP,MB,DD,GQE,GUR,QSL}.
Some researcher also argue that geometric formulation of quantum mechanics could lead to a generalization of quantum mechanics \cite{Ashtekar_etal1998}. However, we will not discuss such a generalization in this work. Instead we concentrate our efforts to give an introduction to its basic structures with some applications.
In particular,  in section \ref{sec1} we review some important mathematical tools such as Hamiltonian dynamics,  principal fiber bundles, and momentum map. In section \ref{sec2} we will discuss the basic structures of geometric quantum mechanics including quantum phase space, quantum dynamics, geometric uncertainty relation,
 quantum measurement,  geometric postulates of quantum mechanics, and  geometric phase for pure quantum states. In section \ref{sec3} we will extend our discussion to more general quantum states, namely the mixed quantum states represented by density operators. Our review on the geometric quantum mechanics of mixed quantum states includes purification, symplectic reduction, symplectic and Riemannian structures, quantum energy dispersion,  geometric uncertainty relation, geometric postulates of quantum mechanics, and geometric phase. Finally in section \ref{sec4} we give a conclusion and an outlook. Note that we assume that reader are familiar with basic topics of differential geometry.

\section{Mathematical structures}\label{sec1}
Mathematical structures are important in both classical and  quantum physics. In algebraic description of  quantum mechanics linear algebra and operator algebra are the most preferred structures for describing physical systems. However, in geometric quantum mechanics the most important mathematical structures are geometrical such as
 Hamiltonian dynamics,  principal fiber bundles, and momentum maps. In this section we will give a short introduction to these topics.
\subsection{Hamiltonian dynamics }
In Hamiltonian mechanics the space of states or phase space is a differential manifold $\mathcal{M}$ equipped with a symplectic form  $\omega$ which plays an important role in describing the time evolution of the states of the system.
Here we will give a short introduction to Hamiltonian dynamics. For a detail discussion of Hamiltonian dynamics we recommend the following classical book \cite{Abraham_1978}.\\
Let $\mathcal{M}$ be a smooth manifold with $p\in \mathcal{M}$ and $\mathcal{T}(\mathcal{M})$ be the tangent space of $\mathcal{M}$.
Moreover, let
\begin{equation}
\omega: \mathcal{T}_{p}(\mathcal{M})\times \mathcal{T}_{p}(\mathcal{M}) \longrightarrow \mathbb{R}
\end{equation}
 be a two-form on $\mathcal{M}$. Then $\omega$ is called symplectic if
\begin{enumerate}
  \item $\omega$ is closed, $d\omega=0$, and
  \item $\omega$ is non-degenerated, that is, $\omega(u,v)=0$ for all $u \in\mathcal{T}_{p}(\mathcal{M})$ whenever $v=0$.
\end{enumerate}
The pair $(\mathcal{M},\omega)$ are called a symplectic manifold.
If $H:\mathcal{M}\longrightarrow \mathbb{R}$ is a smooth function on $\mathcal{M}$, then $dH$ is a 1-form on $\mathcal{M}$. Moreover, let $X:\mathcal{M}\longrightarrow \mathcal{T}(\mathcal{M})$ be a vector field. Then we define
a contraction map $\imath :  \mathcal{T}(\mathcal{M}) \longrightarrow \mathcal{T}^{*}(\mathcal{M})$ by  $\imath_{X_{H}}\omega=\omega (X_{H},\cdot)$. The vector field $X$ is called symplectic if  $\imath_{X}\omega$ is  closed. Furthermore, a vector field $X_{H}:\mathcal{M}\longrightarrow \mathcal{T}(\mathcal{M})$ is called a Hamiltonian vector field  with a Hamiltonian function $H$ if it satisfies
\begin{equation}
\imath_{X_{H}}\omega=\omega (X_{H},\cdot)= d H.
\end{equation}
A Hamiltonian system  consists of the following triple $(\mathcal{M},\omega, X_{H} )$.
Let $X_{H}\in \mathcal{T}(\mathcal{M})$ be a Hamiltonian vector field. Then $X_{H}$ generates the one-parameter group of diffeomorphism $\mathrm{Diff}(\mathcal{M})$
\begin{equation}
\{\phi^{H}_{t}\}:\mathbb{R}\times \mathcal{M}\longrightarrow \mathcal{M},
\end{equation}
 where $\phi^{H}_{t}\in\mathrm{Diff}(\mathcal{M})$ satisfies
 \begin{itemize}
   \item $\frac{d}{dt}\phi^{H}_{t}=X_{H}\circ \phi^{H}_{t}$ with $\phi^{H}_{0}=\mathrm{id}$, and
   \item $\phi^{H}_{t+s}=\phi^{H}_{t}(\phi^{H}_{s}(x))$ for all $t,s\in R$ and $x\in \mathcal{M}$.
 \end{itemize}
 For a Hamiltonian system $(\mathcal{M},\omega, X_{H} )$  each point $x\in \mathcal{M}$ corresponds to a state of system and the symplectic manifold $\mathcal{M}$ is called the state space or the phase space of the system. In such a classical system, the observables are real-valued functions on the phase space. Let $K:\mathcal{M}\longrightarrow \mathcal{M}$ be a function. Then $K$  is constant along the orbits of the flow of the Hamiltonian vector field $X_{H}$ if and only if the Poisson bracket defined by
 \begin{equation}
\{K,H\}_{\zeta}=\omega_{\zeta}(X_{K},X_{H})=dK(X_{H})(\zeta)
\end{equation}
vanishes for all $\zeta\in \mathcal{M}$. Assume $X_{H}$ is a Hamiltonian vector on $\mathcal{M}$, and let  $x$ be a point of $\mathcal{M}$. Moreover, let $\phi_{t}$ be one-parameter group generated by $X_{H}$ in a neighborhood of the point $x$. If we assume that the initial state is $x$, then the evolution of the state  can be described by the map
$\zeta_{x}:\mathbb{R }\longrightarrow\mathcal{M}$  defined by $\zeta_{x}(t)=\phi_{t}(x)$ with initial state $\zeta_{x}(0)=x$. Under these assumptions the trajectory of $\zeta_{x}(t)$ is determined by the Hamilton's equations
 \begin{equation}\label{HE}
 \zeta^{'}=X_{H}(\zeta).
\end{equation}
Note that \begin{equation}
 \mathcal{L}_{ X_{H}}\omega = \imath_{X_{H}}d\omega+d(\imath_{X_{H}}\omega)=0+ddH=0,
 \end{equation}
 where $\mathcal{L}_{ X_{H}}$ is the Lie derivative, implies that the flow $\phi_{t}$ preserves the symplectic  structure, that is $\phi^{*}_{t} \omega=\omega$. If the phase space $\mathcal{M}$ is compact, then $\zeta_{x}(t)$ is an integral curve of the Hamiltonian vector field $X_{H}$ at the point $x$.
\begin{thm}
Consider a Hamiltonian system $(\mathcal{M},\omega, X_{H} )$. If $\zeta(t)$ is an integral curve of $X_{H}$, then energy function $H(\zeta(t))$ is constant for all $t$. Moreover, the flow $\phi_t$  of $H$ satisfies  $H\circ\phi_t=H$.
 \end{thm}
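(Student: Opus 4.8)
The plan is to differentiate the composite function $t \mapsto H(\zeta(t))$ along the integral curve and show the derivative vanishes identically, using the defining relation of the Hamiltonian vector field together with the antisymmetry of $\omega$. First I would apply the chain rule to write
\begin{equation}
\frac{d}{dt}H(\zeta(t)) = dH_{\zeta(t)}\bigl(\zeta'(t)\bigr).
\end{equation}
Since $\zeta(t)$ is an integral curve of $X_H$, we have $\zeta'(t) = X_H(\zeta(t))$ by the Hamilton's equation \eqref{HE}, so the right-hand side becomes $dH_{\zeta(t)}\bigl(X_H(\zeta(t))\bigr)$.

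The key step is then to invoke the defining property $\imath_{X_H}\omega = \omega(X_H,\cdot) = dH$. Evaluating the $1$-form $dH$ on the vector $X_H$ gives
\begin{equation}
dH_{\zeta(t)}\bigl(X_H(\zeta(t))\bigr) = \omega_{\zeta(t)}\bigl(X_H(\zeta(t)), X_H(\zeta(t))\bigr) = 0,
\end{equation}
where the last equality is the antisymmetry (alternating property) of the two-form $\omega$, i.e. $\omega(u,u)=0$ for every tangent vector $u$. Hence $\frac{d}{dt}H(\zeta(t)) = 0$ for all $t$, so $H(\zeta(t))$ is constant; equivalently, in the Poisson-bracket notation introduced above this is just $\{H,H\}_\zeta = 0$.

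For the second assertion, I would argue that since $\phi_t(x)$ is precisely the integral curve of $X_H$ through $x$ (by the first bullet characterizing $\phi^H_t$, namely $\frac{d}{dt}\phi^H_t = X_H\circ\phi^H_t$ with $\phi^H_0=\mathrm{id}$), the first part applied at an arbitrary point $x$ yields $H(\phi_t(x)) = H(\phi_0(x)) = H(x)$ for all $t$ and all $x\in\mathcal{M}$, which is exactly $H\circ\phi_t = H$. I do not anticipate a serious obstacle here; the only thing to be slightly careful about is invoking existence and uniqueness of integral curves so that $\phi_t$ is well defined near $x$ (local in $t$ in general, global if $\mathcal{M}$ is compact, as noted in the excerpt), but the conservation identity itself is a one-line consequence of antisymmetry of $\omega$.
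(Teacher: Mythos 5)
Your proposal is correct and follows essentially the same route as the paper: the chain rule combined with Hamilton's equation $\zeta'=X_H(\zeta)$, the defining relation $\imath_{X_H}\omega=dH$, and the antisymmetry of $\omega$ to conclude $\frac{d}{dt}H(\zeta(t))=\omega(X_H,X_H)=0$. Your additional remarks deducing $H\circ\phi_t=H$ from the first part are a natural completion of the argument the paper leaves implicit.
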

\begin{proof}
By the Hamilton's equation (\ref{HE}) we have
\begin{eqnarray}
  \frac{d H(\zeta(t))}{dt} &=& d H_{\zeta(t)}(\zeta^{'}(t))
= d H_{\zeta(t)}((X_{H})_{\zeta(t)}))
  \\\nonumber &=& \omega((X_{H})_{\zeta(t)},(X_{H})_{\zeta(t)})=0.
\end{eqnarray}
Thus we have shown that $H(\zeta(t))$ is constant for all $t$.
 \end{proof}
 Let $\mathcal{M}$ be a manifold. Then an almost complex structure on $\mathcal{M}$ is an automorphism of its tangent bundle $J:\mathcal{T}\mathcal{M}\longrightarrow \mathcal{T}\mathcal{M}$ that satisfies $J^{2}=-\mathbf{1}$.
Moreover, the almost complex structure is  a complex structure if it is integrable, meaning that a rank two tensor, usually called the Nijenhuis tensor vanishes \cite{DaSilva}.
\\
Let  $(\mathcal{M},\omega)$ be symplectic manifold. Then a K\"{a}hler  manifold is symplectic manifold equipped with an integrable compatible complex structure. Moreover, $(\mathcal{M},\omega)$ being a K\"{a}hler  manifold implies that $\mathcal{M}$ is a complex manifold. Thus a K\"{a}hler form is a closed, real-valued, non-degenerated $2$-form compatible with the complex structure.
\begin{example}
Let $(q^{1},q^{2},\ldots,q^{n},p_{1},p_{2},\ldots, p_{n})$ be canonical coordinate for the symplectic form, that is $\omega=\sum^{n}_{i=1}q^{i}\wedge p_{i}$. Then in these coordinates we have
\begin{equation}
X_{H}=\left(\frac{\partial H}{\partial p_{i}},-\frac{\partial H}{\partial q^{i}}\right)=J\cdot d H,~~ ~~ J=\left(
           \begin{array}{cc}
             0 & I_{n} \\
             -I_{n} & 0 \\
           \end{array}
         \right),
\end{equation}
where $I_{n}$ is a $n\times n$ identity matrix.
\end{example}
 In the following sections we will show that the quantum  dynamics governed by Schr\"{o}dinger and von Neumann equations  can be described by Hamiltonian dynamics outlined in this section.

\subsection{Principal fiber bundle}
One important mathematical tool used in the geometric formulation of quantum physics is  principal fiber bundles. In this section we will introduce the reader to the basic definition and properties of principal fiber bundles and in the following sections we will apply the tool to the quantum theory.\\
Let $\mathcal{S}$ and $\mathcal{P}$ be differentiable manifolds and $G$ be a Lie group. Then a differentiable principal fiber bundle
\begin{equation}\label{pfb}
\xymatrix{
G \ar@{^{(}->}[r] & \mathcal{S}\ar[r]^{\pi} &\mathcal{P}}
\end{equation}
consists of the total space $\mathcal{S}$  and the action of $G$ on $\mathcal{S}$  that satisfies\begin{itemize}
\item $G$ acts freely from the right on $\mathcal{S}$, that is $\eta: \mathcal{S}\times G\longrightarrow \mathcal{S}$ defined by $\eta(p,g)=p\cdot g$.
\item the base space $\mathcal{P}=\mathcal{S}/G$ is a quotient space with $\pi$  being differentiable submersion.
 \item Each $\varrho\in \mathcal{P}$  has an open neighborhood $U$ and a diffeomorphism $\varphi:\pi^{-1}(U)$ $ \longrightarrow U\times G$ such that
     $\varphi(p)=(\pi(p),\phi(p))$ whit $\phi:\pi^{-1}(U) \longrightarrow  G$  defined by $\phi(p\cdot g)=\phi(p)g$, for all $p\in \pi^{-1}(U)$ and $g\in G$.
 \end{itemize}
If $p\in \pi^{-1}(\varrho)$, then $\pi^{-1}(\varrho)$ is the set of points $p\cdot g$, for all $g\in G$.
\begin{ex}\label{hs}
One important example of such a principal fiber bundle is the construction of homogeneous spaces  \begin{equation}
\xymatrix{
H \ar@{^{(}->}[r] & G\ar[r]^{\pi} &G/H},
\end{equation}
where  $H$ is a closed subgroup of $G$ and $G/H=\{g H:g\in G\}$ is the set of all left coset of $H$ in $G$.
\end{ex}

\begin{figure}[t]
\centering
\includegraphics[scale=.65]{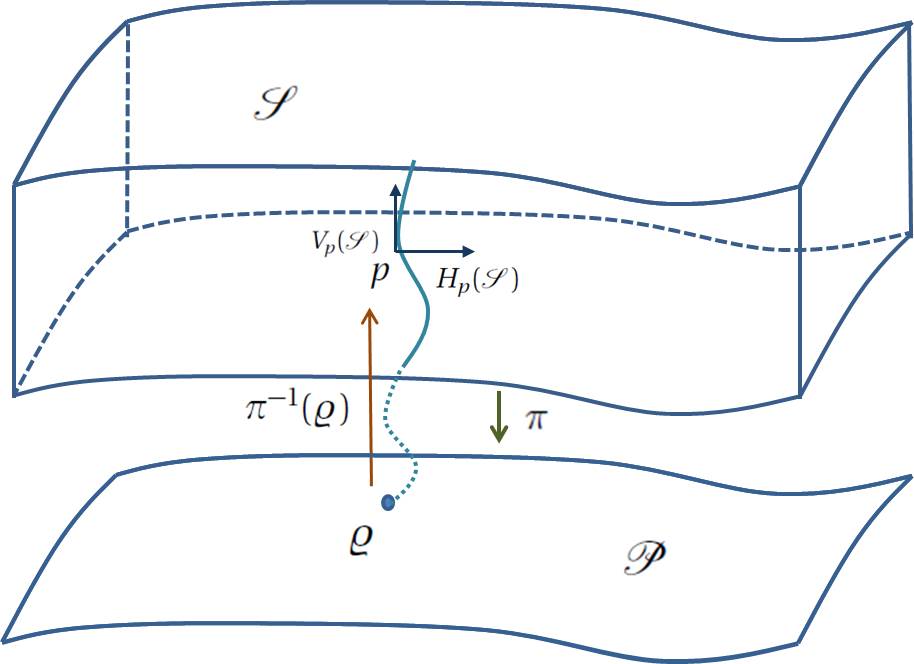}
%
%
\caption{Illustration of the bundle $\pi$ and the decomposition of $\mathcal{T}_{p}(\mathcal{S})$.}
\label{fig11}      
\end{figure}
Now, if we consider the principal fiber bundle $\xymatrix{
G \ar@{^{(}->}[r] & \mathcal{S}\ar[r]^{\pi} &\mathcal{P}}$, then the tangent space of $\mathcal{S}$ can be decompose as
\begin{equation}
\mathcal{T}_{p}(\mathcal{S})=V_{p}(\mathcal{S})\oplus H_{p}(\mathcal{S}),
\end{equation}
where $V_{p}=\{v\in \mathcal{T}_{p}(\mathcal{S}):\mathcal{T}_{p}\pi(v)=0\}$ is called a vertical subspace and $H_{p}(\mathcal{S})$ is called the horizontal subspace. Note that the horizontal subspace is transverse to the vertical subspace, see Figure 1. A  principal connection $\mathcal{A}$ in $\mathcal{S}$ is an assignment of the subspace $H_{p}(\mathcal{S})$ of $\mathcal{T}_{p}(\mathcal{S})$ such a that  $H_{pg}(S)=(R_{p})_{*}H_{p}(S)$ for each $p\in \mathcal{S}$ and $g\in G$ with $R_{g}:\mathcal{S}\longrightarrow \mathcal{S}$ defined by $R_{g}p=p\cdot g$.
A vector $v\in \mathcal{T}_{p}(\mathcal{S})$ is called horizontal if $v\in H_{p}(\mathcal{S})$ otherwise it is called vertical vector. A vertical vector is  denoted by  $\mathrm{vert}(v)$ and a horizontal one is  denoted by $\mathrm{hor}(v)$. Thus the vector $v$ can be written  as $v= \mathrm{vert}(v)+\mathrm{hor}(v)$. A curve $c(t)\in \mathcal{S}$ with $t\in\{0,1\}$ is called horizontal if $\frac{d c(t)}{dt}$, is horizontal. Now, let $ \gamma (t)\in \mathcal{P}$. Then $\tilde{\gamma}(t)$ is a lift of $\gamma(t)$
 if $\pi(\tilde{\gamma}(t))=\gamma(t)$. Moreover, if  $\tilde{\gamma}(t)$ is a horizontal curve of $\gamma$, then $\tilde{\gamma}(t)$  is called a horizontal lift. We will consider other important principal fiber bundles in the following sections.

\subsection{Momentum map}
The momentum map is also an import tool in the geometric formulation quantum mechanics,  specially in the construction of phase space of mixed quantum states. Here we give a very short introduction to the momentum map.

Let $\mathcal{S}$ be a symplectic manifold and $G$ be a Lie group acting on $\mathcal{S}$. Then the orbit of $G$ through $p\in \mathcal{S}$ is defined by
\begin{equation}
\mathcal{O}_{p}=\{\psi_{g}(p):g\in G\}
\end{equation}
and the stabilizer or isotropy subgroup of $G$ is defined by
\begin{equation}
G_{p}=\{g\in G: \psi_{g}(p)=p\}.
\end{equation}
The action of $G$ on $M$  is called transitive if there is only one orbit, it is called free if all stabilizers are trivial, and it is called locally free if all stabilizers are discrete.\\
Let $\mathcal{S}$ be a symplectic manifold and $G$ be a Lie group acting on $\mathcal{S}$ from the left.
Then the mapping
\begin{equation}\label{mm}
J:\mathcal{S}\longrightarrow \mathbb{g}^{*}
\end{equation}
is a momentum map, where $\mathbb{g}$ is the Lie algebra of $G$ and $\mathbb{g}^{*}$ is dual of $\mathbb{g}$. Moreover,  for a weakly regular value $\mu\in\mathbb{g}^{*}$ of $J$ the reduced space
\begin{equation}\mathcal{S}_{\mu}=J^{-1}(\mu)/G_{\mu}
\end{equation}
is a smooth manifold with the canonical projection being a surjective submersion, where
\begin{equation}
G_{\mu}=\{g\in G: \mathrm{Ad}^{*}_{g}\mu=\mu\}
\end{equation}
is the isotropy subgroup at $\mu$ for the co-adjoint action. The following theorem is called the symplectic reduction theorem \cite{Marsden_etal1974}.
\begin{figure}[t]
\centering
\includegraphics[scale=.65]{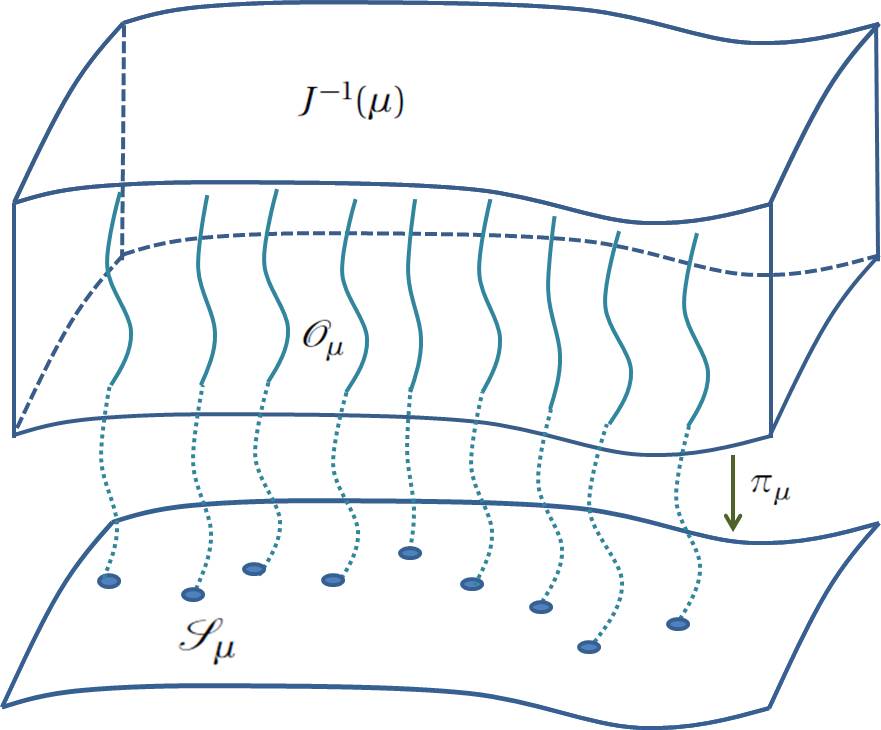}
%
%
\caption{Symplectic reduction, where $\mathcal{O}_{\mu}$ denotes the orbits of $G_{\mu}$.}
\label{fig11}      
\end{figure}
\begin{thm}\label{srt}
Consider a symplectic manifold $(\mathcal{S},\Omega)$ endow with a Hamiltonian left action of a Lie group $G$ and a momentum map $J:\mathcal{S}\longrightarrow \mathbb{g}^{*}$. Suppose that $\mu\in \mathbb{g}^{*}$ is a regular value of $J$ and the group $G_{\mu}$ acts freely and properly on $J^{-1}(\mu)$. Then the reduced phase space $\mathcal{S}_{\mu}=J^{-1}(\mu)/G_{\mu}$ has a unique symplectic form $\omega_{\mu}$ such that $\pi^{*}\omega_{\mu}=i^{*}_{\mu}\Omega$, where $i_{\mu}:J^{-1}(\mu) \longrightarrow \mathcal{S}$ is inclusion and $\pi_{\mu}:J^{-1}(\mu) \longrightarrow \mathcal{S}_{\mu}$ is a canonical projection.
\end{thm}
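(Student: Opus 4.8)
The plan is to follow the classical Marsden--Weinstein reduction argument in four stages: produce the quotient manifold $\mathcal{S}_{\mu}$, carry out the pointwise linear algebra identifying the kernel of the pulled-back form, descend the form to $\mathcal{S}_{\mu}$, and verify that it is symplectic and unique.

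First, since $\mu$ is a regular value of $J$, the level set $J^{-1}(\mu)$ is an embedded submanifold of $\mathcal{S}$ of codimension $\dim\mathfrak{g}$, with inclusion $i_{\mu}$. The $\mathrm{Ad}^{*}$-equivariance of the momentum map (part of the action being Hamiltonian) shows that for $g\in G_{\mu}$ one has $J(g\cdot p)=\mathrm{Ad}^{*}_{g}J(p)=\mu$ whenever $p\in J^{-1}(\mu)$, so $G_{\mu}$ acts on $J^{-1}(\mu)$; since by hypothesis this action is free and proper, the quotient-manifold theorem makes $\mathcal{S}_{\mu}=J^{-1}(\mu)/G_{\mu}$ a smooth manifold and $\pi_{\mu}\colon J^{-1}(\mu)\longrightarrow\mathcal{S}_{\mu}$ a surjective submersion (in fact a principal $G_{\mu}$-bundle). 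In particular $\pi_{\mu}^{*}$ is injective on differential forms, which already yields uniqueness of a form $\omega_{\mu}$ satisfying $\pi_{\mu}^{*}\omega_{\mu}=i_{\mu}^{*}\Omega$.

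The core of the proof is the linear algebra at a point $p\in J^{-1}(\mu)$. Writing $\xi_{\mathcal{S}}$ for the fundamental vector field generated by $\xi\in\mathfrak{g}$, the defining relation of the momentum map reads $\langle dJ_{p}(v),\xi\rangle=\Omega_{p}(\xi_{\mathcal{S}}(p),v)$ for every $v\in\mathcal{T}_{p}(\mathcal{S})$. Hence $\mathcal{T}_{p}(J^{-1}(\mu))=\ker dJ_{p}=(\mathcal{T}_{p}\mathcal{O}_{p})^{\Omega}$, the $\Omega$-orthogonal complement of the tangent space to the $G$-orbit $\mathcal{O}_{p}$. Since $\Omega$ is nondegenerate, taking $\Omega$-complements twice returns the original subspace, so $(\mathcal{T}_{p}(J^{-1}(\mu)))^{\Omega}=\mathcal{T}_{p}\mathcal{O}_{p}$, and therefore the kernel of the restricted form $i_{\mu}^{*}\Omega$ at $p$ equals $\mathcal{T}_{p}(J^{-1}(\mu))\cap(\mathcal{T}_{p}(J^{-1}(\mu)))^{\Omega}=\mathcal{T}_{p}(J^{-1}(\mu))\cap\mathcal{T}_{p}\mathcal{O}_{p}$. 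Equivariance then identifies this intersection with $\mathcal{T}_{p}(G_{\mu}\cdot p)$, i.e. with the tangent space to the fiber of $\pi_{\mu}$ through $p$.

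It follows that $i_{\mu}^{*}\Omega$ is a $G_{\mu}$-invariant $2$-form (invariance inherited from $G$-invariance of $\Omega$) whose null distribution is exactly the vertical distribution of $\pi_{\mu}$. Consequently the prescription $(\omega_{\mu})_{[p]}([v],[w]):=(i_{\mu}^{*}\Omega)_{p}(v,w)$, for any $v,w\in\mathcal{T}_{p}(J^{-1}(\mu))$ lifting tangent vectors at $[p]$, is well defined: independence of the chosen lifts is the kernel statement just proved, and independence of the representative $p$ in its $G_{\mu}$-orbit is $G_{\mu}$-invariance; smoothness follows by working in a local section of $\pi_{\mu}$, and by construction $\pi_{\mu}^{*}\omega_{\mu}=i_{\mu}^{*}\Omega$. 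Closedness is then formal: $\pi_{\mu}^{*}(d\omega_{\mu})=d(\pi_{\mu}^{*}\omega_{\mu})=d(i_{\mu}^{*}\Omega)=i_{\mu}^{*}(d\Omega)=0$, so injectivity of $\pi_{\mu}^{*}$ gives $d\omega_{\mu}=0$; and nondegeneracy of $\omega_{\mu}$ on $\mathcal{T}_{[p]}\mathcal{S}_{\mu}\cong\mathcal{T}_{p}(J^{-1}(\mu))/\mathcal{T}_{p}(G_{\mu}\cdot p)$ is precisely the kernel computation. I expect the main obstacle to be exactly that kernel computation — the double-orthogonal-complement identity together with the equivariance bookkeeping $\mathcal{T}_{p}(J^{-1}(\mu))\cap\mathcal{T}_{p}\mathcal{O}_{p}=\mathcal{T}_{p}(G_{\mu}\cdot p)$ — whereas the construction of the quotient and the closedness argument are routine.
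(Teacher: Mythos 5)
Your proof is correct, but note that the paper itself offers no proof of this theorem: it is quoted as the Marsden--Weinstein symplectic reduction theorem with a citation to \cite{Marsden_etal1974}, so there is nothing in the text to compare your argument against. What you have written is the standard argument from that literature, and every step checks out: the regular-value and quotient-manifold constructions, the identification $\ker dJ_{p}=(\mathcal{T}_{p}\mathcal{O}_{p})^{\Omega}$ from the defining relation of the momentum map, the double $\Omega$-complement giving $\ker\left(i_{\mu}^{*}\Omega\right)_{p}=\mathcal{T}_{p}(J^{-1}(\mu))\cap\mathcal{T}_{p}\mathcal{O}_{p}=\mathcal{T}_{p}(G_{\mu}\cdot p)$, and the descent, closedness, nondegeneracy and uniqueness arguments. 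The only place worth one more line is the equivariance bookkeeping you flag yourself: $\xi_{\mathcal{S}}(p)\in\ker dJ_{p}$ iff $\mathrm{ad}^{*}_{\xi}\mu=0$ iff $\xi\in\mathfrak{g}_{\mu}$, which is the infinitesimal version of the coadjoint equivariance you invoke; spelling that out would close the one step you leave implicit.
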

 If $H:\mathcal{S}\longrightarrow \mathbb{R}$ is a $G$-invariant Hamiltonian, then it induces  a Hamiltonian
$H_{\mu}:\mathcal{S}_{\mu}\longrightarrow \mathbb{R}$. Moreover, the flow of the Hamiltonian vector field $X_{H_{\mu}}$ on $\mathcal{S}_{\mu}$ is the $G_{\mu}$-quotient of the flow of $X_{H}$ on $J^{-1}(\mu)$. Let $c_{\mu}(t)$ be the integral curve of $X_{H_{\mu}}$ on $\mathcal{S}_{\mu}$. Then for $p_{0}\in J^{-1}(\mu)$ we will find the integral curve
$c(t)=\phi_{t}(p_{0})$   of $X_{H}$ such that $\pi_{\mu}(c(t))=c_{\mu}(t)$,
where $\pi_{\mu}:J^{-1}(\mu)\longrightarrow \mathcal{S}_{\mu}$ is a projection, see Figure 2. The following theorem is proved in \cite{MarsdenM}.
\begin{thm}
Suppose $\xymatrix{
G_{\mu} \ar@{^{(}->}[r] & J^{-1}(\mu)\ar[r]^{\pi} &\mathcal{S}_{\mu}}$ is a principal $G_{\mu} $-bundle with connection $A$. Moreover, let $c_{\mu}$ be the integral curve of the reduced dynamical system on $\mathcal{S}_{\mu}$. The integral curve of $\mathcal{S}$ through $p_{0}\in \pi^{-1}(c_{\mu}(0))$ is obtained as follows:
i) Horizontally lift $c_{\mu}$ to form the curve $d(t)\in  J^{-1}(\mu)$ through $p_{0}$; ii) Let  $\zeta (t)=A\cdot X_{H}(d(t))$, such that $\zeta(t)\in \mathbb{g}$; iii) Solve the equation $\dot{g}(t)=g(t)\cdot \zeta(t)$. Then $c(t)=g(t)\cdot d(t)$ is the integral curve of the system on $\mathcal{S}$ with initial condition $p_{0}$.
\end{thm}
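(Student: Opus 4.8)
The plan is to verify directly that the curve $c(t)=g(t)\cdot d(t)$, with the group equation solved from the initial condition $g(0)=e$ (so that $c(0)=e\cdot p_{0}=p_{0}$), is an integral curve of $X_{H}$, and then to invoke uniqueness of solutions of ordinary differential equations. Throughout write $\psi_{g}:\mathcal{S}\longrightarrow\mathcal{S}$ for the action of $g\in G_{\mu}$, and for $\xi$ in the Lie algebra of $G_{\mu}$ write $\xi_{\mathcal{S}}$ for the fundamental vector field $\xi_{\mathcal{S}}(p)=\frac{d}{ds}\big|_{0}\psi_{\exp(s\xi)}(p)$; note that, since the connection $A$ lives on the $G_{\mu}$-bundle, $\zeta(t)=A(X_{H}(d(t)))$ really takes values in the Lie algebra of $G_{\mu}$ and $g(t)\in G_{\mu}$. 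Two observations are immediate. First, because $g(t)\in G_{\mu}$ and $\pi_{\mu}$ is the quotient by $G_{\mu}$, we get $\pi_{\mu}(c(t))=\pi_{\mu}(d(t))=c_{\mu}(t)$, so $c$ does project onto the reduced trajectory. Second, because $d$ is the horizontal lift of $c_{\mu}$, its velocity $\dot d(t)$ is horizontal and $\mathcal{T}_{d(t)}\pi_{\mu}(\dot d(t))=\dot c_{\mu}(t)=X_{H_{\mu}}(c_{\mu}(t))$.

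First I would identify $\dot d(t)$ with the horizontal part of $X_{H}$ along $d$. By the symplectic reduction theorem (Theorem~\ref{srt}) and the discussion following it, $X_{H}$ restricted to $J^{-1}(\mu)$ is $\pi_{\mu}$-related to $X_{H_{\mu}}$, i.e. $\mathcal{T}_{d(t)}\pi_{\mu}(X_{H}(d(t)))=X_{H_{\mu}}(c_{\mu}(t))$. Hence $\mathrm{hor}(X_{H}(d(t)))$ is a horizontal vector projecting to $X_{H_{\mu}}(c_{\mu}(t))$, and since horizontal lifts of a given tangent vector are unique it must coincide with $\dot d(t)$; that is, $\dot d(t)=\mathrm{hor}(X_{H}(d(t)))$.

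Next I would differentiate the product $c(t)=\psi_{g(t)}(d(t))$ using the Leibniz rule for the action map $G_{\mu}\times\mathcal{S}\to\mathcal{S}$:
\[
\dot c(t)=\frac{d}{ds}\Big|_{s=t}\psi_{g(s)}(d(t))+\mathcal{T}_{d(t)}\psi_{g(t)}\big(\dot d(t)\big).
\]
For the first term, factor $g(s)=g(t)\big(g(t)^{-1}g(s)\big)$; the equation $\dot g(t)=g(t)\cdot\zeta(t)$ says exactly that $\frac{d}{ds}\big|_{s=t}\big(g(t)^{-1}g(s)\big)=\zeta(t)$, so the inner curve $s\mapsto\psi_{g(t)^{-1}g(s)}(d(t))$ passes through $d(t)$ with velocity $\zeta(t)_{\mathcal{S}}(d(t))$, and the first term equals $\mathcal{T}_{d(t)}\psi_{g(t)}\big(\zeta(t)_{\mathcal{S}}(d(t))\big)$. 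Now invoke the defining property of a principal connection: on fundamental vectors $A(\xi_{\mathcal{S}}(p))=\xi$, equivalently the vertical part of any $v$ is $\mathrm{vert}(v)=(A(v))_{\mathcal{S}}(p)$. Applying this to $v=X_{H}(d(t))$ and using $\zeta(t)=A(X_{H}(d(t)))$ gives $\zeta(t)_{\mathcal{S}}(d(t))=\mathrm{vert}(X_{H}(d(t)))$. Combining this with the previous paragraph,
\[
\dot c(t)=\mathcal{T}_{d(t)}\psi_{g(t)}\big(\mathrm{vert}(X_{H}(d(t)))+\mathrm{hor}(X_{H}(d(t)))\big)=\mathcal{T}_{d(t)}\psi_{g(t)}\big(X_{H}(d(t))\big).
\]
Finally, since $H$ is $G$-invariant, $X_{H}$ is $G$-equivariant, hence $G_{\mu}$-equivariant, so $\mathcal{T}_{d(t)}\psi_{g(t)}(X_{H}(d(t)))=X_{H}(\psi_{g(t)}(d(t)))=X_{H}(c(t))$. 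Thus $\dot c(t)=X_{H}(c(t))$ with $c(0)=p_{0}$, and uniqueness of integral curves finishes the proof.

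I expect the main obstacle to be this middle step: cleanly splitting $\dot c(t)$ into a ``motion within the fibre'' piece and a ``motion along the base'' piece, and recognizing that the connection form $A$ was chosen precisely so that the fibre piece reproduces the missing vertical component $\mathrm{vert}(X_{H}(d(t)))$ while the horizontal lift $\dot d(t)$ reproduces $\mathrm{hor}(X_{H}(d(t)))$. Once that identity is in place, the $G$-equivariance of $X_{H}$ and the uniqueness of integral curves are routine; care is only needed to keep the distinction between $\mathfrak{g}$ and the Lie algebra of $G_{\mu}$, since it is the latter in which $A$, $\zeta(t)$, and $g(t)$ actually live.
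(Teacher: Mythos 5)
Your argument is correct. Note that the paper itself gives no proof of this theorem; it only cites Marsden--Montgomery--Ratiu, and your reconstruction argument --- splitting $\dot c(t)$ via the Leibniz rule into the fundamental-vector-field piece $\zeta(t)_{\mathcal S}(d(t))=\mathrm{vert}(X_H(d(t)))$ and the horizontal-lift piece $\dot d(t)=\mathrm{hor}(X_H(d(t)))$, reassembling $X_H(d(t))$, and invoking $G$-equivariance of $X_H$ plus uniqueness of integral curves --- is precisely the standard proof found in that reference. Your care in supplying the missing initial condition $g(0)=e$ and in noting that $\zeta(t)$ actually lives in the Lie algebra of $G_\mu$ rather than $\mathfrak g$ (a small imprecision in the statement as printed) is appropriate.
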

\begin{example}
Consider the symplectic manifold $(\mathbb{C}^{n},\Omega)$, where
\begin{equation}
\Omega=\frac{i}{2}\sum_{i}
dz_{i}\wedge d\bar{z}_{i}=\sum_{i}
dx_{i}\wedge dy_{i}=\sum_{i}
r_{i}dr_{i}\wedge d\theta_{i}
\end{equation}
is a symplectic form on $\mathbb{C}^{n}$. The $S^{1}$-action on $(\mathbb{C}^{n},\Omega)$ is defined by $t \longmapsto \phi_{t}=\{\mathrm{multiplication by}  ~e^{it}\}$  is Hamiltonian with momentum map
 $J:\mathbb{C}^{n}\longrightarrow \mathbb{R}$ defined by $\psi\in \mathbb{C}^{n}\longmapsto -\frac{|\psi|^{2}}{2}+ C$, where $C$ is an arbitrary constant. If we chose the constant $C$ to be $C=\frac{1}{2}$, then $J^{-1}(0)=S^{2n-1}$ is the unit sphere in $\mathbb{C}^{n}$.
\end{example}

\section{Geometric formulation of  pure quantum states }\label{sec2}
In the geometric formulation of quantum mechanics  we consider a Hamiltonian dynamical system on a symplectic manifold, where the phase space is the projective Hilbert space constructed by principal fiber bundle and the evolution is governed by Schr\"{o}dinger's equation is equivalent to Hamilton's equations determined by symplectic structure. The K\"{a}hler structure of the quantum phase space includes a Riemannian metric that distinguishes the quantum from the classical mechanics.
In this section we will give an introduction to geometric quantum mechanics of pure states. The topics we will cover include quantum phase space, quantum dynamics, geometric uncertainty relation, quantum measurement, postulates of geometric quantum mechanics, and geometric phase.
\subsection{Quantum phase space }\label{QPS}
 In linear-algebraic approach to the quantum mechanics, a quantum  system is described on a  Hilbert space $\mathcal{H}$. We start by showing  that the Hilbert space is a K\"{a}hler space equipped with symplectic form and compatible Riemannian metric.
A hermitian inner product $
\langle \cdot\ket{\cdot}:\mathcal{H}\times \mathcal{H}\longrightarrow \mathbb{C}
$ on  a Hilbert space $\mathcal{H}$
is defined by
\begin{equation}
\langle \psi\ket{\phi}=\frac{1}{2\hbar}G(\psi, \phi)+\frac{i}{2\hbar}\Omega(\psi,\phi),
\end{equation}
for all  $\ket{\psi},\ket{\phi}\in \mathcal{H}$ where  the real part $G(\psi, \phi)=2 \hbar \mathrm{Re}\langle \psi\ket{\phi}$ is a Riemannian metric that satisfies the following relation
\begin{equation}
G(\psi, \phi)=G(\phi,\psi),
\end{equation}
and $\Omega(\psi, \phi)=2 \hbar \mathrm{Im}\langle \psi\ket{\phi}$ is a symplectic structure that satisfies
\begin{equation}
\Omega(\psi, \phi)=-\Omega(\phi,\psi).
\end{equation}
 If $\mathcal{H}$ is a complex Hilbert space, then an almost complex structure $J:\mathcal{H}\longrightarrow\mathcal{H}$ satisfies $J^2=-1$ and we have the following relations between $G$, $\Omega$ and $J$:
$
\Omega(\psi, \phi)=\Omega(J(\psi), J(\phi))
$ and $ G(\psi, \phi)=G(J(\psi), J(\phi))$. Moreover, since $ J(\psi)=i\psi$ it follows that
\begin{equation}
G(\psi, \phi)=\Omega(\psi,J(\phi))=-\Omega(J(\psi),\phi).
\end{equation}
These relations define a K\"{a}hler structure on $\mathcal{H}$. Thus the Hilbert space is a K\"{a}hler space.  Moreover, the Hilbert space is a symplectic manifold. Since $\mathcal{H}$ is isomorphic to its tangent space and symplectic form is non-degenerate, closed differential 2-form on $\mathcal{H}$.
\\
 For any state $\ket{\psi}\in \mathcal{H}$, the unit sphere in $\mathcal{H}$  is defined by
\begin{equation}
\mathcal{S}(\mathcal{H})=\{\ket{\psi}\in \mathcal{H}: \langle \psi\ket{\psi}=1\}\subset \mathcal{H}.
\end{equation}
Any two vectors $\ket{\psi}, \ket{\phi}\in \mathcal{S}(\mathcal{H})$ are equivalent if  they differ by a phase factor, that is $\ket{\psi}=e^{i\varphi} \ket{\phi}$, with $\varphi\in \mathbb{R}$.
Thus the proper phase space of pure quantum systems is
\begin{equation}
\mathcal{P}(\mathcal{H})=\mathcal{S}(\mathcal{H})/\sim,~~~~\textbf{quantum phase space}
\end{equation}
\begin{figure}[t]
\centering
\includegraphics[scale=.65]{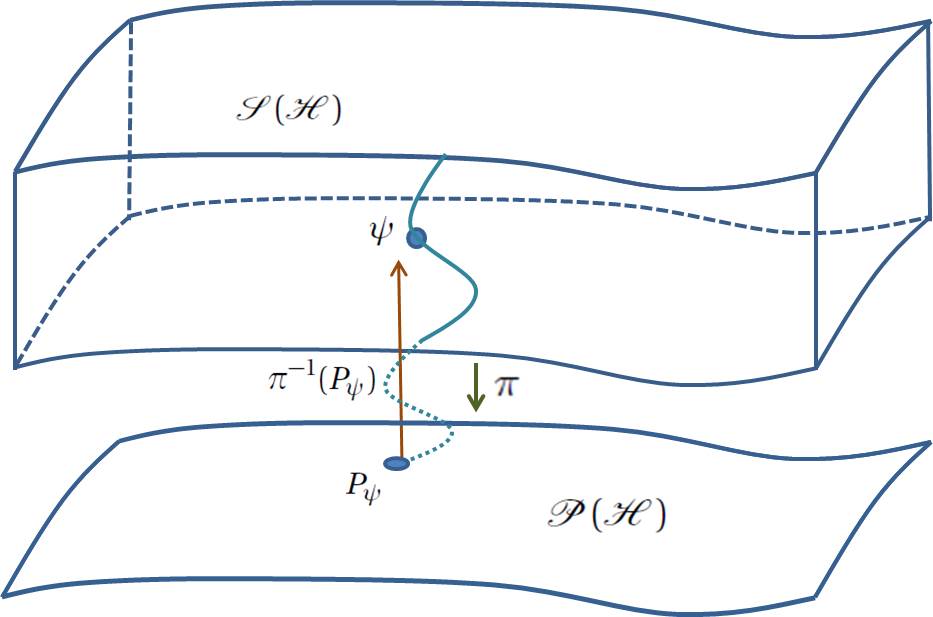}
%
%
\caption{Illustration of the principal fiber bundle.}
\label{fig11}      
\end{figure}
If $\ket{\psi}\in \mathcal{S}(\mathcal{H})$, then the corresponding equivalence class $[\ket{\psi}]$ can be identified with the one-dimensional projector
$
P_{\psi}=\ket{\psi}\bra{\psi}
$
which implies that $\mathcal{P}(\mathcal{H})$ is the space of one-dimensional projectors in $\mathcal{H}$.
This construction defines a principal $U(1)$-bundle
\begin{equation}
\xymatrix{
U(1) \ar@{^{(}->}[r] & \mathcal{S}(\mathcal{H})\ar[r]^{\pi} &\mathcal{P}(\mathcal{H})}
\end{equation}
over $\mathcal{P}(\mathcal{H})$, see Figure 3. Thus for any vector $\ket{\psi}\in \mathcal{H}$, the corresponding fibre
\begin{equation}
\pi^{-1}([\psi])=\left\{\frac{e^{i\varphi} \ket{\psi}}{\langle\psi\ket{\psi}}\in\mathcal{S}(\mathcal{H})\right\}
\end{equation}
could be identified with the Lie group $U(1)$.
In a finite dimensional quantum system the Hilbert space is given by $(n+1)$- dimensional Euclidean  space $\mathcal{H}=\mathbb{C}^{n+1}$ and we have the following principle fiber bundle
 \begin{equation}\label{Hopf}
\xymatrix{
U(1) \ar@{^{(}->}[r]&\mathbb{S}^{2n+1}\ar[r]^{\pi} &\mathbb{CP}^{n}},
\end{equation}
where the quantum phase space $\mathcal{P}(\mathcal{H})$ is a complex projective space
\begin{equation}
\mathbb{CP}^{n}=\mathbb{S}^{2n+1}/U(1),
\end{equation}
with $\mathbb{S}^{2n+1}$ being a unit sphere in $\mathbb{C}^{n+1}=\mathbb{R}^{2n+2}$.
 \begin{example}
 The simplest non-trivial case is the first Hopf bundle
 \begin{equation}
\xymatrix{
U(1)\cong\mathbb{S}^{1} \ar@{^{(}->}[r] &\mathbb{S}^{3}\ar[r]^{\pi} &\mathbb{CP}^{1}\cong \mathbb{S}^{2}}.
\end{equation}
Thus $\mathcal{P}(\mathbb{C}^{2})=\mathbb{CP}^{1}$ is the quantum phase space of  a quantum bit or a qubit state. To be able to define a qubit state explicitly,
let $\{\ket{i}\}^{1}_{i=0}$ be a set of orthonormal basis of a two-level quantum systems on $\mathcal{H}=\mathbb{C}^{2}$ and
\begin{eqnarray}
 \nonumber
 \ket{\psi} &=&\alpha_{0}\ket{0}+\alpha_{1}\ket{1}=
   e^{i\eta}\left(\cos\frac{\theta}{2}\ket{0}+e^{i\phi}\sin\frac{\theta}{2}\ket{1}\right)
\end{eqnarray}
 be a vector defined on $\mathcal{S}(\mathcal{H})=\mathbb{S}^{3}=\{(\alpha_{0},\alpha_{1})\in \mathbb{C}^{2}:|\alpha_{0}|^{2}+|\alpha_{1}|^{2}=1\}$. Then $\pi$ is defined by
\begin{equation} \label{coord}
\pi(\alpha_{0},\alpha_{1})=\mathrm{x}_{i}
=\langle\psi|\sigma_{i}\ket{\psi}=(\alpha^{*}_{0},\alpha^{*}_{1})
\sigma_{i}\left(%
\begin{array}{c}
  \alpha_{0}\\
  \alpha_{1}\\
\end{array}%
\right),
\end{equation}
 where $\sigma_{i}$ for $i=1,2,3$ are Pauli matrices,
\begin{equation}
\sigma=(\sigma_{1},\sigma_{2},\sigma_{3})=\left(\left(%
\begin{array}{cc}
  0 & 1 \\
  1 & 0 \\
\end{array}%
\right),\left(%
\begin{array}{cc}
  0 & -i \\
  i & 0 \\
\end{array}%
\right),\left(%
\begin{array}{cc}
  1 & 0 \\
  0 & -1 \\
\end{array}%
\right)\right).\end{equation}
 The sphere $\mathbb{S}^{2}$ is usually called the Bloch sphere
representation of the quantum bit $\ket{\psi}$.
 The simplest mixed quantum state is defined  by
\begin{equation}
\rho=\frac{1}{2}\left(I+\sum^{3}_{i=1}\mathrm{x}_{i}\sigma_{i}\right)=\left(%
\begin{array}{cc}
  1+\mathrm{x}_{3} & \mathrm{x}_{1}-i\mathrm{x}_{2} \\
  \mathrm{x}_{1}+i\mathrm{x}_{2} & 1-\mathrm{x}_{3} \\
\end{array}%
\right),\end{equation} where  $
  \mathrm{x}_{1}=\langle\psi|\sigma_{1}\ket{\psi}=2 \mathrm{Re}(\alpha_{0}^{*}\alpha_{1})$, $
  \mathrm{x}_{2}=\langle\psi|\sigma_{2}\ket{\psi}=2 \mathrm{Im}(\alpha^{*}_{0}\alpha_{1})$, and  $
  \mathrm{x}_{3}=\langle\psi|\sigma_{3}\ket{\psi}=|\alpha_{0}|^{2}-|\alpha_{1}|^{2}$  with the constraint $\det
\rho=1-\mathrm{x}^{2}_{1}-\mathrm{x}^{2}_{2}-\mathrm{x}^{2}_{3}\geq
0$. For a pure qubit state, we have equality, $\det \rho=0$. We will in details discuss the mixed quantum states in the next section.
\end{example}
Note that $\mathcal{P}(\mathcal{H})$ is equipped with a Hermitian structure induced by the one on $\mathcal{H}$ that makes $\mathcal{P}(\mathcal{H})$  a
 K\"{a}hler manifold. Next we will discuss K\"{a}hler structure on $\mathcal{P}(\mathcal{H})$.\\
The quantum phase space $\mathcal{P}(\mathcal{H})$ is a differentiable complex manifold and $\pi$ is  differentiable map. The tangent space $T_{[\psi]}\mathcal{P}(\mathcal{H})$ is isomorphic to quotient space $\mathcal{H}/\mathbb{C}\psi$, where $\mathbb{C}\psi$ is subspace of $\mathcal{H}$. Thus the projective map $\pi$ is a surjective submersion. Moreover, let $T_{\psi}\pi:\mathcal{H}\longrightarrow T_{[\psi]}\mathcal{P}(\mathcal{H})$. Then the kernel is defined by $\mathrm{Ker }T_{\psi}\pi =\mathbb{C}\psi$ and the restricted map $\mathrm{Ker }T_{\psi}\pi|_{(\mathbb{C}\psi)^{\perp}}:(\mathbb{C}\psi)^{\perp}\longrightarrow T_{[\psi]}\mathcal{P}(\mathcal{H})$ is a complex linear isomorphism  from $(\mathbb{C}\psi)^{\perp}=\{\ket{\psi}\in \mathcal{H}:\langle \psi\ket{\phi}=0\}$  to the tangent space of the quantum phase space $T_{[\psi]}\mathcal{P}(\mathcal{H})$ that also depends on chosen representative $\ket{\psi}\in [\psi]$.
\begin{prop}\label{FS}
Let $\ket{\psi}\in \mathcal{S}\mathcal{H}$ and $\ket{\phi_{1}}, \ket{\phi_{2}}\in (\mathbb{C}\psi)^{\perp}$, Then
\begin{equation}\label{HP}
\langle T_{\psi}\pi (\phi_{1})\ket{T_{\psi}\pi (\phi_{2})}=2\hbar\langle \phi_{1}\ket{ \phi_{2}}
\end{equation}
 gives a Hermitian inner product on $T_{[\psi]}\mathcal{P}(\mathcal{H})$, where the left hand side does not depend on the choice of $\ket{\psi}\in [\psi]$. Thus  (\ref{HP}) defines a  Hermitian metric on the quantum phase space  which is invariant under the action of transformation  $[U]$, for all unitary group $U$ on $\mathcal{H}$. Moreover,
\begin{equation}\label{HPo}
\omega( T_{\psi}\pi (\phi_{1}), T_{\psi}\pi (\phi_{2}))=2\hbar\mathrm{Im}\langle \phi_{1}\ket{ \phi_{2}}
\end{equation}
defines a symplectic form on quantum phase space.
Furthermore,
\begin{equation}\label{HPg}
g( T_{\psi}\pi (\phi_{1}), T_{\psi}\pi (\phi_{2}))=2\hbar\mathrm{Re}\langle \phi_{1}\ket{ \phi_{2}}
\end{equation}
defines a Riemannian metric on quantum phase space. The  symplectic form (\ref{HPo}) and the Riemannian metric (\ref{HPg}) are invariant under  all transformation   $[U]$.
\end{prop}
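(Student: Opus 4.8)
The plan is to reduce every assertion to a single well-definedness observation: for $\ket{\psi}\in\mathcal{S}(\mathcal{H})$ and $X\in T_{[\psi]}\mathcal{P}(\mathcal{H})$, passing to another representative of $[\psi]$ in $\mathcal{S}(\mathcal{H})$ — necessarily of the form $e^{i\varphi}\ket{\psi}$ — multiplies the unique horizontal vector representing $X$ by the same phase $e^{i\varphi}$. Indeed, if $\phi\in(\mathbb{C}\psi)^{\perp}$ and $T_{\psi}\pi(\phi)=X$, then the curve $t\mapsto e^{i\varphi}(\psi+t\phi)$ has the same image under $\pi$ as $t\mapsto\psi+t\phi$, so its velocity at $t=0$ is again $X$; and $e^{i\varphi}\phi\in(\mathbb{C}e^{i\varphi}\psi)^{\perp}=(\mathbb{C}\psi)^{\perp}$, so $e^{i\varphi}\phi$ is the horizontal representative of $X$ attached to the representative $e^{i\varphi}\ket{\psi}$. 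Since $\langle e^{i\varphi}\phi_{1}\ket{e^{i\varphi}\phi_{2}}=\langle\phi_{1}\ket{\phi_{2}}$, the right-hand side of (\ref{HP}) does not depend on the chosen representative, and neither do its imaginary part (\ref{HPo}) nor its real part (\ref{HPg}). I expect this phase-bookkeeping to be the only genuinely delicate step; the rest is transport of structure along the complex-linear isomorphism $T_{\psi}\pi|_{(\mathbb{C}\psi)^{\perp}}\colon(\mathbb{C}\psi)^{\perp}\to T_{[\psi]}\mathcal{P}(\mathcal{H})$ recalled just before the statement.

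Granting well-definedness, (\ref{HP}) is a Hermitian inner product on $T_{[\psi]}\mathcal{P}(\mathcal{H})$ because it is obtained by transporting, along that isomorphism, the Hermitian inner product of $\mathcal{H}$ restricted to the complex subspace $(\mathbb{C}\psi)^{\perp}$; positivity of $\langle\cdot\ket{\cdot}$ gives $2\hbar\|\phi\|^{2}>0$ for $X\neq 0$. Hence its real part (\ref{HPg}) is a symmetric positive-definite bilinear form, that is, a Riemannian metric, and its imaginary part (\ref{HPo}) is a real antisymmetric $2$-form; smoothness in $[\psi]$ is inherited from that of $\pi$ and of $\langle\cdot\ket{\cdot}$, since $\pi$ is a submersion.

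For the symplectic property I would treat non-degeneracy and closedness of (\ref{HPo}) separately. Non-degeneracy: if $\omega(X,\cdot)\equiv0$ with $X=T_{\psi}\pi(\phi)$, evaluate on $JX=T_{\psi}\pi(i\phi)$ and use $\mathrm{Im}\langle\phi\ket{i\phi}=\|\phi\|^{2}$ to conclude $\phi=0$. Closedness: pull (\ref{HPo}) back along $\pi\colon\mathcal{S}(\mathcal{H})\to\mathcal{P}(\mathcal{H})$. With the splitting $T_{\psi}\mathcal{S}(\mathcal{H})=\mathbb{R}\,i\psi\oplus(\mathbb{C}\psi)^{\perp}$ and tangent vectors written as $\xi_{j}=c_{j}\,i\psi+\phi_{j}$, the mixed terms $\langle i\psi\ket{\phi_{j}}$ vanish and $\langle i\psi\ket{i\psi}$ is real, so $\mathrm{Im}\langle\xi_{1}\ket{\xi_{2}}=\mathrm{Im}\langle\phi_{1}\ket{\phi_{2}}$; since $T_{\psi}\pi$ annihilates $i\psi$, this gives $\pi^{*}\omega=\iota^{*}\Omega$, the pull-back to the unit sphere of the symplectic form $\Omega$ of $\mathcal{H}$. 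As $\Omega$ is closed (being part of the K\"{a}hler structure of $\mathcal{H}$ established above) and $\pi$ is a surjective submersion, $\pi^{*}$ is injective on differential forms, so $0=d\,\iota^{*}\Omega=d\,\pi^{*}\omega=\pi^{*}d\omega$ forces $d\omega=0$. Equivalently, $\mathcal{P}(\mathcal{H})=\mathcal{S}(\mathcal{H})/U(1)$ is the Marsden--Weinstein reduction of $(\mathcal{H},\Omega)$ by the phase action (Theorem \ref{srt}), and $\omega$ is its reduced symplectic form.

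Finally, for a unitary $U$ on $\mathcal{H}$ the relation $\pi\circ U=[U]\circ\pi$ on $\mathcal{S}(\mathcal{H})$, differentiated and combined with linearity of $U$, yields $T_{[\psi]}[U]\circ T_{\psi}\pi=T_{U\psi}\pi\circ U$, while $U$ carries $(\mathbb{C}\psi)^{\perp}$ isometrically onto $(\mathbb{C}U\psi)^{\perp}$. Therefore the horizontal representative of $T_{[\psi]}[U](X_{i})$ is $U\phi_{i}$, and $2\hbar\langle U\phi_{1}\ket{U\phi_{2}}=2\hbar\langle\phi_{1}\ket{\phi_{2}}$ shows that $[U]$ preserves the Hermitian metric (\ref{HP}), hence its real part $g$ and imaginary part $\omega$ as well. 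The single thing to keep straight throughout is the pairing between the choice of lift $\ket{\psi}$ and the horizontal representative $\phi\in(\mathbb{C}\psi)^{\perp}$, which the first step pins down once and for all.
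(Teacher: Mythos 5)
Your argument is correct. Note that the paper does not actually prove Proposition \ref{FS} --- it defers to \cite{Marsden_eta2000} --- so there is no internal proof to compare against step by step; however, the route you take is exactly the one the paper gestures at in the paragraph following the proposition, where it invokes the reduction identity $\pi^{*}\omega=i^{*}\Omega$ from Theorem \ref{srt} and observes that it reproduces (\ref{HPo}) because tangent vectors to $\mathcal{S}(\mathcal{H})$ differ from their horizontal parts only in the direction $i\psi$, which $T_{\psi}\pi$ kills. Your contribution beyond what the paper records is to make the two genuinely delicate points explicit and correct: the phase-equivariance of the horizontal representative ($\phi\mapsto e^{i\varphi}\phi$ under $\psi\mapsto e^{i\varphi}\psi$), which is what makes (\ref{HP}) well defined on $T_{[\psi]}\mathcal{P}(\mathcal{H})$, and the deduction of $d\omega=0$ from $\pi^{*}d\omega=d\,i^{*}\Omega=0$ using injectivity of $\pi^{*}$ on forms for a surjective submersion; the non-degeneracy check via $\mathrm{Im}\langle\phi\ket{i\phi}=\|\phi\|^{2}$ and the unitary-invariance computation are likewise sound.
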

The Riemannian metric (\ref{HPg}) is usually called the Fubini-Study metric.  Let $\phi_{1},\phi_{2}\in \mathcal{H}$ and $\upsilon(\phi_{i})= T_{\psi}\pi (\phi_{i})$ for $i=1,2$. Moreover, assume that $\ket{\psi}\neq 0$. Then an explicit expression for the Hermitian metric  is given by
\begin{equation}\label{HP1}
\langle \upsilon(\phi_{1})\ket{\upsilon(\phi_{2})}=2\hbar\frac{1}{\|\ket{\psi}\|^{4}}\left(\|\ket{\psi}\|^{2}\langle \phi_{1}\ket{ \phi_{2}}-\langle \phi_{1}\ket{ \psi}\langle \psi\ket{ \phi_{2}}\right).
\end{equation}
Now the Riemannian metric  defined by equation (\ref{HPg}) is given by
\begin{equation}\label{HPo1}
g_{\psi}( \upsilon(\phi_{1}), \upsilon(\phi_{2}))=
2\hbar\frac{1}{\|\ket{\psi}\|^{4}}\mathrm{Re}\left(\|\ket{\psi}\|^{2}\langle \phi_{1}\ket{ \phi_{2}}-\langle \phi_{1}\ket{ \psi}\langle \psi\ket{ \phi_{2}}\right)
\end{equation}
and the symplectic form defined by equation (\ref{HPo}) is given by
\begin{equation}\label{HPg1}
\omega_{\psi}( \upsilon(\phi_{1}), \upsilon(\phi_{2}))=
2\hbar\frac{1}{\|\ket{\psi}\|^{4}}\mathrm{Im}\left(\|\ket{\psi}\|^{2}\langle \phi_{1}\ket{ \phi_{2}}-\langle \phi_{1}\ket{ \psi}\langle \psi\ket{ \phi_{2}}\right).
\end{equation}
For the proof of the proposition \ref{FS} see  reference \cite{Marsden_eta2000}.
Now, consider  the   principal $G_{\mu} $-bundle
\begin{equation}\xymatrix{
G_{\mu} \ar@{^{(}->}[r] & J^{-1}(\mu)\ar[r]^{\pi} &\mathcal{S}_{\mu}}
\end{equation}
that we have defined  in (\ref{mm}).  Let $\mathcal{S}=\mathcal{H}$
with the symplectic form $\Omega(\psi,\phi)=-\hbar \mathrm{Im} \langle \psi\ket{\phi}$ and the $\mathbb{S}^{1}$-action being the multiplication of a vector by $e^{i\theta}$. Then  $G=G_{\mu}=\mathbb{S}^{1}$ for any $\mu\in \mathbb{g}^{*}=\mathbb{R}$. If $\zeta\in \mathbb{R}$, then we have $\zeta_{\mathcal{H}}(\psi)=i\zeta\ket{\psi}$ for any $\ket{\psi}\in \mathcal{H}$. The momentum map $J:\mathcal{H}\longrightarrow \mathbb{R}$ is defined by
$J(\psi)=\frac{\|\ket{\psi}\|}{2}$. Moreover the symplectic form is given by $\Omega=- d\Theta$ with \begin{equation}\Theta(\psi)\cdot \ket{\phi}=\frac{\hbar}{2}\mathrm{Im}\langle \psi\ket{\phi}.
\end{equation}
Furthermore, the level set $J^{-1}\left(-\frac{1}{2}\right)$ is the sphere in $\mathcal{H}$ of radius one, that is $J^{-1}\left(-\frac{1}{2}\right)=\mathcal{S}(\mathcal{H})$. And finally the quantum phase space is $\mathcal{S}_{\mu}=\mathcal{P}(\mathcal{H})$.
\\
We can also apply the symplectic reduction theorem \ref{srt} to equip the quantum phase space with a symplectic form and a Riemannian metric as follows. Let $i:\mathcal{S}(\mathcal{H})\longrightarrow \mathcal{H}$. Then there is an unique symplectic form on $\mathcal{P}(\mathcal{H})$ such that $\pi^{*}\omega=i^{*}\Omega$. Thus we have
\begin{equation}\label{HPosrt}
\omega( \upsilon(\phi_{1}), \upsilon(\phi_{2}))=\Omega(\phi_{1}, \phi_{2}).
\end{equation}
Note that the expression given by
(\ref{HPo}) coincides with (\ref{HPosrt}) since for any vector $\phi\in  T_{\psi}\mathcal{S}(\mathcal{H})\cong(\mathbb{C}\psi)^{\perp}$ we have $\phi\perp\psi$. Thus for $\phi^{\perp}=\phi$ we have
\begin{equation}\label{HPosrt}
\omega( \upsilon(\phi_{1}), \upsilon(\phi_{2}))=2\hbar\mathrm{Im}\langle \phi_{1}\ket{ \phi_{2}}= \Omega(\phi_{1}, \phi_{2}).
\end{equation}
We also have $\pi^{*}g=i^{*}G$ which gives $g( \upsilon(\phi_{1}), \upsilon(\phi_{2}))=G(\phi_{1}, \phi_{2}).$
\\
We have in some details defined and characterized the quantum phase space of pure quantum states. In particular, we have used principal fiber bundle and momentum map to investigate the geometrical structures of quantum phase  space. In the next section we will discuss the quantum dynamics on quantum phase space based on the Hamiltonian dynamics.

\subsection{Quantum dynamics }\label{QPS}
The measurable quantities or observables  of the quantum system are represented by
hermitian/ self-adjoint linear operators acting on $\mathcal{H}$. One of the most important example of such an operator is called Hamiltonian operator $\hat{H}$ defined on $\mathcal{H}$.   Let $\ket{\psi}\in \mathcal{H}$. Then, the dynamics of quantum systems is described by the Schr\"{o}dinger's equation
\begin{equation}
i\hbar \frac{d \ket{\psi(t)}}{d t}=\hat{H}\ket{\psi(t)}.
\end{equation}
Let $\hat{A}$ be a hermitian/self-adjoint operator on $\mathcal{H}$. Then a real-valued expectation function $A:\mathcal{H}\longrightarrow \mathbb{R}$ is defined by
\begin{eqnarray}
 \nonumber
  A(\psi) &=& \frac{\langle \psi\ket{\hat{A}\psi}}{\langle \psi
  \ket{\psi}} \\
  &=& \frac{1}{2\hbar}G(\psi,\hat{A}\psi).
\end{eqnarray}
Now, we can associate to each $H$ a Hamiltonian vector field $X_{H}$, which is defined by $\imath_{X_{H}}\Omega=d H$. Moreover, we can identify $T_{\psi}\mathcal{H}$ with $\mathcal{H}$ since the Hilbert space is a linear space. Thus a vector field $X:\mathcal{H}\longrightarrow T_{\psi}\mathcal{H}$ can be identified with $X:\mathcal{H}\longrightarrow \mathcal{H}$ and a linear operator acts as a vector field on $\mathcal{H}$ as follows
\begin{equation}
X_{\hat{H}}(\psi)=-\frac{i}{\hbar}\hat{H}\psi.
\end{equation}
One also can show that an observable generates a 1-parameter group $\varphi_{t}:\mathcal{H}\longrightarrow \mathcal{H}$ defined by $\varphi_{t}(\psi)=e^{i \hat{H} t}\psi$ with $\psi\in T_{\psi}\mathcal{H}$.
\begin{thm}
The Schr\"{o}dinger vector field $X_{\hat{H}}$ is Hamiltonian and the Schr\"{o}dinger equation defines a classical Hamiltonian systems on $\mathcal{H}$:
\begin{equation}
X_{H}=\hbar X_{\hat{H}}.
\end{equation}
\end{thm}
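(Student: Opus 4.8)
The statement asserts that the Schr\"odinger vector field $X_{\hat H}(\psi)=-\tfrac{i}{\hbar}\hat H\psi$ is, up to the universal constant $\hbar$, the Hamiltonian vector field (in the sense of Section~\ref{sec1}) of the energy expectation
\begin{equation*}
H(\psi)=\langle\psi\ket{\hat H\psi}=\tfrac{1}{2\hbar}G(\psi,\hat H\psi),
\end{equation*}
so that the Schr\"odinger equation becomes Hamilton's equation $\psi'=X_H(\psi)$ on the symplectic manifold $(\mathcal{H},\Omega)$. I would prove this by a direct computation that uses the linearity of $\mathcal{H}$ throughout. First, since $\mathcal{H}$ is a linear space one identifies $T_\psi\mathcal{H}\cong\mathcal{H}$, so that $H$ is a genuine smooth real-valued function (real because $\hat H^{\dagger}=\hat H$) whose differential at $\psi$ is the directional derivative
\begin{equation*}
dH_\psi(\phi)=\left.\frac{d}{ds}\right|_{s=0}\langle\psi+s\phi\ket{\hat H(\psi+s\phi)}
=\langle\phi\ket{\hat H\psi}+\langle\psi\ket{\hat H\phi}=2\,\mathrm{Re}\langle\hat H\psi\ket{\phi},
\end{equation*}
the two cross terms being combined via self-adjointness of $\hat H$. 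This is an honest $1$-form on $\mathcal{H}$, and this is the point that must be checked with some care: $H$ is to be regarded as a quadratic form on all of $\mathcal{H}$, not merely as the normalized expectation on the unit sphere.

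Next I would solve the defining equation $\imath_{X_H}\Omega=dH$, i.e.\ $\Omega(X_H(\psi),\phi)=dH_\psi(\phi)$ for all $\phi\in T_\psi\mathcal{H}\cong\mathcal{H}$. Using $\Omega(\alpha,\beta)=2\hbar\,\mathrm{Im}\langle\alpha\ket{\beta}$ together with the K\"ahler relation $J(\psi)=i\psi$ — equivalently $G(\alpha,\beta)=\Omega(\alpha,J\beta)$ and $\mathrm{Re}\langle\alpha\ket{\beta}=\mathrm{Im}\langle -i\alpha\ket{\beta}$ — one rewrites the right-hand side as $\Omega(-i\hat H\psi,\phi)$ up to the conventional constant, and then non-degeneracy of $\Omega$ (guaranteed in Section~\ref{sec1}) forces the solution to be unique and proportional to $\hat H\psi$. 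Collecting the constants gives $X_H(\psi)=-i\hat H\psi=\hbar X_{\hat H}(\psi)$, the factor $\hbar$ recording the chosen normalizations of $\Omega$ and of $H$; thus $X_{\hat H}=\tfrac1\hbar X_H$ is, up to this universal constant, a Hamiltonian vector field, and $(\mathcal{H},\Omega,X_H)$ is a Hamiltonian system in the sense of Section~\ref{sec1}.

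Finally, to see that this Hamiltonian system really is the Schr\"odinger dynamics, one appeals to the general theory of Section~\ref{sec1}: the integral curve of $X_{\hat H}$ through $\psi_0$ is the solution of $\psi'=X_{\hat H}(\psi)=-\tfrac{i}{\hbar}\hat H\psi$, i.e.\ of $i\hbar\,\psi'=\hat H\psi$, which is Schr\"odinger's equation, with flow $\phi_t(\psi_0)=e^{-i\hat H t/\hbar}\psi_0$ the one-parameter group generated by $X_{\hat H}$. Hence the Schr\"odinger evolution is precisely the Hamiltonian flow generated by the energy function $H$, which is the assertion. The computation is routine; the only genuine obstacle is the consistent bookkeeping of the physical constants — fixing the normalizations of $\Omega$ and of $H$ so that the proportionality constant in $X_H=\hbar X_{\hat H}$ is exactly $\hbar$ — together with the verification that $H$, as a quadratic function on the whole of $\mathcal{H}$, is smooth with the differential computed above, so that $\imath_{X_H}\Omega=dH$ holds globally on $\mathcal{H}$.
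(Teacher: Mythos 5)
Your proposal is correct and follows essentially the same route as the paper's proof: identify $T_\psi\mathcal{H}$ with $\mathcal{H}$, compute $dH$ as a directional derivative using self-adjointness of $\hat{H}$, convert the real part of the inner product into the symplectic form via the complex structure $J(\psi)=i\psi$, and read off $X_H=-i\hat{H}\psi=\hbar X_{\hat{H}}$ from non-degeneracy of $\Omega$. The only caveat is the one you already flag yourself — the factor of $\hbar$ depends on which normalization of $H$ and $\Omega$ one fixes (the paper itself silently inserts a factor $\tfrac12$ into $H$ inside the proof) — so your bookkeeping is no less careful than the original.
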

\begin{proof}
We can identify the tangent space of a Hilbert space $T_{\psi}\mathcal{H}$ with the Hilbert space, since $\mathcal{H}$ is a linear space. Now, let $\psi, \xi\in \mathcal{H}$, then we have
\begin{eqnarray}
 \nonumber
 dH (\psi)(\xi) &=& \frac{d}{dt}H(\psi+ t \xi)\mid_{t=0}
 \\\nonumber&=&\frac{1}{2}\frac{d}{dt}
 \langle \psi+ t \xi\ket{\hat{H}(\psi+t \xi)}\mid_{t=0} \\\nonumber
  &=&  \frac{1}{2}\langle \xi\ket{\hat{H}\psi}+\frac{1}{2} \langle  \psi\ket{\hat{H} \xi}\\\nonumber
  &=& G(\xi,\hat{H}\psi)
  \\\nonumber
  &=& \hbar \Omega(\xi,\frac{i}{\hbar}\hat{H}\psi)\\\nonumber
  &=& \hbar \Omega( X_{\hat{H}},\xi)(\psi).
\end{eqnarray}
Thus $\hbar X_{\hat{H}}$ is a Hamiltonian vector field with respect to the symplectic form $\Omega$ on the Hilbert space.
\end{proof}
Next we will discuss the relation between Poisson bracket defined by the symplectic form
on the Hilbert space and the commutators
of quantum observables $\hat{A}$ and $\hat{B}$ with
corresponding expectation function $A$ and $B$ on $\mathcal{H}$ respectively. Let $X_{A}$ and $X_{B}$ be the
Hamiltonian vector fields corresponding to $A$ and $B$. Then we have
\begin{eqnarray}
 \nonumber
\{A,B\}_{\Omega}(\psi)&=& \Omega_{\psi}(X_{A},X_{B}) \\\nonumber
   &=& \frac{1}{2i}\left(\langle X_{A}(\psi)\ket{X_{B}(\psi)}
   -\langle X_{B}(\psi)\ket{X_{A}(\psi)}\right)\\\nonumber
   &=&\frac{1}{2i}\langle \psi\ket{(\hat{A}\hat{B}-\hat{B}\hat{A})\psi}
   \\\nonumber
   &=&\frac{1}{2i}\langle \psi\ket{[\hat{A},\hat{B}]\psi},
\end{eqnarray}
where  $X_{A}=\hbar X_{\hat{A}}=-i\hat{A}\psi$ and $X_{B}=\hbar X_{\hat{B}}=-i\hat{B}\psi$.
Let $P_{\psi}=\ket{\psi}\bra{\psi}\in \mathcal{P}(\mathcal{H})$ be a one-dimensional projector in $\mathcal{H}$ which is also called a density operator corresponding to the pure state $\ket{\psi}$.
 Then the evolution of $P_{\psi}$   under unitary operators is governed by von Neumann equation as follows
\begin{equation}
i\hbar \frac{d P_{\psi}}{dt}=[\hat{H}, P_{\psi}]
\end{equation}
with a solution that defines a curve in $\mathcal{P}(\mathcal{H})$.
Now, let $a:\mathcal{P}(\mathcal{H})\longrightarrow \mathbb{R}$ be a function defined by $a(P_{\psi})=A(\psi)$ or equivalently by $a(P_{\psi})=\mathrm{Tr}(\hat{A}P_{\psi})$.
We can also define a Hamiltonian vector field on quantum phase space $\mathcal{P}(\mathcal{H})$ by
$\imath_{H}\Omega=dH $, where $h(P_{\psi})=H(\psi)$. Since $\mathcal{P}(\mathcal{H})$ being a K\"{a}hler manifold, it is equipped with a symplectic from $\omega $ and the von Neumann equation can be written as follows
\begin{equation}
 \frac{d P_{\psi}}{dt}=\frac{1}{\hbar}\{h, P_{\psi}\}_{\omega},
\end{equation}
where $\{h, P_{\psi}\}_{\omega}$ is the Poisson bracket corresponding to the symplectic form $\omega$ on the quantum phase space. \\
Thus we have shown that in the geometric formulation of quantum mechanics the observables are real-valued functions and  Schr\"{o}dinger equation  is the symplectic flow of a Hamiltonian function on $\mathcal{P}(\mathcal{H})$. Moreover, the quantum phase space $\mathcal{P}(\mathcal{H})$ is a nonlinear manifold equipped with a K\"{a}hler structure and the flow generated by an observable consists of nonlinear symplectic transformation as in the classical mechanics.
\\
In the following section, we will discuss some applications of geometric quantum mechanics including geometric uncertainty relation, quantum measurement,  postulates of quantum mechanics, and geometric phase.
\subsection{Geometric uncertainty relation}
We have have shown that the quantum phase space $\mathcal{P}(\mathcal{H})$ of a pure state is  equipped with a symplectic and a Riemannian structure. Moreover, we have shown that the expectation values of
observables can be related to the Riemannian and symplectic structures.
This relation, enable us to derive a geometric version of uncertainty relation \cite{Robertson_1929}  for a pure state \cite{Chruscinski_etal2004}.
Let the uncertainty of an observable $\hat{C}$ corresponding to a normalized state $\ket{\psi}$ be defined by
\begin{eqnarray}
 (\Delta\hat{C})^{2}_{\psi} &=& \bra{\psi} \hat{C}^{2}\ket{\psi}-\bra{\psi} \hat{C}\ket{\psi}^{2}.
\end{eqnarray}
Then, the following theorem provides us with a geometric version of  uncertainty relation.
\begin{thm}
Let  $\hat{A}$ and $\hat{B}$  be two quantum observales on $\mathcal{H}$. Then we have
\begin{eqnarray}\label{inhs}
 (\Delta\hat{A})^{2}(\Delta\hat{B})^{2} &\geq&\Omega(X_{A},X_{B})
^{2}+\left(G(X_{A},X_{B})-AB
 \right)^{2},
\end{eqnarray}
where $(\Delta A)^{2}=(\Delta\hat{A})^{2}(\psi)$ is a function  on the Hilbert space $\mathcal{S}(\mathcal{H})$. Moreover, let $a:\hat{A}\longrightarrow \mathbb{R}$ and $b:\hat{B}\longrightarrow \mathbb{R}$ be two functions on $\mathcal{P}(\mathcal{H})$ of the observables $\hat{A}$ and $\hat{B}$ respectively, which are defined by
\begin{equation}
a \circ \pi=\bra{\psi}\hat{A} \ket{\psi}=A|_{\mathcal{S}(\mathcal{H})},~~
~b \circ \pi=\bra{\psi}\hat{B}\ket{\psi}=B|_{\mathcal{S}(\mathcal{H})}.
\end{equation}
Furthermore, let $g$ be the Riemannian metric and $\omega$ be the  symplectic form on $\mathcal{P}(\mathcal{H})$ such that the Poisson and Riemannian brackets can be defined  by
\begin{equation}
\{A,B\}_{\omega}=\omega(X_{A},X_{B}), ~~~(A,B)_{g}=g(X_{A},X_{B}),
\end{equation}
respectively. Then the uncertainty relation on $\mathcal{P}(\mathcal{H})$ is given by
\begin{eqnarray}
\label{inps}
 (\Delta a)^{2}(\Delta b)^{2} &\geq&\omega(X_{a},X_{b})^{2}
+g(X_{a},X_{b})^{2}\\\nonumber&=&
\{a,b\}^{2}_{\omega}+(a,b)^{2}_{g}
,
\end{eqnarray}
where $ (\Delta a)^{2}(P_{\psi})= (\Delta A)^{2}(\psi)$.
\end{thm}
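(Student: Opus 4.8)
The plan is to reduce both inequalities to a single Cauchy--Schwarz estimate in $\mathcal{H}$ and then re-express that estimate using the correspondences between operators and Hamiltonian vector fields already set up in Section \ref{QPS} and in Proposition \ref{FS}. Fix a normalized $\ket{\psi}$, write $A=\bra{\psi}\hat{A}\ket{\psi}$ and $B=\bra{\psi}\hat{B}\ket{\psi}$ (both real), and introduce the deviation vectors
\begin{equation*}
\ket{f}=(\hat{A}-A)\ket{\psi},\qquad \ket{g}=(\hat{B}-B)\ket{\psi}.
\end{equation*}
Using self-adjointness of $\hat{A},\hat{B}$ and reality of $A,B$ one checks $\langle f\ket{f}=(\Delta\hat{A})^{2}_{\psi}$, $\langle g\ket{g}=(\Delta\hat{B})^{2}_{\psi}$ and $\langle f\ket{g}=\langle\hat{A}\psi\ket{\hat{B}\psi}-AB$. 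The only analytic input is then
\begin{equation*}
\langle f\ket{f}\,\langle g\ket{g}\ \ge\ |\langle f\ket{g}|^{2}=\big(\mathrm{Re}\langle f\ket{g}\big)^{2}+\big(\mathrm{Im}\langle f\ket{g}\big)^{2};
\end{equation*}
everything after this is translation.

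First I would establish the Hilbert-space inequality (\ref{inhs}). An observable $\hat{A}$ acts on $\mathcal{H}$ as the vector field $X_{A}$ with $X_{A}(\psi)=-i\hat{A}\psi$, and the K\"ahler data of $\mathcal{H}$ give $\Omega(u,v)=\mathrm{Im}\langle u\ket{v}$ and $G(u,v)=\mathrm{Re}\langle u\ket{v}$ (up to the overall normalization used in the text). Writing $X_{A}(\psi)=-i\ket{f}-iA\ket{\psi}$, $X_{B}(\psi)=-i\ket{g}-iB\ket{\psi}$ and using $\ket{f},\ket{g}\perp\ket{\psi}$, one gets $\langle X_{A}(\psi)\ket{X_{B}(\psi)}=\langle f\ket{g}+AB$; hence $\Omega(X_{A},X_{B})=\mathrm{Im}\langle f\ket{g}$ and $G(X_{A},X_{B})=\mathrm{Re}\langle f\ket{g}+AB$, i.e. $G(X_{A},X_{B})-AB=\mathrm{Re}\langle f\ket{g}$. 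Substituting into the Cauchy--Schwarz line above yields exactly (\ref{inhs}), the variances appearing as $\langle f\ket{f}$ and $\langle g\ket{g}$.

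Next I would descend to $\mathcal{P}(\mathcal{H})$. The structural point is that $\ket{f},\ket{g}\in(\mathbb{C}\psi)^{\perp}$ by construction, so $-i\ket{f},-i\ket{g}$ are horizontal at $\psi$, and since $\mathbb{C}\psi=\mathrm{Ker}\,T_{\psi}\pi$ and $A|_{\mathcal{S}(\mathcal{H})}=a\circ\pi$ (likewise $B,b$), the reduced Hamiltonian vector fields are $\pi$-related to the ones on the sphere:
\begin{equation*}
X_{a}(P_{\psi})=T_{\psi}\pi\big(-i\ket{f}\big)=\upsilon(-i\ket{f}),\qquad X_{b}(P_{\psi})=\upsilon(-i\ket{g}).
\end{equation*}
Proposition \ref{FS} then identifies the induced Hermitian metric on $T_{[\psi]}\mathcal{P}(\mathcal{H})$ evaluated on these vectors with $\langle -if\ket{-ig}=\langle f\ket{g}$ (again up to normalization), so that $\omega(X_{a},X_{b})=\mathrm{Im}\langle f\ket{g}=\{a,b\}_{\omega}$, $g(X_{a},X_{b})=\mathrm{Re}\langle f\ket{g}=(a,b)_{g}$, and $g(X_{a},X_{a})=\langle f\ket{f}=(\Delta A)^{2}(\psi)=(\Delta a)^{2}(P_{\psi})$. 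The same Cauchy--Schwarz inequality now reads $(\Delta a)^{2}(\Delta b)^{2}\ge\omega(X_{a},X_{b})^{2}+g(X_{a},X_{b})^{2}$, which together with the definitions of the Poisson and Riemannian brackets is precisely (\ref{inps}).

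I expect the main obstacle to be the identification in the last paragraph: one must verify carefully that $-i(\hat{A}-A)\ket{\psi}$ really is (the horizontal lift of) the Hamiltonian vector field $X_{a}$ of the expectation function $a$ on $\mathcal{P}(\mathcal{H})$ — equivalently that $\imath_{X_{a}}\omega=da$ — using $\pi^{*}\omega=i^{*}\Omega$ and the fact that $A$ descends to $a$; this is the projective counterpart of the statement in Section \ref{QPS} that an observable $\hat{A}$ generates the flow $e^{i\hat{A}t}$. The remaining steps — the self-adjointness computations, the real/imaginary split, and transporting the scalar inequality through $T_{\psi}\pi$ by Proposition \ref{FS} — are routine, the only nuisance being that the normalizations of $\Omega,G,\omega,g$ (factors of $\hbar$ and $2$) are not uniform across the excerpt and must be fixed once and carried through consistently.
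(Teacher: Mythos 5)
Your proof is correct and is essentially the paper's own argument in a slightly cleaner packaging: your deviation vectors $\ket{f}=(\hat{A}-A)\ket{\psi}$, $\ket{g}=(\hat{B}-B)\ket{\psi}$ are exactly the paper's $\hat{A}_{\perp}\ket{\psi}$, $\hat{B}_{\perp}\ket{\psi}$, your Cauchy--Schwarz step is the paper's Schwartz inequality, and your real/imaginary split of $\langle f\ket{g}$ is precisely the paper's anticommutator/commutator decomposition of $\hat{A}_{\perp}\hat{B}_{\perp}$, with the descent to $\mathcal{P}(\mathcal{H})$ handled in both cases by projecting onto $(\mathbb{C}\psi)^{\perp}$ and invoking the induced metric and symplectic form. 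No gap to report; only the bookkeeping of the $\hbar$ and factor-of-$2$ normalizations, which you already flag, needs to be made uniform.
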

\begin{proof}
Let
$
 \hat{C}_{\perp} =\hat{C}-\mathbf{1}C=\hat{C}-\mathbf{1}\bra{\psi}\hat{C}\ket{\psi},
$
where $C=\bra{\psi}\hat{C}\ket{\psi}$ is the expectation value of the observable $\hat{C}$ and $\mathbf{1}$ is an identity operator. Then it is easy to show that
$
 (\Delta\hat{C})^{2}_{\psi} =
 \bra{\Psi} \hat{C}^{2}_{\perp}\ket{\psi}
$.
Now for two quantum observables $\hat{A}$ and $\hat{B}$ we have
\begin{eqnarray}
 (\Delta\hat{A})^{2}_{\psi}(\Delta\hat{B})^{2}_{\psi} &=&
 \bra{\psi} \hat{A}^{2}_{\perp}\ket{\psi}\bra{\psi} \hat{B}^{2}_{\perp}\ket{\psi}.
\end{eqnarray}
And by Schwartz inequality we get
\begin{eqnarray}\label{anct1}
 \bra{\psi} \hat{A}^{2}_{\perp}\ket{\psi}\bra{\psi} \hat{B}^{2}_{\perp}\ket{\psi} &\geq&
 \bra{\psi} \hat{A}_{\perp}\hat{B}_{\perp}\ket{\psi}^{2}.
\end{eqnarray}
But we can also rewrite
\begin{eqnarray}\label{ancom1}
 \hat{A}_{\perp}\hat{B}_{\perp}=
 \frac{1}{2}[ \hat{A}_{\perp},\hat{B}_{\perp}]+\frac{1}{2}[ \hat{A}_{\perp},\hat{B}_{\perp}]_{+},
\end{eqnarray}
where $[ \hat{A}_{\perp},\hat{B}_{\perp}]_{+}= \hat{A}_{\perp}\hat{B}_{\perp} + \hat{B}_{\perp}\hat{A}_{\perp}$. Now by inserting equation (\ref{ancom1}) in equation (\ref{anct1}) we get
\begin{eqnarray}\label{in1}
 (\Delta\hat{A})^{2}_{\psi}(\Delta\hat{B})^{2}_{\psi} &\geq&\frac{1}{4}\left(
| \bra{\psi} [ \hat{A}_{\perp},\hat{B}_{\perp}]_{+}\ket{\psi}|^{2}+| \bra{\psi} [ \hat{A}_{\perp},\hat{B}_{\perp}]\ket{\Psi}|^{2}\right).
\end{eqnarray}
Note that $[ \hat{A}_{\perp},\hat{B}_{\perp}]=[ \hat{A},\hat{B}]$,
\begin{equation}
| \bra{\psi} [ \hat{A}_{\perp},\hat{B}_{\perp}]_{+}\ket{\psi}|^{2}=\bra{\psi} [ \hat{A}_{\perp},\hat{B}_{\perp}]\ket{\psi}^{2},
\end{equation}
and
$
| \bra{\psi} [ \hat{A},\hat{B}]\ket{\psi}|^{2}=-\bra{\psi} [ \hat{A},\hat{B}]\ket{\psi}^{2}
$,
which enable us to rewrite equation (\ref{anct1})  in the following form
\begin{eqnarray}
 (\Delta\hat{A})^{2}_{\psi}(\Delta\hat{B})^{2}_{\psi} &\geq&\frac{1}{4}\left(
 \bra{\psi} [ \hat{A}_{\perp},\hat{B}_{\perp}]_{+}\ket{\psi}^{2}- \bra{\psi} [ \hat{A},\hat{B}]\ket{\psi}^{2}\right).
\end{eqnarray}
Next, we will expand  $ [\hat{A}_{\perp},\hat{B}_{\perp}]_{+} $ as
\begin{eqnarray}
 [\hat{A}_{\perp},\hat{B}_{\perp}]_{+} &=&[\hat{A}-\mathbf{1}A,\hat{B}-\mathbf{1}B]_{+}
 \\\nonumber&=&
   \hat{A}\hat{B}+\hat{B}\hat{A}- \hat{A}B-A\hat{B}
 -\hat{B}A-B\hat{A}+2\mathbf{1}BA
 \\\nonumber&=&
  [\hat{A},\hat{B}]_{+}+2AB-2A\hat{B}-2B\hat{A}.
\end{eqnarray}
 Thus the  expectation value of $[\hat{A}_{\perp},\hat{B}_{\perp}]_{+}$ can be written  as
\begin{eqnarray}\nonumber
 \bra{\psi}[\hat{A}_{\perp},\hat{B}_{\perp}]_{+}\ket{\psi} &=&
  \bra{\psi}[\hat{A},\hat{B}]_{+}\ket{\psi}+2AB\langle \psi\ket{\psi}-2A\bra{\psi}\hat{B}
  \ket{\psi}-2B\bra{\psi}\hat{A}\ket{\psi}
   \\&=& \bra{\psi}[\hat{A},\hat{B}]_{+}\ket{\psi}-2AB.
\end{eqnarray}
Now, we want to rewrite uncertainty relation given by equation (\ref{anct1})  in terms of geometrical data we have, namely the symplectic structure
\begin{eqnarray}
\Omega_{\psi}(X_{A},X_{B})&=&\frac{1}{2i}\left(\bra{\psi}X_{A}X_{B}\ket{\psi}
-\bra{\psi}X_{B}X_{A}\ket{\psi}\right)
\\\nonumber&=&
\frac{1}{2i}\left(\bra{\Psi}\hat{A}\hat{B}-\hat{B}\hat{A}\ket{\Psi}\right)
\\\nonumber&=&
\frac{1}{2i}\bra{\psi}[\hat{A},\hat{B}]\ket{\psi},
\end{eqnarray}
where we have used $X_{A}=\hbar X_{\hat{A}}=-i\hat{A}\psi$ and similarly for $X_{B}$,  which implies that $\bra{\psi}[\hat{A},\hat{B}]\ket{\psi}=2i \Omega_{\psi}(X_{A},X_{B})$ and the Riemannian metric
\begin{eqnarray}
G_{\psi}(X_{A},X_{B})&=&\frac{1}{2}\left(\bra{\psi}X_{A}X_{B}\ket{\psi}
+\bra{\psi}X_{B}X_{A}\ket{\psi}\right)
\\\nonumber&=&
\frac{1}{2}\left(\bra{\psi}\hat{A}\hat{B}+\hat{B}\hat{A}\ket{\psi}\right)
\\\nonumber&=&
\frac{1}{2}\bra{\psi}[\hat{A},\hat{B}]_{+}\ket{\psi}
\end{eqnarray}
which also gives  $\bra{\psi}[\hat{A},\hat{B}]_{+}\ket{\psi}=2 G_{\psi}(X_{A},X_{B})$. Thus we have
\begin{eqnarray}
  \bra{\psi}[\hat{A}_{\perp},\hat{B}_{\perp}]_{+}\ket{\psi}&=&
  \bra{\psi}[\hat{A},\hat{B}]_{+}\ket{\psi}-2A B \\\nonumber&=&
  2 (G_{\psi}(X_{A},X_{B})-AB) .
\end{eqnarray}
The inequality (\ref{in1}) can now be written in terms of the Riemannian metric and the symplectic form on the Hilbert space as follows
\begin{eqnarray}
 \nonumber
 (\Delta\hat{A})^{2}(\Delta\hat{B})^{2} &\geq&\Omega(X_{A},X_{B})
^{2}+\left(G(X_{A},X_{B})-AB
 \right)^{2}.
\end{eqnarray}
This proof the first part of theorem.
Next we want to prove (\ref{inps}) which gives  an uncertainty relation on the quantum phase space $\mathcal{P}(\mathcal{H})$.  If $\xi,\eta\in T_{p}\mathcal{P}(\mathcal{H})$, then
\begin{equation}
g_{p}(\xi,\eta)=G_{\psi}(X^{\perp},Y^{\perp}),~~~~\omega_{p}(\xi,\eta)
=\Omega_{\psi}(X^{\perp},Y^{\perp}),
\end{equation}
and $\psi\in \mathcal{H}$ is projected to $p\in T_{p}\mathcal{P}(\mathcal{H})$.
If $X$ and $Y$ are arbitrary vectors in $\mathcal{H}$, then
\begin{equation}
X^{\perp}=X-\frac{\langle\psi\ket{X}}{\langle\psi\ket{\psi}}\psi,~~~~
~Y^{\perp}=Y-\frac{\langle\psi\ket{Y}}{\langle\psi\ket{\psi}}\psi.
\end{equation}
Thus we have $g_{p}(X_{a},X_{b})=G_{\psi}(X^{\perp}_{A},Y^{\perp}_{B})$ and $\omega_{p}(X_{a},X_{b})
=\Omega_{\psi}(X^{\perp}_{A},Y^{\perp}_{B})$
which gives
\begin{equation}
g_{p}(X_{a},X_{b})=G_{\psi}(X^{\perp}_{A},Y^{\perp}_{B})=G_{\psi}(X_{A},Y_{B})-AB(\psi)
\end{equation}
and
\begin{equation}\omega_{p}(X_{a},X_{b})=\Omega_{\psi}(X^{\perp}_{A},Y^{\perp}_{B})
=\Omega_{\psi}(X_{A},Y_{B}).
\end{equation}
Now the uncertainty relation on $\mathcal{P}(\mathcal{H})$ can be written as
\begin{eqnarray}
 (\Delta a)^{2}(\Delta b)^{2} &\geq&\omega(X_{a},X_{b})^{2}
+g(X_{a},X_{b})^{2}=
\{a,b\}^{2}_{\omega}+(a,b)^{2}_{g}
.
\end{eqnarray}
This end the proof of our theorem on geometric uncertainty relation for pure quantum states.
\end{proof}
 Note that in a special case we have
\begin{equation} (\Delta a)^{2}=g(X_{a},X_{a})
\end{equation}
which gives rise to a geometrical interpretation of quantum uncertainty relation. Let $X_{H}$ be a Hamiltonian vector field. Then the  uncertainty of the energy of a quantum system
\begin{equation} (\Delta h)^{2}=g(X_{h},X_{h})
\end{equation}
is equal to the length of $X_{h}$. This establishes a direct relation between measurable quantity of a physical quantum system and geometry  of underling phase space. In particular  the energy uncertainty   measures the speed  at which the quantum system  travels through quantum phase space. For applications of this result to quantum speed limit see references \cite{Bekenstein1981,Hegerfeldt2013,QSL}.
This end our detail discussion of geometric uncertainly relation for pure quantum states based on symplectic and Riemannian structures of the Hilbert space and the quantum phase space $\mathcal{P}(\mathcal{H})$. In the following sections, we will  also derive a geometric uncertainly relation and quantum energy dispersion relation  for mixed quantum states.

\subsection{Quantum measurement}
Since the quantum phase space of a pure state is a K\"{a}hler manifold  which is equipped with the Fubini-Study metric, it enables one to  measure distance between two quantum states. In this section we will give a short introduction to quantum  distance measure and its application to quantum measurement process.
\\ Consider the principal fiber bundle $\xymatrix{
U(1) \ar@{^{(}->}[r] & \mathcal{S}(\mathcal{H})\ar[r]^{\pi} &\mathcal{P}(\mathcal{H})}. $ Let $\ket{\psi}\in \mathcal{S}(\mathcal{H})$ and  $C(t)$ be a curve in $\mathcal{P}(\mathcal{H})$ defined by $c(t)\in \mathcal{P}(\mathcal{H})$ for all $t\in[0,1]$. Moreover, let $\tilde{C}$ be its lift defined by $\psi(t)\in \mathcal{P}(\mathcal{H})$  for all $t\in[0,1]$. Then the length of the lift is determined by
\begin{equation}
\mathrm{Length}(\tilde{C})=\int^{t=1}_{t=0}\sqrt{\langle\dot{\psi}\ket{\dot{\psi}}}dt.
\end{equation}
We could wonder  which lift of the curve has minimal length or  is a minimal lift. This question is answered in the following theorem.
 \begin{thm}
A lift is minimal if and only if it is a horizontal lift. The length of the curve $C$ measured by Fubini-Study metric is equal to the length of the horizontal lift of the curve.
\end{thm}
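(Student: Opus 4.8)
The plan is to reduce the two assertions to a single pointwise identity along the fibre, using that any two lifts of the same curve $C$ over the $U(1)$-bundle $\pi\colon\mathcal{S}(\mathcal{H})\to\mathcal{P}(\mathcal{H})$ differ by a time-dependent phase, and that the Hermitian metric $\langle\cdot\ket{\cdot}$ on $\mathcal{S}(\mathcal{H})$ splits orthogonally into vertical and horizontal parts. First I would fix one lift $\tilde{C}$ given by $\psi(t)\in\mathcal{S}(\mathcal{H})$; every other lift of $C$ is then $\psi'(t)=e^{i\varphi(t)}\psi(t)$ for a smooth real function $\varphi$. From $\langle\psi(t)\ket{\psi(t)}=1$ one gets $\mathrm{Re}\langle\psi\ket{\dot\psi}=0$, so I write $\langle\psi\ket{\dot\psi}=i\beta(t)$ with $\beta$ real; here $i\psi$ spans the vertical direction at $\psi$ (it is the infinitesimal generator of the $U(1)$-action, hence lies in $\ker T_{\psi}\pi$), so the orthogonal decomposition is $\dot\psi=\mathrm{hor}(\dot\psi)+\beta\, i\psi$ and $\langle\mathrm{vert}(\dot\psi)\ket{\mathrm{vert}(\dot\psi)}=\beta^{2}$.

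Differentiating $\psi'=e^{i\varphi}\psi$, noting that the overall phase cancels in the Hermitian square, and using $\langle\psi\ket{\dot\psi}=i\beta$, a short computation yields
\begin{equation}
\langle\dot\psi'\ket{\dot\psi'}=\langle\dot\psi\ket{\dot\psi}+2\beta\dot\varphi+\dot\varphi^{2}=\big(\langle\dot\psi\ket{\dot\psi}-\beta^{2}\big)+(\beta+\dot\varphi)^{2},
\end{equation}
together with $\langle\psi'\ket{\dot\psi'}=i(\beta+\dot\varphi)$. First, $\langle\dot\psi\ket{\dot\psi}-\beta^{2}=\langle\mathrm{hor}(\dot\psi)\ket{\mathrm{hor}(\dot\psi)}$ is independent of the chosen lift (it equals $\langle\dot\psi'\ket{\dot\psi'}-(\beta')^{2}$ with $\beta'=\beta+\dot\varphi$). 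Second, pointwise in $t$ we get $\langle\dot\psi'\ket{\dot\psi'}\geq\langle\mathrm{hor}(\dot\psi)\ket{\mathrm{hor}(\dot\psi)}$, with equality at $t$ exactly when $\beta(t)+\dot\varphi(t)=0$, i.e. when $\langle\psi'\ket{\dot\psi'}=0$, which is precisely the condition that $\tilde{C}'$ be horizontal at $t$. Integrating square roots over $[0,1]$ and using that $\int_{0}^{1}\sqrt{f}\,dt\geq\int_{0}^{1}\sqrt{g}\,dt$ whenever $f\geq g\geq0$, with equality iff $f\equiv g$, gives $\mathrm{Length}(\tilde{C}')\geq\int_{0}^{1}\sqrt{\langle\mathrm{hor}(\dot\psi)\ket{\mathrm{hor}(\dot\psi)}}\,dt$, with equality iff $\beta+\dot\varphi\equiv0$, i.e. iff $\tilde{C}'$ is horizontal; and a horizontal lift always exists, obtained from any lift by solving the linear ODE $\dot\varphi=-\beta$, $\varphi(0)=0$. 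This proves the first assertion and identifies the common minimal value as $\int_{0}^{1}\sqrt{\langle\mathrm{hor}(\dot\psi)\ket{\mathrm{hor}(\dot\psi)}}\,dt$.

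For the second assertion I would identify this minimal value with the Fubini--Study length of $C$. Since $i\psi\in\ker T_{\psi}\pi$ we have $T_{\psi}\pi(\dot\psi)=T_{\psi}\pi(\mathrm{hor}(\dot\psi))=\dot C(t)$, and by Proposition~\ref{FS} the map $T_{\psi}\pi$ multiplies the Hermitian inner product by $2\hbar$ on $(\mathbb{C}\psi)^{\perp}$; hence $\langle\mathrm{hor}(\dot\psi)\ket{\mathrm{hor}(\dot\psi)}$ equals $g_{C(t)}(\dot C,\dot C)$ up to that fixed normalization, so $\int_{0}^{1}\sqrt{\langle\mathrm{hor}(\dot\psi)\ket{\mathrm{hor}(\dot\psi)}}\,dt$ is (up to the same overall constant) the length of $C$ measured by the Fubini--Study metric~(\ref{HPg}). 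The genuinely delicate point is precisely this identification — pinning down that $\langle\mathrm{hor}(\dot\psi)\ket{\mathrm{hor}(\dot\psi)}$ is the Fubini--Study line element and that it does not depend on the representative $\psi\in[\psi]$; I would make it fully explicit from formula~(\ref{HP1}) specialized to $\phi=\mathrm{hor}(\dot\psi)\in(\mathbb{C}\psi)^{\perp}$ with $\|\ket{\psi}\|=1$. Everything else is the Pythagorean splitting $\langle\dot\psi\ket{\dot\psi}=\langle\mathrm{hor}(\dot\psi)\ket{\mathrm{hor}(\dot\psi)}+\langle\mathrm{vert}(\dot\psi)\ket{\mathrm{vert}(\dot\psi)}$ together with solving a first-order linear ODE for the phase.
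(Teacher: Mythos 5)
Your proposal is correct and follows essentially the same route as the paper: the core of both arguments is the pointwise Pythagorean identity $\langle\dot{\psi}\ket{\dot{\psi}}=\langle\mathrm{hor}(\dot{\psi})\ket{\mathrm{hor}(\dot{\psi})}+|\langle\psi\ket{\dot{\psi}}|^{2}$, the identification of the horizontal term with the pullback of the Fubini--Study metric via (\ref{HP1}), and the observation that equality holds iff $\langle\psi\ket{\dot{\psi}}=0$. You merely make explicit two points the paper leaves implicit — the lift-independence of the horizontal norm via the phase parametrization $\psi'=e^{i\varphi}\psi$, and the existence of the horizontal lift by solving $\dot{\varphi}=-\beta$ — which is a welcome sharpening rather than a different method.
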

\begin{proof}
The proof follows from the following inequality
\begin{eqnarray}
\nonumber
\langle \dot{\psi}\ket{\dot{\psi}}\geq \langle \dot{\psi}\ket{\dot{\psi}}-|\langle \dot{\psi}\ket{\psi}|^{2}
\end{eqnarray}
which implies that $\langle \dot{\psi}\ket{\dot{\psi}}\geq(\pi^{*}g)_{\psi}
(\dot{\psi},\dot{\psi}),$
where $g_{\psi}(\dot{\psi},\dot{\psi})$ is the Fubini-Study metric.
So the lift is minimal if and only if $\langle \psi\ket{\dot{\psi}}=0$. Moreover, if $\ket{\psi(t)}$ is a horizontal lift then we have
$\langle \dot{\psi}\ket{\dot{\psi}}=(\pi^{*}g)_{\psi}(\dot{\psi},\dot{\psi})$.
\end{proof}
Let $p_{1},p_{2}\in\mathcal{P}(\mathcal{H})$ and $\gamma$ be the shortest geodesic joining them. Moreover, let $\kappa(p_{1},p_{2})$ be equal to the length of $\gamma$. Then we have the following theorem \cite{Chruscinski_etal2004}:
\begin{thm}
 Let $\ket{\psi_{i}}$ be arbitrary elements from the fibers
  $\pi^{-1}(p_{i})$ for all $i=1,2$. Then the length of $\gamma$ is given by
\begin{eqnarray}
\nonumber
|\langle \psi_{1}\ket{\psi_{2}}|=\cos[\kappa(p_{1},p_{2})].
\end{eqnarray}
\end{thm}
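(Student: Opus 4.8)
The plan is to prove the two inequalities $\kappa(p_1,p_2)\le\arccos|\braket{\psi_1}{\psi_2}|$ and $\kappa(p_1,p_2)\ge\arccos|\braket{\psi_1}{\psi_2}|$ and then read off the identity. First one checks that the statement is well posed: replacing $\ket{\psi_i}$ by $e^{i\varphi_i}\ket{\psi_i}$ only multiplies $\braket{\psi_1}{\psi_2}$ by a phase, so $|\braket{\psi_1}{\psi_2}|$ depends on $p_1,p_2$ alone, and if $p_1=p_2$ both sides equal $1=\cos 0$; hence assume $p_1\neq p_2$, so $\ket{\psi_1}$ and $\ket{\psi_2}$ are linearly independent. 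Using the freedom in the representatives, write $\ket{\psi_2}=c\,\ket{\psi_1}+r\,\ket{e}$ with $\ket{e}\perp\ket{\psi_1}$, $\|\ket{e}\|=1$, $c=|\braket{\psi_1}{\psi_2}|\in[0,1)$ real (absorb the phase of $\braket{\psi_1}{\psi_2}$ into $\ket{\psi_2}$) and $r=\sqrt{1-c^2}>0$.

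For the upper bound I would exhibit the explicit great-circle curve $\ket{\psi(s)}=\cos s\,\ket{\psi_1}+\sin s\,\ket{e}$ for $s\in[0,\kappa_0]$, where $\kappa_0:=\arccos c$. One verifies $\braket{\psi(s)}{\dot\psi(s)}=0$, so this is a horizontal curve in $\mathcal{S}(\mathcal{H})$; it runs from $\ket{\psi_1}$ to $\ket{\psi(\kappa_0)}=\ket{\psi_2}$, so its projection $C$ is a curve in $\mathcal{P}(\mathcal{H})$ joining $p_1$ and $p_2$. Since $\braket{\dot\psi(s)}{\dot\psi(s)}\equiv 1$, the lift has length $\kappa_0$, and by the theorem characterising minimal lifts (a lift is minimal iff horizontal, and the Fubini--Study length of $C$ equals the length of its horizontal lift) the Fubini--Study length of $C$ is exactly $\kappa_0$. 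Hence $\kappa(p_1,p_2)\le\kappa_0$.

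For the lower bound, let $c(t)$, $t\in[0,1]$, be an arbitrary curve in $\mathcal{P}(\mathcal{H})$ from $p_1$ to $p_2$ and $\ket{\chi(t)}$ its horizontal lift with $\ket{\chi(0)}=\ket{\psi_1}$; then $\ket{\chi(1)}$ lies in the fibre over $p_2$, so $|\braket{\psi_1}{\chi(1)}|=c$. Writing $\ket{\chi(t)}=a(t)\ket{\psi_1}+\ket{\chi_\perp(t)}$ with $\ket{\chi_\perp(t)}\perp\ket{\psi_1}$, and using $\braket{\psi_1}{\dot\chi_\perp}=0$ together with $|a|^2+\|\ket{\chi_\perp}\|^2=1$, I would derive the pointwise speed estimate $\braket{\dot\chi}{\dot\chi}\ge\big(\tfrac{d}{dt}\arccos|a(t)|\big)^2$. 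Integrating and again invoking the minimal-lift theorem (the lift being horizontal), the length of $c$ is at least $\big|\arccos|a(1)|-\arccos|a(0)|\big|=\arccos c=\kappa_0$. Taking the infimum over joining curves gives $\kappa(p_1,p_2)\ge\kappa_0$, so $\kappa(p_1,p_2)=\kappa_0=\arccos|\braket{\psi_1}{\psi_2}|$, i.e. $|\braket{\psi_1}{\psi_2}|=\cos[\kappa(p_1,p_2)]$.

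The main obstacle is the pointwise estimate in the lower bound: one must split $\ket{\chi(t)}$ into its component $a(t)\ket{\psi_1}$ and its orthogonal part $\ket{\chi_\perp(t)}$, bound $|\dot a|^2$ below by $(\tfrac{d}{dt}|a|)^2$ and $\|\ket{\dot\chi_\perp}\|^2$ below by $\big(\tfrac{d}{dt}\|\ket{\chi_\perp}\|\big)^2=\big(\tfrac{d}{dt}\sqrt{1-|a|^2}\big)^2$, and then combine the two terms so that they add up \emph{exactly} to $(\tfrac{d}{dt}\arccos|a|)^2$ — any slack collapses the bound to the weaker $1-c$ rather than $\arccos c$. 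A secondary point is bookkeeping of normalisations: the factor $2\hbar$ in (\ref{HPg}) versus the plain $\sqrt{\braket{\dot\psi}{\dot\psi}}$ used in the preceding length formula must be handled with the same convention as the minimal-lift theorem so that the constant in $\cos[\kappa]$ comes out equal to $1$.
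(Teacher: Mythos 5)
Your argument is correct, and it is worth noting that the paper itself offers no proof at all here: the theorem is simply quoted from the Chru\'sci\'nski--Jamio\l{}kowski monograph, so your write-up supplies a complete argument where the text has none. Your two-sided estimate is the standard and right way to do it. The upper bound via the explicit horizontal great circle $\cos s\,\ket{\psi_1}+\sin s\,\ket{e}$ is clean, and the key pointwise inequality in the lower bound does close up exactly: writing $|a|=\cos\theta$ one gets $|\dot a|^2\ge(\tfrac{d}{dt}|a|)^2=\sin^2\theta\,\dot\theta^2$ and $\|\ket{\dot\chi_\perp}\|^2\ge(\tfrac{d}{dt}\|\ket{\chi_\perp}\|)^2=\cos^2\theta\,\dot\theta^2$, which sum to $\dot\theta^2=(\tfrac{d}{dt}\arccos|a|)^2$ with no slack, exactly as you anticipated. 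Two small points to tighten: (i) the bounds $|\dot a|\ge|\tfrac{d}{dt}|a||$ and $\|\ket{\dot\chi_\perp}\|\ge|\tfrac{d}{dt}\|\ket{\chi_\perp}\||$ may fail to be literal derivative statements at isolated zeros of $a$ or $\ket{\chi_\perp}$, but both functions are Lipschitz, so the inequalities hold almost everywhere and the integration step survives; (ii) your normalisation choice --- taking the Fubini--Study length of a curve to be the $\int\sqrt{\braket{\dot\psi}{\dot\psi}}\,dt$ of its horizontal lift, as in the paper's preceding minimal-lift theorem --- is the one that makes the constant in $\cos[\kappa(p_1,p_2)]$ equal to $1$, and is indeed inconsistent with the $2\hbar$ in the displayed metric (\ref{HPg}) and with the $\sqrt{2\hbar}$ appearing later in the probability formula; that inconsistency is the paper's, not yours, and flagging it is appropriate.
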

Note that if we choose $\ket{\psi_{i}}$ such that $|\langle \psi_{1}\ket{\psi_{2}}|$ is real and positive, then we will have
$\langle \psi_{1}\ket{\psi_{2}}=\cos[\kappa(p_{1},p_{2})].$
Moreover,  every geodesic on the quantum phase space $\mathcal{P}(\mathcal{H})$ is closed and its length  is equals $\pi$  which  is also half of the length of a closed geodesic in $\mathcal{S}(\mathcal{H})$. Furthermore, if $\langle \psi_{1}\ket{\psi_{2}}=0$ then the length of the shortest geodesic in $\mathcal{P}(\mathcal{H})$ will be $\pi/2$. In this case there are finitely many planes spanned by $ \ket{\psi_{1}}$ and $\ket{\psi_{2}}$. Thus there are finitely many geodesics connecting the points $p_{1}$ and $p_{2}$ in $\mathcal{P}(\mathcal{H})$. These points are usually called the conjugated points. This mean that if two vectors are orthogonal in $\mathcal{S}(\mathcal{H})$,  then they give rise to two conjugated points in $\mathcal{P}(\mathcal{H})$.
\\
Let $p_{1},p_{2}\in\mathcal{P}(\mathcal{H})$. Then the Fubini-Study length between these two points equals the length of the shortest geodesic  $\kappa(p_{1},p_{2})$.
However, if  we want to measure the distance on the Hilbert space $\mathcal{H}$, then the distance measure is called the Fubini-Study distance and is defined by
\begin{equation}
D_{FS}(\ket{\psi_{1}},\ket{\psi_{2}})=\inf_{\varphi}\|\ket{\psi_{1}}-e^{i\varphi}\ket{\psi_{2}}\|
=\sqrt{2(1-|\langle \psi_{1}\ket{\psi_{2}}|)}.
\end{equation}
There are another ways to compute the distance between pure quantum states. The most well-known one is called the trace distance and it is defined by
\begin{equation}
D(\ket{\psi_{1}},\ket{\psi_{2}})=\mathrm{Tr}|\ket{\psi_{1}}\bra{\psi_{1}}-\ket{\psi_{2}}\bra{\psi_{2}}|
=2\sqrt{1-|\langle \psi_{1}\ket{\psi_{2}}|^{2}},
\end{equation}
where $|M|=\sqrt{M^{\dagger}M}$. The second measure of distance that we will consider is called the Hilbert-Schmidt norm and it is defined by
\begin{equation}
D_{HS}(\ket{\psi_{1}},\ket{\psi_{2}})=\sqrt{\mathrm{Tr}(\ket{\psi_{1}}\bra{\psi_{1}}-\ket{\psi_{2}}\bra{\psi_{2}})^{2}}
=\sqrt{2(1-|\langle \psi_{1}\ket{\psi_{2}}|^{2})}.
\end{equation}
We can see that these measures of quantum distance are related to the geodesic length of a curve on $\mathcal{P}(\mathcal{H})$, since $|\langle \psi_{1}\ket{\psi_{2}}|= \arccos[\kappa(p_{1},p_{2})]$.

Next we will discuss the quantum measurement process and relates it to geodesic distance on quantum phase space.
Let $p_{0},p_{1}\in \mathcal{P}(\mathcal{H})$ with corresponding fibers $\ket{\psi_{0}}\in \pi^{-1}(p_{0})$ and $\ket{\psi}\in\pi^{-1}(p)$ defined on $ \mathcal{S}(\mathcal{H})$. Then one can consider a function $\delta_{0}:\mathcal{P}(\mathcal{H})\longrightarrow \mathbb{R}_{+}$  defined by
 \begin{equation}\delta_{0}(p)=|\langle \psi_{0}\ket{\psi}|^{2}
\end{equation}
which is called the quantum probability distribution on $\mathcal{P}(\mathcal{H})$. Moreover, if the projectors $P_{0}=\ket{\psi_{0}}\bra{\psi_{0}}$ and $P=\ket{\psi}\bra{\psi}$ are corresponding to the points $p_{0}$ and $p$, then we also have
\begin{equation}
\delta_{0}(p)=\mathrm{Tr}(P_{0}P).
\end{equation}
Now, if $\kappa(p_{0},p)$ is the length of the minimal geodesic distance separating $p_{0},p \in\mathcal{P}(\mathcal{H})$, then the quantum mechanical probability distribution on quantum phase space satisfies \begin{equation}
\delta_{0}(p)=\cos^{2}[\frac{\kappa(p_{0},p)}{\sqrt{2\hbar}}].
\end{equation}
Consider again the principal fiber bundle $\xymatrix{
U(1) \ar@{^{(}->}[r] & \mathcal{S}(\mathcal{H})\ar[r]^{\pi} &\mathcal{P}(\mathcal{H})}$.
 Let $\hat{F}$ be a self-adjoint/hermitian operator on the Hilbert space which has discrete  non-degenerated spectrum, that is $F\ket{\psi_{k}}=f_{k}\ket{\psi_{k}}$, where $\ket{\psi_{k}}\in \mathcal{S}\mathcal{H}$ and $p_{k}\in  \mathcal{P}\mathcal{H}$  be corresponding eigenstates with $p_{k}=\pi (\ket{\psi_{k}})$.
If we define $P_{k}=\ket{\psi_{k}}\bra {\psi_{k}}$ as the one-dimensional projection, then the  observable $\hat{F}$ has  the following spectral decomposition $\hat{F}=\sum_{k}f_{k}P_{k}$. Thus if we measure the quantum system, then any state $P_{0}$ in quantum phase space will collapse to one of $P_{k}$ with following probability $\delta_{0}(P_{k})=\mathrm{Tr}(P_{0}P_{k})$. We  can argue that in the process of a quantum measurement  of an observable $\hat{F}$, the probability of obtaining an eigenvalue  $f_{k}$ is a monotonically decreasing function of $P_{0}$ and $ P_{k}$. Now, let $\hat{F}$ be a self-adjoint/hermitian operator on the Hilbert space and $f:\mathcal{P}(\mathcal{H})\longrightarrow \mathbb{R}$ be a function defined by  $f(P)=\mathrm{Tr}(P\hat{F})$ with $f(P_{k})=f_{k}$. Then $X_{\hat{F}}(\psi)=-\frac{i}{\hbar}\hat{F} \ket{\psi}$ is a Hamiltonian vector field on the Hilbert space with $X_{\hat{F}}(\psi_{k})=-\frac{i}{\hbar}\hat{F} \ket{\psi_{k}}$. Thus $X_{f}$ will vanishes at all eigenstates  $P_{k}$.  The conclusion is that the eigenstates $p_{k}$ are the critical points of the observable function $f$.
Thus the observable function $f$ can be determined based on geometric structure of quantum phase space.
\begin{prop}\label{qobs}
 $f$ is a quantum observable if and only if corresponding Hamiltonian vector $X_{f}$ is Killing vector field of the K\"{a}hler metric $g$, that is $\mathcal{L}_{X_{f}}g=0$.
\end{prop}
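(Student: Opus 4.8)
The plan is to reduce the Killing condition to an \emph{automorphism-of-K\"{a}hler-structure} condition and then invoke the classification of such automorphisms on $\mathcal{P}(\mathcal{H})$. First I would record that a Hamiltonian vector field automatically preserves the symplectic form: since $\omega$ is closed, $\mathcal{L}_{X_f}\omega = \imath_{X_f}d\omega + d(\imath_{X_f}\omega) = 0 + d(df) = 0$. On the K\"{a}hler manifold $\mathcal{P}(\mathcal{H})$ the three structures satisfy $g(\cdot,\cdot) = \omega(\cdot, J\cdot)$, so taking the Lie derivative of this identity along $X_f$ and using $\mathcal{L}_{X_f}\omega = 0$ together with the non-degeneracy of $\omega$ shows that $\mathcal{L}_{X_f}g = 0$ is equivalent to $\mathcal{L}_{X_f}J = 0$. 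Thus the statement becomes: $f$ is a quantum observable if and only if the flow $\phi_t$ of $X_f$ consists of holomorphic isometries of $(\mathcal{P}(\mathcal{H}), g, J)$.

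For the direction ``quantum observable $\Rightarrow$ Killing'', let $f(P) = \mathrm{Tr}(P\hat F)$ with $\hat F = \hat F^{\dagger}$. By the von Neumann equation discussed above, the flow of $X_f$ on $\mathcal{P}(\mathcal{H})$ is $P \mapsto U_t P U_t^{\dagger}$ with $U_t = e^{-i\hat F t/\hbar}$, which is the projection under $\pi$ of the one-parameter unitary group $\psi \mapsto U_t\psi$ on $\mathcal{H}$. A unitary preserves $\langle\cdot|\cdot\rangle$ and hence both $G$ and $\Omega$ on $\mathcal{H}$, and by Proposition \ref{FS} the metric $g$ and the form $\omega$ on $\mathcal{P}(\mathcal{H})$ are invariant under every induced transformation $[U]$. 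Hence $\phi_t^{*}g = g$ for all $t$, i.e. $\mathcal{L}_{X_f}g = 0$, so $X_f$ is Killing.

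For the converse ``Killing $\Rightarrow$ quantum observable'', I start from $\mathcal{L}_{X_f}g = 0$, deduce $\mathcal{L}_{X_f}J = 0$ as above, and conclude that $\phi_t$ is a one-parameter group of holomorphic isometries of $\mathcal{P}(\mathcal{H}) \cong \mathbb{CP}^{n}$ with the Fubini-Study metric. Using that the connected group of holomorphic isometries of the Fubini-Study metric is $PU(n+1) = U(n+1)/U(1)$, this one-parameter group lifts to $[e^{tA}]$ for some $A$ in $\mathfrak{u}(n+1)$; writing $A = -\tfrac{i}{\hbar}\hat F$ with $\hat F$ self-adjoint, the vector field $X_f$ then coincides with the Hamiltonian vector field of $P \mapsto \mathrm{Tr}(P\hat F)$. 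Since $\mathcal{P}(\mathcal{H})$ is connected, two functions with the same Hamiltonian vector field differ by an additive constant, so $f(P) = \mathrm{Tr}(P\hat F) + c = \mathrm{Tr}\big(P(\hat F + cI)\big)$, which is again a quantum observable.

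The step I expect to be the main obstacle is this converse direction, specifically the passage from ``holomorphic isometric flow'' to ``generated by a self-adjoint operator''. In finite dimensions this is the classical identification of the isometry group of $\mathbb{CP}^{n}$ with $PU(n+1)$, but one still has to resolve the $U(1)$ phase ambiguity when lifting $\phi_t$ from $\mathcal{P}(\mathcal{H})$ to $\mathcal{S}(\mathcal{H})$ — one must check that the lifted flow can be chosen to be \emph{linear and unitary}, which is what forces its generator to have the form $-\tfrac{i}{\hbar}\hat F\psi$ with $\hat F = \hat F^{\dagger}$. In the infinite-dimensional case one additionally has to be careful about domains of unbounded operators and invoke Stone's theorem when going back and forth between the flow and its generator.
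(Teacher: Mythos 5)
The paper does not actually prove Proposition \ref{qobs}: it is stated without argument and the reader is referred to \cite{Ashtekar_etal1998}, so there is no in-paper proof to compare yours against. Judged on its own, your argument is correct and is essentially the standard Ashtekar--Schilling proof. The forward direction is clean: $f(P)=\mathrm{Tr}(P\hat F)$ generates the flow $P\mapsto U_tPU_t^{\dagger}$ with $U_t=e^{-i\hat Ft/\hbar}$, and Proposition \ref{FS} already records that $g$ is invariant under every $[U]$, so $\mathcal{L}_{X_f}g=0$. The converse is also sound: $\mathcal{L}_{X_f}\omega=0$ holds automatically for a Hamiltonian field, so with $\mathcal{L}_{X_f}g=0$ and nondegeneracy of $\omega$ you get $\mathcal{L}_{X_f}J=0$, and the flow sits in the identity component of the isometry group of $(\mathbb{CP}^{n},g_{FS})$, namely $PU(n+1)$; in fact, since a one-parameter group through the identity automatically lies in the identity component, the detour through $\mathcal{L}_{X_f}J=0$ is not even strictly needed in finite dimensions, though it is the cleaner and more general route. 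The lifting step you flag as the main obstacle is genuinely the only delicate point, but it is unproblematic here: $U(n+1)\to PU(n+1)$ is a surjective homomorphism with central kernel, so the one-parameter subgroup lifts at the Lie-algebra level to some anti-Hermitian $A=-\tfrac{i}{\hbar}\hat F$, unique up to adding $i\lambda\mathbb{1}$, and that ambiguity is exactly absorbed by your additive constant $c=\mathrm{Tr}(P\cdot c\mathbb{1})$. The final step, that two Hamiltonians with the same Hamiltonian vector field on the connected manifold $\mathcal{P}(\mathcal{H})$ differ by a constant, is correct. Your caveats about unbounded generators and Stone's theorem in infinite dimensions are appropriate and are precisely the issues the cited reference addresses.
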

For more information on the case when the observables have continues spectra  see   \cite{Ashtekar_etal1998}.
\subsection{Postulates of geometric quantum mechanics}

As we have shown, the true  space of the quantum states is a K\"{a}hler manifold  $\mathcal{P}(\mathcal{H})$, the states are represented
by points of $\mathcal{P}(\mathcal{H})$ equipped with a symplectic form and  a Riemannian metric. Moreover, the observables are represented by certain real-valued
functions on  $\mathcal{P}(\mathcal{H})$ and the Schr\"{o}dinger evolution is captured by
the symplectic flow generated by a Hamiltonian function  on  $\mathcal{P}(\mathcal{H})$.
There is thus a remarkable
similarity with the standard symplectic formulation of classical mechanics.
 Thus we can give a set of postulates of quantum mechanics based on the structures of quantum phase space $\mathcal{P}(\mathcal{H})$.
We will assume that observables are hermitian operators. Moreover, if the Hilbert space is finite-dimensional, then the spectrum of an operator $\hat{A}$ consists of the eigenvalues of $\hat{A}$ and it can be written as
\begin{equation}
\hat{A}=\sum_{i}\alpha_{i}P_{\hat{A},\alpha_{i}},
\end{equation}
where $P_{\hat{A},\alpha_{i}}$ is the projection operator corresponding to the eigenvalue $\alpha_{i}$ of $\hat{A}$. Here is the summary of the geometric postulates of quantum mechanics.
\begin{itemize}
  \item \textbf{Physical state:} In quantum mechanics, the physical states of the systems are in one-one correspondence with the points of quantum phase space $\mathcal{P}(\mathcal{H})$.
  \item \textbf{Quantum evolution:} The quantum unitary evolution is given by the flow on the quantum phase space $\mathcal{P}(\mathcal{H})$. The flow preserves the K\"{a}hler structure and the generator of the flow is densely defined vector  field on $\mathcal{P}(\mathcal{H})$.
  \item \textbf{Observables:} Quantum observables are presented by smooth and real-valued function $f:\mathcal{P}(\mathcal{H})\longrightarrow\mathbb{R}$. The  flow of Hamiltonian vector fields $X_{f}$ corresponding to the function $f$ preserves the K\"{a}hler structure.
      \item \textbf{Probabilistic interpretation:} Let $f$ be an observable and $\Sigma \subset \mathbb{R}$ be a closed subset of spectrum of $f$ defined by $sp(f)=\{\lambda\in\mathbb{R}| n_{\lambda}:\mathcal{P}(\mathcal{H})\longrightarrow \mathbb{R}\cup\infty$, defined by $p\longmapsto[(\Delta f)^{2}(p)+(f(p)-\lambda)^{2}]^{-1}$ is unbounded $\}$. Moreover, suppose that the system is in the states corresponding to $p\in\mathcal{P}(\mathcal{H})$. Then the measurement of $f$ will give an element of $\Sigma$ with the probability
          \begin{equation}
          \delta_{p}(\Sigma)=\cos^{2}
          \left(\frac{1}{\sqrt{2\hbar}}\kappa(p,P_{f,\Sigma}(p))\right),
          \end{equation}
          where  $\kappa(p,P_{f,\Sigma}(p))$ is minimal geodesics distance between $p$ and $P_{f,\Sigma}(p)$. Moreover, $P_{f,\Sigma}(p)$ is  the point closest to $p$ in \begin{equation} E_{f,\Sigma}=\{q\in\mathcal{P}(\mathcal{H}): \{f,\{f,\{f,\ldots\}_{+}\}_{+} \}_{+}|_{q}\in \Sigma^{n}~\forall n\},\end{equation} where $\Sigma^{n}$ denotes the image of $\Sigma$ under the mapping $\Sigma \longmapsto\Sigma^{n}$ and  e.g., $\{f,f\}_{+}$ is the expectation value of observable $\hat{F}^{2}$ corresponding to $f$.
      \item\textbf{Reduction of quantum states:} The discrete spectrum of  an observable $f$ provides the set of possible outcomes of the measurement of $f$. The state of the system after measurement of an observable $f$ with corresponding eigenvalue $\lambda$  is given by the associated projection $P_{f,\lambda}(p)$ of the initial state $p$.
\end{itemize}
In the above geometric postulates of quantum mechanics we didn't refer to the linear structure of the Hilbert space which provides essential mathematical tools in linear and algebraic formulation of quantum mechanics. We could wonder if the geometric formulation of quantum mechanics will some day provides us with a generalization of quantum mechanics. Such a generalization of quantum mechanics may includes: i) the quantum phase space, ii) the algebra of the quantum observables, and iii) the quantum dynamics.
More information on the geometric postulates of quantum mechanics can be found in  \cite{Ashtekar_etal1998}.

\subsection{Geometric phase a fiber bundle approach} \label{GP} Geometric phases have many applications in different fields of quantum physics such as quantum computation and condensed matter physics \cite{Berry1984,Shapere_etal1989,Simon1983}. In this section we give a short introduction to geometric phase based on a fiber bundle approach.
Consider the principal fiber bundle $\xymatrix{
U(1) \ar@{^{(}->}[r] & \mathcal{S}(\mathcal{H})\ar[r]^{\pi} &\mathcal{P}(\mathcal{H})}$. Let $\ket{\psi}\in \mathcal{S}(\mathcal{H})$. Then the tangent space can be decompose as
$T_{\psi}\mathcal{S}(\mathcal{H})=V_{\psi}\mathcal{S}(\mathcal{H})\oplus H_{\psi}\mathcal{S}(\mathcal{H})$. A fiber $\pi^{-1}(\psi)$ can be written as $e^{i\varphi}\ket{\psi}$. Thus we can define the vertical tangent bundle as
\begin{equation}V_{\psi}\mathcal{S}(\mathcal{H})=\{i\varphi \ket{\psi}:\varphi\in \mathbb{R}\}
\end{equation}
which can be identified with the Lie algebra $\mathfrak{u}(1)\cong i \mathbb{R}$. Moreover, we define the horizontal  tangent bundle as
\begin{equation}
H_{\psi}\mathcal{S}(\mathcal{H})=\{X\in \mathcal{H}:\langle \psi\ket{X}=0\}.
\end{equation}
 A curve $t\longmapsto \ket{\psi(t)}\in \mathcal{S}(\mathcal{H})$ is horizontal if $\langle \psi(t)\ket{\dot{\psi}(t)}=0$ for all $t$. Now, we define a connection one-form on $\mathcal{S}(\mathcal{H})$ by $\mathcal{A}_{\psi}=i\mathrm{Im}\langle \psi\ket{X}\in\mathfrak{u}(1)$. Let
 \begin{equation}
 s:\mathcal{P}(\mathcal{H})\longrightarrow\mathcal{S}(\mathcal{H})
\end{equation}
be a local section. Then a local connection form on $\mathcal{P}(\mathcal{H})$ is defined by pull back of $s$, that is $A=i s^{*} \mathcal{A}$. An explicit form for  $A$ is given by $A=i\langle \psi\ket{d\psi}$. Now, for a closed curve $C$ in $\mathcal{P}(\mathcal{H})$ the holonomy is defined by
\begin{equation}
\mathrm{Hol}(C,\psi)=\exp\left(i\oint_{C}A\right)
\end{equation}
which coincides with Aharonov-Anandan \cite{Anandan_etal1990} phase factor.
 We will discuss a generalization of this approach to mixed quantum states in the following section.

\section{Geometric formulation of  mixed quantum states }\label{sec3}
\subsection{Introduction}
Pure quantum states are small subclasses of all quantum states. Mixed quantum states represented by density operators $\rho$ are the most general quantum states in quantum mechanics.  These generalized quantum states are hermitian trace class operators acting on  the Hilbert space with the following properties:
i) $\rho\geq0$ and ii) $\mathrm{Tr} (\rho)=1$. Let
\begin{equation}
\lambda_{1}\geq \lambda_{2}\geq\cdots\geq \lambda_{k}
\end{equation}
be the eigenvalues of $\rho$ with multiplicity $(m_{1}, m_{2},\ldots,m_{k})$.
Then the spectral decomposition  $\varpi$ of $\rho$ can be written as
\begin{equation}
\varpi=\left(
         \begin{array}{cccc}
           \lambda_{1}\mathbf{1}_{m_{1}} & 0 & \cdots &0\\
           0& \lambda_{2}\mathbf{1}_{m_{2}} & 0&0 \\
          \vdots& 0&\ddots & 0 \\
           0&0 & \cdots & \lambda_{k}\mathbf{1}_{m_{k}} \\
         \end{array}
       \right),
\end{equation}
where $\mathbf{1}_{m_{i}}$ are $m_{i}\times m_{i}$ the identity matrices for all $1\leq i\leq k$. Now let $U\in U(n)$ be a unitary matrix, where $n=m_{1}+m_{2}+\cdots +m_{k}$. Then all density matrices unitary equivalent to $\varpi$, that is  $\rho=U\varpi U^{\dagger}$ lie on the co-adjoint orbit passing through $\varpi$ defined by
\begin{equation}
\mathcal{O}_{\varpi}=\{U\varpi U^{\dagger}:U\in U(n)\}\cong U(n)/G_{\varpi},
\end{equation}
where $G_{\varpi}=\{U\in U(n):U\varpi U^{\dagger}=\varpi\}$ is the isotropy subgroup of $\varpi$. Equivalently we can define $\mathcal{O}_{\varpi}$ as
\begin{equation}
\mathcal{O}_{\varpi}\cong U(n)/U(m_{1})\times U(m_{2})\times\cdots\times U(m_{k}),
\end{equation}
where we have identified $G_{\varpi}$ as $G_{\varpi}\cong U(m_{1})\times U(m_{2})\times\cdots\times U(m_{k})$.
As we have discussed in example \ref{hs},  a homogenous manifold can  be construct by a principal fiber bundle. Thus the space of  mixed quantum states can be constructed by the following principal fiber bundle
\begin{equation}
\xymatrix{
G_{\varpi} \ar@{^{(}->}[r] & U(n)\ar[r]^{\pi} &\mathcal{O}_{\varpi}}
\end{equation}
or equivalently by
\begin{equation}
\xymatrix{
U(m_{1})\times\cdots\times U(m_{k})\ar@{^{(}->}[r] & U(n)\ar[r]^{\pi~~~~~~~~~~~~} &U(n)/U(m_{1})\times\cdots\times U(m_{k})}.
\end{equation}
This definition of $\mathcal{O}_{\varpi}$ indicates that the quantum phase space of mixed quantum states is a complex flag manifold, which  is usually denoted by $\mathbb{CF}_{m_{1}\cdots m_{k}}$.
 \begin{example}
 The first non-trivial example of such a space is called a complex Grassmann manifold $\mathbb{CF}_{m_{1}m_{2}}=\mathbb{CG}_{m,n-m}$ which is defined by e.g., taking $m_{1}=m$ and $m_{2}=n-m$, that is
\begin{equation}
\mathbb{CG}_{m,n-m}\cong U(n)/U(m)\times U(n-m)
\end{equation}
or in terms of a principal fiber bundle
\begin{equation}
\xymatrix{
U(m)\times U(n-m)\ar@{^{(}->}[r] & U(n)\ar[r]^{\pi~~~~~~~~~~~~~~~~~~~~~~~~} &~\mathbb{CG}_{m,n-m}\cong U(n)/U(m)\times U(n-m)}.
\end{equation}
\end{example}
Now, it is  possible to introduce a geometric framework for mixed quantum states based on a K\"ahler structure. The geometric framework includes a symplectic form, an almost complex structure, and a Riemannian metric that characterize the space of mixed quantum states \cite{Hosh}. The framework is computationally effective  and it provides us with a better understanding of general quantum mechanical systems. However, in the next section we will review another geometric framework for mixed quantum states based on principal fiber bundle with some important applications in geometric quantum mechanics.

\subsection{A geometric framework for mixed quantum states based on a principal fiber bundle}

Recently, we have introduced a geometric formulation of quantum mechanics for density operators based on principal fiber bundle and purification procedure which has led to many interesting results such as a geometric phases, an uncertainty relations, quantum speed limits, a distance measure, and an optimal Hamiltonian \cite{GP,MB,DD,GQE,GUR,QSL}.  Let $\mathcal{H}$ be a  Hilbert space. In this work we will consider that $\mathcal{H}$  is finite dimensional but the theory can carefully be extended to infinite dimensional cases.
Then  $\mathcal{D}(\mathcal{H})$ will denote the space of density operators on $\mathcal{H}$. Moreover, we let $\mathcal{D}_{k}(\mathcal{H})$ be the space of density operators on $\mathcal{H}$ which has finite rank, namely less than or equal to $k$.
\subsubsection{Purification}
In the pervious section we have covered  geometric formulation of pure states which are density operators with 1-dimensional support. In this section we will consider the density operator with $k$-dimensional support where $k$ is a positive integer. However, every density operator can be regarded as a reduced pure state. Let $\mathcal{K}$  be a $k$-dimensional Hilbert space. Moreover, let $\mathcal{L}(\mathcal{K},\mathcal{H})\cong \mathcal{H}\otimes\mathcal{K}^{*}$ be the space of linear mapping from $\mathcal{K}$ to $\mathcal{H}$ and $\mathcal{S}(\mathcal{K},\mathcal{H})$ be the space of unit sphere in $\mathcal{L}(\mathcal{K},\mathcal{H})$. Now a purification of density operator can be defined by the following surjective map
\begin{equation}
\mathcal{S}(\mathcal{K},\mathcal{H})\longrightarrow\mathcal{D}_{k}(\mathcal{H})
\end{equation}
defined by $\psi\longmapsto \psi \psi^{\dagger}$. The idea of the purification is based on the fact that a quantum system defined on the Hilbert space $\mathcal{H}$  can be consider as a subsystem of a larger quantum system $\mathcal{H}\otimes\mathcal{K}$. If $\rho$ is a density operator on $\mathcal{H}$, then it can be defined by the following partial trace $\rho=\mathrm{Tr}_{\mathcal{K}}(\varsigma)$, where $\varsigma$ is  a density operator on $\mathcal{H}\otimes\mathcal{K}$. In our case if we consider $\mathcal{P}(\mathcal{H}\otimes\mathcal{K}^{*})$ as projective space over $\mathcal{H}\otimes\mathcal{K}^{*}$, then we have
\begin{equation}
\xymatrix{
\mathcal{S}(\mathcal{H}\otimes\mathcal{K}^{*}) \ar@{^{(}->}[r]^{\mathcal{C}} & \mathcal{P}(\mathcal{H}\otimes\mathcal{K}^{*})\ar[r]^{\mathrm{Tr}_{\mathcal{K}^*}} &\mathcal{D}_{k}(\mathcal{H})},
\end{equation}
where $\mathcal{C}:\ket{\psi}\mapsto\ket{\psi}\bra{\psi}$. Now, let $\mathcal{U}(\mathcal{H})$ be the unitary group of $\mathcal{H}$ acting on  $\mathcal{D}(\mathcal{H})$. Then the evolution of density operators which is governed by a von Neumann equation will stay in a single orbit for the left conjugation-action of $\mathcal{U}(\mathcal{H})$. In this setting the orbits are in one-to-one correspondence with the spectra of density operators on $\mathcal{H}$.
Let $\lambda_{i}$ be the density operator's  eigenvalues with multiplicities $m_{i}$ listed in descending order. Then we define the spectrum of the density operator by
\begin{equation}
\sigma=(\lambda_{1},\lambda_{2},\ldots,\lambda_{k};m_{1},m_{2},\ldots,m_{k}).
\end{equation}
Given the spectrum $\sigma$, we denote the corresponding $\mathcal{U}(\mathcal{H})$ orbit in $\mathcal{D}_{k}(\mathcal{H})$ by $\mathcal{D}(\sigma)$. Now, let $\{\ket{j}\}^{k}_{j=1}$ be a set of orthonormal basis in $\mathcal{K}$. Then we define
\begin{equation}
P(\sigma)=\sum^{k}_{j=1}\lambda_{j}\Pi_{j},
\end{equation}
where $\Pi_{j}=\sum^{m_{1}+\cdots+ m_{j}}_{i=m_{1}+\cdots+ m_{j-1}+1}\ket{i}\bra{i}$ and the $\mathcal{S}(\sigma)$ by
 \begin{equation}
\mathcal{S}(\sigma)=\{\psi\in \mathcal{L}(\mathcal{K},\mathcal{H}): \psi^{\dagger}\psi=P(\sigma)\}.
\end{equation}
Thus we have constructed a principal fiber bundle
 \begin{equation}\xymatrix{
\mathcal{U}(\sigma)\ar@{^{(}->}[r] & \mathcal{S}(\sigma)\ar[r]^{\pi} &\mathcal{D}(\sigma)},
\end{equation}
 where the gauge group $\mathcal{U}(\sigma)$ is defined by
 \begin{equation}
\mathcal{U}(\sigma)=\{U\in \mathcal{K}: UP(\sigma)=P(\sigma)U\}
\end{equation}
with corresponding Lie algebra $\mathbb{u}(\sigma)$. Note the action of $\mathcal{U}(\sigma)$ on $\mathcal{S}(\sigma)$ is induced by the right action of unitary group $\mathcal{U}(\mathcal{K})$ on $\mathcal{L}(\mathcal{K},\mathcal{H})$.
\begin{example}
If we have one eigenvalue, e.g.,  $\lambda_{1}=1$, then $\mathcal{S}(\sigma)$ is the unit sphere in the Hilbert space $\mathcal{H}$, and $\mathcal{D}(\sigma)$ is the projective space over
$\mathcal{H}$ and our principal fiber bundle $\pi$ is the generalized Hopf bundle (\ref{Hopf}) discussed in pervious section.
\end{example}
In general  $\mathcal{S}(\sigma)$ is diffeomorphic to the Stiefel variety of $k$-frames in $\mathcal{H}$. The following theorem has been proven in \cite{GUR}.
\begin{thm}
Let $\mathbb{u}(\mathcal{K}^{*})$ be the space of all functionals on $\mathbb{u}(\mathcal{K})$ and the momentum mapping
\begin{equation}
J: \mathcal{L}(\mathcal{K},\mathcal{H})\longrightarrow\mathfrak{u}(\mathcal{K}^{*})
\end{equation}
is defined by $J(\psi)=\mu_{\psi^{*}\psi}$ where $\mu_{\hat{A}}(\xi)=i\hbar \mathrm{Tr} (\hat{A}\xi)$, for any hermitian operator $\hat{A}$ on $\mathcal{K}$. Then $J$ is a coadjoint-equivariant map for the Hamiltonian $\mathcal{U}(\mathcal{K})$-action on $\mathcal{L}(\mathcal{K},\mathcal{H})$ and $\mu_{\psi^{*}\psi}$ is a regular value of  $J$ whose isotropy group acts properly, and freely on $J^{-1}(\mu_{\psi^{*}\psi})$.
\end{thm}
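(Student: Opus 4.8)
The plan is to regard $\mathcal{L}(\mathcal{K},\mathcal{H})$ as a symplectic vector space in exact analogy with the Hilbert space case treated above: it carries the Hermitian inner product $\langle\psi_{1},\psi_{2}\rangle=\mathrm{Tr}(\psi_{1}^{*}\psi_{2})$ and hence a natural symplectic form $\Omega(\psi_{1},\psi_{2})=2\hbar\,\mathrm{Im}\,\mathrm{Tr}(\psi_{1}^{*}\psi_{2})$, and the group $\mathcal{U}(\mathcal{K})$ acts on it through $U\cdot\psi=\psi U^{*}$, the left action induced by the right action of $\mathcal{U}(\mathcal{K})$ on $\mathcal{L}(\mathcal{K},\mathcal{H})$. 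I would then verify the three assertions in turn: (i) $J$ is an $\mathrm{Ad}^{*}$-equivariant momentum map for this action, so that the action is Hamiltonian; (ii) $\mu:=\mu_{\psi^{*}\psi}$ is a regular value of $J$; and (iii) the coadjoint isotropy group $G_{\mu}$ acts freely and properly on $J^{-1}(\mu)$. Together these are exactly the hypotheses required to apply the symplectic reduction theorem~\ref{srt} to $\mathcal{S}(\sigma)\to\mathcal{D}(\sigma)$.

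For (i), first observe that $\langle J(\psi),\xi\rangle=i\hbar\,\mathrm{Tr}(\psi^{*}\psi\,\xi)$ is real whenever $\xi$ is anti-Hermitian, since $\psi^{*}\psi$ is Hermitian; hence $J$ genuinely takes values in $\mathfrak{u}(\mathcal{K})^{*}$. The fundamental vector field of $\xi\in\mathfrak{u}(\mathcal{K})$ is $\xi_{\mathcal{L}}(\psi)=-\psi\xi$. Differentiating the pairing gives $d(\langle J(\cdot),\xi\rangle)_{\psi}(\phi)=i\hbar\big(\mathrm{Tr}(\phi^{*}\psi\,\xi)+\mathrm{Tr}(\psi^{*}\phi\,\xi)\big)$, while $\Omega(\xi_{\mathcal{L}}(\psi),\phi)=2\hbar\,\mathrm{Im}\,\mathrm{Tr}(-\xi^{*}\psi^{*}\phi)$; expanding $\mathrm{Im}$ via $\mathrm{Im}\,z=\tfrac{1}{2i}(z-\bar{z})$ and using $\xi^{*}=-\xi$ shows the two expressions coincide, so $\iota_{\xi_{\mathcal{L}}}\Omega=d\langle J,\xi\rangle$ and the action is Hamiltonian. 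Coadjoint equivariance is then a one-line check: $J(U\cdot\psi)=\mu_{U\psi^{*}\psi U^{*}}$, and $\mu_{U\hat{A}U^{*}}(\xi)=i\hbar\,\mathrm{Tr}(\hat{A}U^{*}\xi U)=\mu_{\hat{A}}(\mathrm{Ad}_{U^{-1}}\xi)=(\mathrm{Ad}^{*}_{U}\mu_{\hat{A}})(\xi)$, so $J(U\cdot\psi)=\mathrm{Ad}^{*}_{U}J(\psi)$.

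For (ii) and (iii) the structural fact to establish is that $J^{-1}(\mu)=\mathcal{S}(\sigma)$. Indeed, the pairing $(\hat{A},\xi)\mapsto i\hbar\,\mathrm{Tr}(\hat{A}\xi)$ identifies the Hermitian operators on $\mathcal{K}$ nondegenerately with $\mathfrak{u}(\mathcal{K})^{*}$ (if $\mathrm{Tr}(\hat{A}\xi)=0$ for all anti-Hermitian $\xi$ then $\mathrm{Tr}(\hat{A}^{2})=0$, so $\hat{A}=0$), hence $J(\psi)=\mu=\mu_{P(\sigma)}$ forces $\psi^{*}\psi=P(\sigma)$. Since $P(\sigma)$ is positive definite, every element of $\mathcal{S}(\sigma)$ is injective as a linear map, so its stabilizer $\{U\in\mathcal{U}(\mathcal{K}):\psi U^{*}=\psi\}$ is trivial; in particular the $\mathcal{U}(\mathcal{K})$-action is free along $J^{-1}(\mu)$. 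From $\langle d_{\psi}J(\phi),\xi\rangle=\Omega(\xi_{\mathcal{L}}(\psi),\phi)$ and nondegeneracy of $\Omega$ one reads off the standard identity $(\mathrm{im}\,d_{\psi}J)^{\circ}=\mathfrak{u}(\mathcal{K})_{\psi}$, the annihilator of the image being the Lie algebra of the stabilizer of $\psi$; since that stabilizer is trivial, $d_{\psi}J$ is surjective for every $\psi\in J^{-1}(\mu)$, and $\mu$ is a regular value. Finally $G_{\mu}=\{U:\mathrm{Ad}^{*}_{U}\mu=\mu\}=\{U:UP(\sigma)U^{*}=P(\sigma)\}=\mathcal{U}(\sigma)$, which is a closed subgroup of the compact group $\mathcal{U}(\mathcal{K})$ and therefore compact; a compact group action is automatically proper, and freeness was already shown.

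Steps (i) and the algebraic identities above are routine once the inner-product and sign conventions are fixed. The point that genuinely requires attention is that both regularity of $\mu$ and freeness of the $\mathcal{U}(\sigma)$-action rest on $P(\sigma)$, equivalently $\psi^{*}\psi$, being invertible: if $\psi$ were not of full rank, $J^{-1}(\mu)$ would contain maps with positive-dimensional stabilizers, $d_{\psi}J$ would fail to be surjective there, and $\mu$ would cease to be a regular value. So the real work is bookkeeping — carrying the standing full-rank assumption through the stabilizer computation and the annihilator identity — while the rest is an unwinding of definitions together with compactness of $\mathcal{U}(\mathcal{K})$.
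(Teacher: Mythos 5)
The paper itself offers no proof of this theorem; it simply imports it from \cite{GUR}, so there is no internal argument to compare against. Your proof is the standard symplectic-geometry argument and is structurally sound. The essential points are all present and correctly justified: the nondegeneracy of the pairing $(\hat{A},\xi)\mapsto i\hbar\,\mathrm{Tr}(\hat{A}\xi)$ identifies $J^{-1}(\mu_{P(\sigma)})$ with $\mathcal{S}(\sigma)$; positive-definiteness of $P(\sigma)$ makes every $\psi$ in that level set injective, so the $\mathcal{U}(\mathcal{K})$-stabilizers there are trivial; the annihilator identity $(\mathrm{im}\,d_{\psi}J)^{\circ}=\mathfrak{u}(\mathcal{K})_{\psi}$ then yields surjectivity of $d_{\psi}J$ and hence regularity of $\mu_{P(\sigma)}$; and $G_{\mu}=\{U:UP(\sigma)U^{*}=P(\sigma)\}=\mathcal{U}(\sigma)$ is a closed subgroup of the compact group $\mathcal{U}(\mathcal{K})$ (finite-dimensional $\mathcal{K}$), so its action is automatically proper while freeness is inherited from the full group. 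You are also right to single out full rank of $\psi^{*}\psi$ as the load-bearing hypothesis. The one blemish is a sign in part (i): with the conventions you fix --- left action $U\cdot\psi=\psi U^{*}$, generator $\xi_{\mathcal{L}}(\psi)=-\psi\xi$, and $\Omega(\psi_{1},\psi_{2})=2\hbar\,\mathrm{Im}\,\mathrm{Tr}(\psi_{1}^{*}\psi_{2})$ --- one computes $\Omega(\xi_{\mathcal{L}}(\psi),\phi)=-i\hbar\bigl(\mathrm{Tr}(\psi^{*}\phi\xi)+\mathrm{Tr}(\phi^{*}\psi\xi)\bigr)$, which is the \emph{negative} of $d\langle J,\xi\rangle_{\psi}(\phi)$, not equal to it; the two expressions coincide only if you use the right action's generator $+\psi\xi$, or flip the sign of $J$ or of $\Omega$, or adopt the convention $\iota_{\xi_{\mathcal{L}}}\Omega=-d\langle J,\xi\rangle$. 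This is a convention artifact rather than a gap --- it affects none of the equivariance, regularity, freeness, or properness claims --- but as written the assertion that the two expressions coincide is off by a sign and should be repaired before the momentum-map identity is declared verified.
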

Thus we can define $\mathcal{S}(\sigma)=J^{-1}(\mu_{\psi^{*}\psi}) $.  Moreover, $\mathcal{U}(\sigma)$  is the isotropy group of $\mu_{\psi^{*}\psi}=\mu_{P(\sigma)}$ and our principal fiber bundle $\xymatrix{
\mathcal{U}(\sigma)\ar@{^{(}->}[r] & \mathcal{S}(\sigma)\ar[r]^{\pi} &\mathcal{D}(\sigma)}$ is equivalent to the following reduced space submersion
\begin{equation}
J^{-1}(\mu_{P(\sigma)})\longrightarrow J^{-1}(\mu_{P(\sigma)})/\mathcal{U}(\sigma).
\end{equation}
For more information see our recent work \cite{GUR}.

\subsubsection{Riemannian and symplectic structures on $\mathcal{S}(\sigma)$ and $\mathcal{D}(\sigma)$}
Next, we will discuss the Riemannian and symplectic structures on $\mathcal{S}(\sigma)$ and
$\mathcal{D}(\sigma)$. First,  we note that the space $\mathcal{L}(\mathcal{K},\mathcal{H})$ is also equipped with a Hilbert- Schmidt inner product.
 $2\hbar$ times the real part of Hilbert- Schmidt inner product defines  a Riemannian metric
\begin{equation}
G(X,Y)= \hbar\mathrm{Tr}(X^{\dagger}Y+Y^{\dagger}X)
\end{equation}
on $\mathcal{L}(\mathcal{K},\mathcal{H})$ and $2\hbar$  times   the imaginary part of Hilbert- Schmidt inner product defines a sympletic form
\begin{equation}
\Omega(X,Y)=-i\hbar \mathrm{Tr}(X^{\dagger}Y-Y^{\dagger}X)
\end{equation}
on $\mathcal{L}(\mathcal{K},\mathcal{H})$. The unitary  groups $\mathcal{U}(\mathcal{H})$ and $\mathcal{U}(\mathcal{K})$ act on $\mathcal{L}(\mathcal{K},\mathcal{H})$ from left and right respectively by isometric and symplectic transformations $L_{U}(\psi)=U\psi$ and $R_{V}(\psi)=\psi V$.
Moreover, we let $\mathbb{u}(\mathcal{H})$ and $\mathbb{u}(\mathcal{K})$ be the Lie algebras of $\mathcal{U}(\mathcal{H})$ and $\mathcal{U}(\mathcal{K})$ respectively. Furthermore, we define two vector fields $X_{\xi}$ and $X_{\eta}$  corresponding to $\xi$ in $\mathbb{u}(\mathcal{H})$ by
\begin{equation}
X_{\xi}(\psi)=\frac{d}{dt}\left[L_{\exp (t\xi)}(\psi)\right]_{t=0}=\xi \psi
\end{equation}
and  $\eta$ in $\mathbb{u}(\mathcal{K})$ by
\begin{equation}
X_{\eta}(\psi)=\frac{d}{dt}\left[R_{\exp (t\eta)}(\psi)\right]_{t=0}=\psi \eta.
\end{equation}
Now, from Marsden-Weinstein-Meyer symplectic reduction theorem \ref{srt} follows that  $\mathcal{D}(\sigma)$ admits  a symplectic form which is pulled back to $\Omega|_{\mathcal{S}(\sigma)}$.
The following  theorem is also proved in \cite{GUR}.
 \begin{thm}
Consider the principal fiber bundle $\xymatrix{
\mathcal{U}(\sigma)\ar@{^{(}->}[r] & \mathcal{S}(\sigma)\ar[r]^{\pi} &\mathcal{D}(\sigma)}$. Then the  projective space $\mathcal{D}(\sigma)$ admits a unique symplectic form such that $\pi^{*}\omega=\Omega|_{\mathcal{S}(\sigma)}$.
\end{thm}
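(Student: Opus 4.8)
The plan is to recognise this statement as a direct application of the Marsden--Weinstein--Meyer symplectic reduction theorem (Theorem \ref{srt}) to the ambient symplectic manifold $(\mathcal{L}(\mathcal{K},\mathcal{H}),\Omega)$ equipped with the right $\mathcal{U}(\mathcal{K})$-action $R_{V}(\psi)=\psi V$. First I would record that $(\mathcal{L}(\mathcal{K},\mathcal{H}),\Omega)$ is indeed a symplectic manifold: since $\mathcal{H}$ and $\mathcal{K}$ are finite dimensional, $\mathcal{L}(\mathcal{K},\mathcal{H})\cong\mathcal{H}\otimes\mathcal{K}^{*}$ is a finite-dimensional complex vector space, hence isomorphic to each of its tangent spaces, and $\Omega(X,Y)=-i\hbar\mathrm{Tr}(X^{\dagger}Y-Y^{\dagger}X)$ is a constant-coefficient, non-degenerate two-form; closedness is automatic because $\Omega$ is translation invariant (equivalently, $\Omega=-d\Theta$ for a potential one-form, exactly as in the pure-state reduction carried out in Section \ref{QPS}).

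Next I would assemble the hypotheses of Theorem \ref{srt}. The text already records that each $R_{V}$ is an isometric and symplectic transformation, so the $\mathcal{U}(\mathcal{K})$-action is symplectic; and by the preceding theorem (cited from \cite{GUR}) the map $J:\mathcal{L}(\mathcal{K},\mathcal{H})\longrightarrow\mathfrak{u}(\mathcal{K}^{*})$, $J(\psi)=\mu_{\psi^{*}\psi}$, is a coadjoint-equivariant momentum map for this action, which makes the action Hamiltonian. That same theorem guarantees that $\mu_{P(\sigma)}$ is a regular value of $J$ and that its isotropy subgroup $\mathcal{U}(\sigma)=\{U\in\mathcal{U}(\mathcal{K}):UP(\sigma)=P(\sigma)U\}$ acts freely and properly on the level set $J^{-1}(\mu_{P(\sigma)})$. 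By construction $J^{-1}(\mu_{P(\sigma)})=\mathcal{S}(\sigma)$, since $\mu_{\psi^{*}\psi}=\mu_{P(\sigma)}$ holds precisely when $\mathrm{Tr}(\psi^{\dagger}\psi\,\xi)=\mathrm{Tr}(P(\sigma)\xi)$ for all hermitian $\xi$, i.e. when $\psi^{\dagger}\psi=P(\sigma)$; and the quotient of this level set by $\mathcal{U}(\sigma)$ is exactly $\mathcal{D}(\sigma)$ with quotient map $\pi$, which is the reduced-space submersion $J^{-1}(\mu_{P(\sigma)})\to J^{-1}(\mu_{P(\sigma)})/\mathcal{U}(\sigma)$ already noted.

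With all hypotheses in place, Theorem \ref{srt} yields at once a unique symplectic form $\omega$ on $\mathcal{D}(\sigma)=\mathcal{S}(\sigma)/\mathcal{U}(\sigma)$ characterised by $\pi^{*}\omega=i^{*}\Omega$, where $i:\mathcal{S}(\sigma)\hookrightarrow\mathcal{L}(\mathcal{K},\mathcal{H})$ is the inclusion; since $i^{*}\Omega=\Omega|_{\mathcal{S}(\sigma)}$, this is the asserted form. Uniqueness is part of the conclusion of Theorem \ref{srt}, and can also be seen directly: $\pi$ is a surjective submersion, so a two-form on $\mathcal{D}(\sigma)$ is determined by its pullback along $\pi$, and that pullback is prescribed to be $\Omega|_{\mathcal{S}(\sigma)}$.

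The only genuine obstacle has already been discharged upstream: it is the verification that $J$ has $\mu_{P(\sigma)}$ as a regular value whose isotropy group acts freely and properly, together with coadjoint equivariance of $J$, which is the content of the earlier theorem. Granting that, the present statement is a formal consequence. The one point deserving care in the writeup is that the reduction is performed with respect to the \emph{right} $\mathcal{U}(\mathcal{K})$-action — whose restricted $\mathcal{U}(\sigma)$-orbits are exactly the fibres of $\pi$ — and not the left $\mathcal{U}(\mathcal{H})$-conjugation action that sweeps out $\mathcal{D}(\sigma)$ as a single coadjoint orbit.
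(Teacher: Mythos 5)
Your proposal is correct and follows essentially the same route as the paper, which likewise obtains the result by invoking the Marsden--Weinstein--Meyer reduction theorem (Theorem \ref{srt}) together with the preceding theorem from \cite{GUR} that establishes $J$ as a coadjoint-equivariant momentum map with $\mu_{P(\sigma)}$ a regular value whose isotropy group $\mathcal{U}(\sigma)$ acts freely and properly on $J^{-1}(\mu_{P(\sigma)})=\mathcal{S}(\sigma)$. Your write-up is in fact more explicit than the paper's (which defers the details to \cite{GUR}), and your closing caveat about reducing by the right $\mathcal{U}(\mathcal{K})$-action rather than the left $\mathcal{U}(\mathcal{H})$-action is a worthwhile clarification.
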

We will also restrict the metric $G$ to a gauge-invariant metric on $\mathcal{S}(\sigma)$.
 The tangent bundle $\mathcal{T}\mathcal{S}(\sigma)$ of  $\mathcal{S}(\sigma)$ can be decompose as
 \begin{equation}
 \mathcal{T}\mathcal{S}(\sigma)=V\mathcal{S}(\sigma)\oplus H\mathcal{S}(\sigma),
 \end{equation}
 where the vertical bundle $V\mathcal{S}(\sigma)= \mathrm{Ker} d \pi$ and the horizontal bundle $H\mathcal{S}(\sigma)= V\mathcal{S}(\sigma)^{\perp}$, see Figure 4. Note that  $d\pi$ is the differential of $\pi$ and $\perp$ denotes the orthogonal complement with respect to the metric $G$. A vector in $V\mathcal{S}(\sigma)$ is called vertical and a vector in $H\mathcal{S}(\sigma)$ is called the horizontal. We also define a unique metric $g$ on $\mathcal{D}(\sigma)$ which makes the map $\pi$ a Riemannian submersion. This mean that the metric $g$  has property that restriction of $d \pi$ to every fiber of the horizontal bundle is an isometry.
\begin{figure}[t]
\centering
\includegraphics[scale=.65]{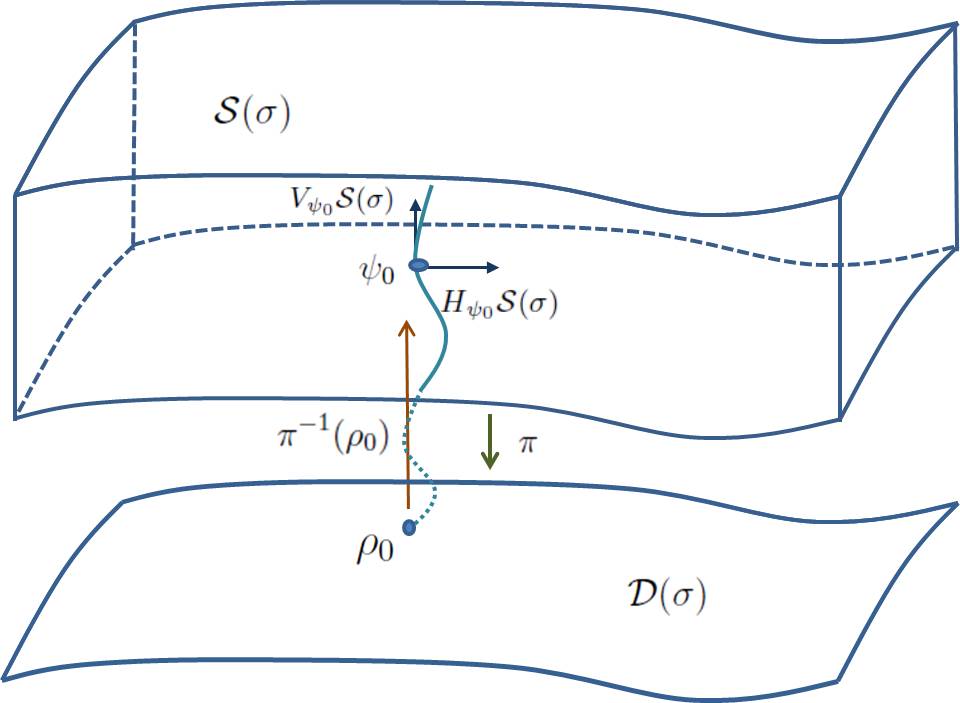}
%
%
\caption{Illustration of the principal fiber bundle $\pi$  and the decomposition of  $\mathcal{T}_{\psi}\mathcal{S}(\sigma)$.}
\label{fig11}      
\end{figure}
Thus we have shown that the total space $\mathcal{S}(\sigma)$  is equipped with a symplectic form and a Riemannian metric. Moreover the quantum phase space  $\mathcal{D}(\sigma)$ is equipped with a symplectic form $\omega$ and a Riemannian metric $g$. We can also show that there exists a compatible  almost complex structure $J$ on $\mathcal{D}(\sigma)$ such that $g(X,Y)=\omega(X,J(Y))$. But to find an explicit expression for $J$ is not an easy task and needs further investigation.

\subsubsection{Mechanical connection }
In this section, we will derive  an explicit  connection on $\mathcal{S}(\sigma)$. The connection is a smooth subbundle $H\mathcal{S}(\sigma)$ of $\mathcal{T}\mathcal{S}(\sigma)$ which is also called an Ehresmann connection.
There is a canonical isomorphism between the Lie algebra $\mathbb{u}(\sigma)$ and the fibers in $V\mathcal{S}(\sigma)$, that is
\begin{equation}
\mathfrak{u}(\sigma)\ni \xi \longmapsto \psi \xi \in V_{\psi}\mathcal{S}(\sigma)
 \end{equation}
  Moreover, $V\mathcal{S}(\sigma)$ is the kernel bundle of gauge invariant mechanical connection $\mathcal{A}:\mathcal{T}\mathcal{S}(\sigma)\longrightarrow \mathfrak{u}(\sigma)$ defined by
\begin{equation}
\mathcal{A}_{\psi}= \mathcal{I}^{-1}_{\psi} \mathcal{J}_{\psi},
 \end{equation}
where $\mathcal{I}: \mathcal{S}(\sigma)\times\mathfrak{u}(\sigma)\longrightarrow\mathfrak{u}(\sigma)^{*}$ is called a locked inertia tensor which  is defined by
\begin{equation}
 \mathcal{I}_{\psi} \xi (\eta)=G (X_{\xi}(\psi), X_{\eta}(\psi))
 \end{equation}
 and
 $\mathcal{J}: \mathcal{T}\mathcal{S}(\sigma)\longrightarrow\mathfrak{u}(\sigma)^{*}$ is called a momentum map which is defined by
\begin{equation}
 \mathcal{J}_{\psi} (X)(\xi)=G (X, X_{\xi}(\psi)).
 \end{equation}
 The locked inertia tensor is of bi-invariant type since $\mathcal{I}_{\psi} $ is an adjoint-invariant form on the Lie algebra $\mathbb{u}(\sigma)$ and it is also independent of the $\psi$. Hence the locked inertia tensor defines a metric on $\mathbb{u}(\sigma)$  as follows
 \begin{equation}\label{metric}
\xi \cdot \eta=\hbar \mathrm{Tr}\left((\xi^{\dagger}\eta+\eta^{\dagger}\xi)P(\sigma)\right).
 \end{equation}
 We will use this metric to derive an explicit expression for the mechanical connection $\mathcal{A}_{\psi}$.
\begin{prop}Let $X\in \mathcal{T}\mathcal{S}(\sigma)$ and $\Pi_{j}=\sum^{m_{1}+\cdots+ m_{j}}_{i=m_{1}+\cdots+ m_{j-1}+1}\ket{i}\bra{i}$. Then
\begin{equation}\label{MC}
\mathcal{A}_{\psi} (X)=\sum^{k}_{j=1} \Pi_{j}\psi^{\dagger}X\Pi_{j}P^{-1}(\sigma).
\end{equation}
 \end{prop}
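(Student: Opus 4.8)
The plan is to exploit the defining property of the mechanical connection. Since $\mathcal{A}_{\psi}=\mathcal{I}_{\psi}^{-1}\mathcal{J}_{\psi}$, the element $\mathcal{A}_{\psi}(X)\in\mathfrak{u}(\sigma)$ is characterized as the unique $\xi\in\mathfrak{u}(\sigma)$ with $\mathcal{I}_{\psi}\xi(\eta)=\mathcal{J}_{\psi}(X)(\eta)$ for all $\eta\in\mathfrak{u}(\sigma)$. Because $\mathcal{I}_{\psi}\xi(\eta)$ is precisely the metric~(\ref{metric}) on $\mathfrak{u}(\sigma)$, which is positive definite (hence nondegenerate, so $\mathcal{I}_{\psi}$ is invertible) as $P(\sigma)$ is a positive operator on $\mathcal{K}$, it is enough to check that the operator $\Xi:=\sum_{j=1}^{k}\Pi_{j}\psi^{\dagger}X\Pi_{j}P^{-1}(\sigma)$ on the right of~(\ref{MC}) (i) belongs to $\mathfrak{u}(\sigma)$ and (ii) satisfies $\Xi\cdot\eta=\mathcal{J}_{\psi}(X)(\eta)$ for every $\eta\in\mathfrak{u}(\sigma)$; uniqueness then yields $\mathcal{A}_{\psi}(X)=\Xi$.

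For (i) the one geometric input is a description of $T_{\psi}\mathcal{S}(\sigma)$: differentiating the defining relation $\psi^{\dagger}\psi=P(\sigma)$ along a curve in $\mathcal{S}(\sigma)$ shows that $X\in T_{\psi}\mathcal{S}(\sigma)$ obeys $X^{\dagger}\psi+\psi^{\dagger}X=0$, i.e. $\psi^{\dagger}X$ is skew-Hermitian. Writing $P^{-1}(\sigma)=\sum_{j}\lambda_{j}^{-1}\Pi_{j}$ and using $\Pi_{i}\Pi_{j}=\delta_{ij}\Pi_{j}$ one gets $\Xi=\sum_{j}\lambda_{j}^{-1}\Pi_{j}\psi^{\dagger}X\Pi_{j}$; each summand is a real multiple of a self-adjoint projection conjugating the skew-Hermitian operator $\psi^{\dagger}X$, hence is skew-Hermitian, so $\Xi^{\dagger}=-\Xi$; moreover $\Pi_{l}\Xi=\lambda_{l}^{-1}\Pi_{l}\psi^{\dagger}X\Pi_{l}=\Xi\Pi_{l}$, so $\Xi$ commutes with every $\Pi_{l}$ and therefore with $P(\sigma)$, giving $\Xi\in\mathfrak{u}(\sigma)$. (The same computation with $X=\psi\xi$ and $\xi\in\mathfrak{u}(\sigma)$ returns $\mathcal{A}_{\psi}(\psi\xi)=\sum_{j}\Pi_{j}\xi\Pi_{j}=\xi$, confirming that $\mathcal{A}$ is genuinely a connection form.)

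For (ii) I would simply evaluate and compare the two sides. On one side, $\mathcal{J}_{\psi}(X)(\eta)=G(X,X_{\eta}(\psi))=G(X,\psi\eta)=\hbar\,\mathrm{Tr}\!\big(X^{\dagger}\psi\eta+\eta^{\dagger}\psi^{\dagger}X\big)$, and substituting $X^{\dagger}\psi=-\psi^{\dagger}X$ and $\eta^{\dagger}=-\eta$ together with cyclicity collapses this to $-2\hbar\,\mathrm{Tr}(\psi^{\dagger}X\eta)$. On the other side, since $\Xi$ and $\eta$ are skew-Hermitian, $\Xi\cdot\eta=-\hbar\,\mathrm{Tr}\!\big((\Xi\eta+\eta\Xi)P(\sigma)\big)$; inserting $\Xi=\sum_{j}\lambda_{j}^{-1}\Pi_{j}\psi^{\dagger}X\Pi_{j}$ and $P(\sigma)=\sum_{l}\lambda_{l}\Pi_{l}$, the orthogonality $\Pi_{i}\Pi_{j}=\delta_{ij}\Pi_{j}$ forces $l=j$ so the scalars $\lambda_{j}^{-1}\lambda_{j}$ cancel, and then $[\eta,\Pi_{j}]=0$ together with cyclicity and $\sum_{j}\Pi_{j}=\1$ reduce both $\mathrm{Tr}(\Xi\eta P(\sigma))$ and $\mathrm{Tr}(\eta\Xi P(\sigma))$ to $\mathrm{Tr}(\psi^{\dagger}X\eta)$ — in effect, pairing the block-diagonal truncation $\sum_{j}\Pi_{j}\psi^{\dagger}X\Pi_{j}$ against the block-diagonal $\eta$ is the same as pairing $\psi^{\dagger}X$ against $\eta$. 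Hence $\Xi\cdot\eta=-2\hbar\,\mathrm{Tr}(\psi^{\dagger}X\eta)=\mathcal{J}_{\psi}(X)(\eta)$, completing (ii).

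I do not expect a genuine obstacle: once the characterization above is in hand the whole argument is finite-dimensional trace bookkeeping, essentially the general formula for a mechanical connection written out explicitly in this bundle. The points that demand care are: invoking the tangency identity $\psi^{\dagger}X+X^{\dagger}\psi=0$ (it is what makes $\psi^{\dagger}X$ skew-Hermitian, and the statement fails without it); handling the spectral projections $\Pi_{j}$ and the identification of $\mathfrak{u}(\sigma)$ with the skew-Hermitian operators that are block-diagonal relative to the $\Pi_{j}$; and recording at the outset that $P(\sigma)>0$ so that $P^{-1}(\sigma)$ exists and the metric~(\ref{metric}) is nondegenerate, which legitimizes the use of $\mathcal{I}_{\psi}^{-1}$.
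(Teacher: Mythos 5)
Your argument is correct and follows essentially the same route as the paper: you verify that the candidate operator lies in $\mathfrak{u}(\sigma)$ (anti-Hermitian via the tangency relation $X^{\dagger}\psi+\psi^{\dagger}X=0$, and commuting with $P(\sigma)$), and then check that its pairing under the locked-inertia metric against an arbitrary $\eta\in\mathfrak{u}(\sigma)$ reproduces $\mathcal{J}_{\psi}(X)(\eta)$. Your version is somewhat more careful than the paper's — you make explicit the nondegeneracy of $\mathcal{I}_{\psi}$ that justifies the uniqueness step, and the check $\mathcal{A}_{\psi}(\psi\xi)=\xi$ is a worthwhile addition — but the substance is identical.
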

\begin{proof}
Note that  $\Pi_{j}\psi^{\dagger}X\Pi_{j}P^{-1}(\sigma)\in\mathfrak{u}(\sigma)$. Since
\begin{equation}
\Pi_{j}\psi^{\dagger}X\Pi_{j}P^{-1}(\sigma)=P^{-1}(\sigma)\psi^{\dagger}X\Pi_{j},
\end{equation}
shows that $\Pi_{j}\psi^{\dagger}X\Pi_{j}P^{-1}(\sigma)$ commutes with $P(\sigma)$. Using the definition of $P(\sigma)=\psi^{\dagger}\psi$ one can show that $X^{\dagger}\psi+\psi^{\dagger}X=0$. Thus $\Pi_{j}\psi^{\dagger}X\Pi_{j}P^{-1}(\sigma)$ is anti-hermitian, that is
\begin{equation}
(\Pi_{j}\psi^{\dagger}X\Pi_{j}P^{-1}(\sigma))^{\dagger}
+\Pi_{j}\psi^{\dagger}X\Pi_{j}P^{-1}(\sigma)=
\Pi_{j}\overbrace{(X^{\dagger}\psi+\psi^{\dagger}X)}^{=0}\Pi_{j}P^{-1}(\sigma)=0.
\end{equation}
Finally, we can get an explicit expression for the mechanical connection as follows
\begin{eqnarray}
\sum^{k}_{j=1} \Pi_{j}\psi^{\dagger}X\Pi_{j}P^{-1}(\sigma)\cdot\xi&=&\hbar
\mathrm{Tr}\left(\sum^{k}_{j=1} \Pi_{j}(X^{\dagger}\psi\xi+\xi^{\dagger}\psi^{\dagger}X)\Pi_{j}\right)
\\\nonumber&=&\hbar\mathrm{Tr}\left(X^{\dagger}\psi\xi+\xi^{\dagger}\psi^{\dagger}X\right)
\\\nonumber&=&J_{\psi}(X)(\xi).
\end{eqnarray}
\end{proof}
Next, we will discuss some important applications of the framework in foundations of quantum theory to illustrate the usefulness and  applicability of this formulation of quantum  mechanics.
\subsection{Quantum energy dispersion}\label{qed}
In this section we will consider an important class of observables, namely Hamiltonian $\hat{H}$ operators of the quantum systems. A real-valued  function $H:\mathcal{H}\longrightarrow \mathbb{R}$ of  $\hat{H}$ is called average energy function and it is defined by $H(\rho)=\mathrm{Tr}(\hat{H}\rho)$. If we let $X_{H}$ denotes the Hamiltonian vector field of $H$, then the von Neumann equation governing the dynamics of unitary evolving   density operator can be written as
\begin{equation}
X_{H}(\rho)=\frac{1}{i\hbar}[\hat{H},\rho].
\end{equation}
To prove this we let $X_{H}\in T_{\rho_{0}}D(\sigma)$ with $X_{H}=\dot{\rho}(0)$. Then we want to show that $X_{H}=\frac{1}{i\hbar}[\hat{H}, \rho_{0}]$ for some $\hat{H}$, where $\rho_{0}=\rho(0)$. Now, we consider a curve $\rho(t)$ starting at $\rho_{0}$ with $X_{H}=\dot{\rho}(0)$. Then since $U(\mathcal{H})$ acts transitively on $T\mathcal{D}(\sigma)$, we have
      $\rho(t)=U(t)\rho_{0}U^{\dagger}(t)$ and
      \begin{eqnarray}
        \dot{\rho}(t) &=& \dot{U}(t)\rho_{0}U^{\dagger}(t)+U(t)\rho_{0}\dot{U}^{\dagger}(t)
        \\\nonumber&=& \frac{1}{i \hbar}\hat{H}(t)U(t)\rho_{0}U^{\dagger}(t)-\frac{1}{i \hbar}U(t)\rho_{0}U^{\dagger}(t)\hat{H}(t).
      \end{eqnarray}
      Thus  $\dot{\rho}(0)=\frac{1}{i \hbar}\hat{H}\rho_{0}-\frac{1}{i \hbar}\rho_{0}\hat{H}=\frac{1}{i \hbar}[\hat{H},\rho_{0}]$ and we have shown that $X_{H}=\dot{\rho}(0)=\frac{1}{i \hbar}[\hat{H},\rho_{0}]$.
The Hamiltonian vector field has a gauge-invariant lift $X_{\hat{H}}$ to $\mathcal{S}(\sigma)$ which is defined by
  \begin{equation}
X_{\hat{H}}(\psi)=\frac{1}{i\hbar}\hat{H}\psi.
\end{equation}
The Hamiltonian $\hat{H}$ is said to be parallel at a density operator $\rho$ if $X_{\hat{H}}(\psi)$ horizontal at every $\psi$ in the fiber over $\rho$. Note that $H$ parallel transport $\rho$ if the solution to $\dot{\psi}=X_{\hat{H}}(\psi)$ with initial condition $\psi(0)\in \pi^{-1}(\rho)$ is horizontal. Note also that for any curve $\rho(t)\in \mathcal{D}(\sigma)$ with initial value $\psi_{0}$ in the fiber $\rho(0)$, there is a unique horizontal curve $\psi(t)\in \mathcal{S}(\sigma)$ which is the solution for some Hamiltonian, since the unitary group $\mathcal{U}(\mathcal{H})$ act transitively on $\mathcal{S}(\sigma)$.
If for a known Hamiltonian $\hat{H}$ we define a $\mathbb{u}(\sigma)$-valued field $\xi_{H}$ on $\mathcal{D}(\sigma)$ by
  \begin{equation}
\pi^{*}\xi_{H}=\mathcal{A}\circ X_{\hat{H}},
\end{equation}
then $\xi_{H}\cdot\xi_{H}$ will equal the square of the norm of vertical part of $X_{\hat{H}}$, where the operation $~\cdot~$ defines a metric on $\mathbb{u}(\sigma)$ as in the equation (\ref{metric}).
  \begin{rmk}
The  $\mathbb{u}(\sigma)$-valued  field $\xi_{H}$ is intrinsic to quantum systems.  The complete information about the Hamiltonian $H$ is also included in the field $\xi_{H}$.
\end{rmk}
Next, for a given Hamiltonian, we will establish a relation between the uncertainty function
   \begin{equation}
\Delta H(\rho)=\sqrt{\mathrm{Tr}(\hat{H}^{2}\rho)-\mathrm{Tr}(\hat{H}\rho)^{2}},
\end{equation}
and the intrinsic field $\xi_{H}$.
  \begin{thm}\label{them1}
Let $\xi^{\perp}_{H}$ be the projection of the field $\xi_{H}$ on the orthogonal complement of the unit vector $-i\mathbf{1}\in\mathfrak{u}(\sigma)$.  Then the Hamiltonian vector field $X_{H}$ satisfies
   \begin{equation}
\hbar^{2} g(X_{H}(\rho),X_{H}(\rho))=\Delta H(\rho)^{2}-\xi^{\perp}_{H}(\rho)\cdot \xi^{\perp}_{H}(\rho).
\end{equation}
If the  Hamiltonian $\hat{H}$ is parallel at $\rho$, then $\hbar^{2} g(X_{H}(\rho),X_{H}(\rho))=\Delta \hat{H}(\rho)^{2}$.
\end{thm}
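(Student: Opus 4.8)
The plan is to push everything up to the total space $\mathcal{S}(\sigma)$, where the Hilbert--Schmidt structures make the computation explicit, and then descend through the Riemannian submersion $\pi$. First I would record that the gauge-invariant field $X_{\hat H}(\psi)=\tfrac{1}{i\hbar}\hat H\psi$ is a genuine lift of $X_H$: indeed $d\pi_\psi(X_{\hat H}(\psi))=X_{\hat H}(\psi)\psi^\dagger+\psi X_{\hat H}(\psi)^\dagger=\tfrac{1}{i\hbar}[\hat H,\rho]=X_H(\rho)$. Splitting $X_{\hat H}$ into its $G$-orthogonal vertical and horizontal parts and using that $\pi$ is a Riemannian submersion, one gets $g(X_H(\rho),X_H(\rho))=G\big(\mathrm{hor}(X_{\hat H})(\psi),\mathrm{hor}(X_{\hat H})(\psi)\big)$ for any $\psi$ in the fibre over $\rho$, and by orthogonality
\[
g(X_H,X_H)=G(X_{\hat H},X_{\hat H})-G\big(\mathrm{vert}(X_{\hat H}),\mathrm{vert}(X_{\hat H})\big).
\]

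Next I would evaluate the two terms. From $G(X,X)=2\hbar\,\mathrm{Tr}(X^\dagger X)$ together with $\psi\psi^\dagger=\rho$, the first term is a one-line computation proportional to $\mathrm{Tr}(\hat H^2\rho)$. For the vertical term, the canonical identification $\mathfrak{u}(\sigma)\ni\xi\mapsto\psi\xi\in V_\psi\mathcal{S}(\sigma)$ is an isometry for the metric \eqref{metric}, and $\mathrm{vert}(X_{\hat H})(\psi)=\psi\,\mathcal{A}_\psi(X_{\hat H}(\psi))=\psi\,\xi_H(\rho)$, so this term equals $\xi_H(\rho)\cdot\xi_H(\rho)$. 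It then remains to extract the $-i\mathbb{1}$ component of $\xi_H$: inserting $X_{\hat H}(\psi)=\tfrac{1}{i\hbar}\hat H\psi$ into the explicit mechanical connection \eqref{MC} gives $\xi_H(\rho)=\tfrac{1}{i\hbar}\sum_j\Pi_j(\psi^\dagger\hat H\psi)\Pi_j P^{-1}(\sigma)$, and pairing this with $-i\mathbb{1}$ and using $\sum_j\Pi_j=\mathbb{1}$, $\Pi_j^2=\Pi_j$ and $[\Pi_j,P(\sigma)]=0$ collapses the sum to $\mathrm{Tr}(\psi^\dagger\hat H\psi)=\mathrm{Tr}(\hat H\rho)=H(\rho)$. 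Hence the $-i\mathbb{1}$-component of $\xi_H$ contributes $H(\rho)^2$, so the Pythagorean splitting $\xi_H\cdot\xi_H=H(\rho)^2+\xi_H^\perp\cdot\xi_H^\perp$ in $(\mathfrak{u}(\sigma),\cdot)$ combines with the first term to give $\hbar^2 g(X_H,X_H)=\mathrm{Tr}(\hat H^2\rho)-H(\rho)^2-\xi_H^\perp\cdot\xi_H^\perp=\Delta H(\rho)^2-\xi_H^\perp\cdot\xi_H^\perp$. For the final claim, $\hat H$ being parallel at $\rho$ means $X_{\hat H}(\psi)$ is horizontal for every $\psi$ in the fibre, i.e. $\mathcal{A}_\psi(X_{\hat H}(\psi))=\xi_H(\rho)=0$, whence $\xi_H^\perp(\rho)=0$ and $\hbar^2 g(X_H,X_H)=\Delta\hat H(\rho)^2$.

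The step I expect to be the main obstacle is this middle one: confirming that it is exactly the $-i\mathbb{1}$ direction of the vertical part that carries the mean energy $H(\rho)$ --- which is precisely where the telescoping $\sum_j\Pi_j=\mathbb{1}$ in \eqref{MC} and the commutation $[\Pi_j,P(\sigma)]=0$ are indispensable --- while simultaneously keeping the factors of $\hbar$ and $2$ consistent among $G$, the metric $g$ on $\mathcal{D}(\sigma)$, and the metric \eqref{metric} on $\mathfrak{u}(\sigma)$, so that the single $\hbar^2$ prefactor in the statement absorbs all the normalisations.
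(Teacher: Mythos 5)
Your proposal is correct and follows essentially the same route as the paper's own proof: the horizontal--vertical splitting of the lift $X_{\hat H}$ under the Riemannian submersion, the identification of the vertical norm with $\xi_H\cdot\xi_H$, the observation that the $-i\mathbb{1}$-component of $\xi_H$ carries the mean energy $H(\rho)$, and the Pythagorean decomposition in $\mathfrak{u}(\sigma)$. You are in fact somewhat more explicit than the paper (verifying that $X_{\hat H}$ projects to $X_H$ and computing $\mathcal{A}_\psi(X_{\hat H})$ from the explicit connection formula), and your closing caveat about tracking the $\hbar$ and $2$ normalisations is well placed, since the paper's stated conventions for $G$, $g$, and the metric on $\mathfrak{u}(\sigma)$ are not fully consistent with the single $\hbar^2$ prefactor in the theorem.
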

 \begin{proof}
To prove the theorem we start by determining $\mathrm{Tr}(\hat{H}^{2}\rho)$ and $\mathrm{Tr}(\hat{H}\rho)^{2}$ by considering $\psi$  to be a purification of $\rho$.
 Thus we will have
\begin{eqnarray}
\nonumber
\mathrm{Tr}(\hat{H}^{2}\rho) &=& \hbar^{2}G(X_{\hat{H}}(\psi),X_{\hat{H}}(\psi))
\\&=&
\hbar^{2}g(X_{H}(\rho),X_{H}(\rho))+\hbar^{2}\xi_{H}(\rho)\cdot \xi_{H}(\rho)
\end{eqnarray}
 and
 \begin{eqnarray}
\nonumber
\mathrm{Tr}(\hat{H}\rho) &=&i \hbar\mathrm{Tr}(\mathcal{A}_{\psi}(X_{\hat{H}}(\psi))P(\sigma))
\\&=&\nonumber
i\hbar\mathrm{Tr}(\xi_{H}(\rho)P(\sigma))
\\&=& \hbar(-i \mathbf{1})\cdot \xi_{H}(\rho).
\end{eqnarray}
The result follows from
\begin{eqnarray}
\nonumber
\Delta H(\rho)^{2}
&=&\nonumber
\hbar^{2}g(X_{H}(\rho),X_{H}(\rho))+\hbar^{2}\xi_{H}(\rho)\cdot \xi_{H}(\rho)-\hbar^{2}H^{2}
\\&=&
\hbar^{2}g(X_{H}(\rho),X_{H}(\rho))+\hbar^{2}\xi^{\perp}_{H}(\rho)\cdot \xi^{\perp}_{H}(\rho).
\end{eqnarray}
Now, if $ \xi_{H}(\rho)=0$, then we get $\Delta H(\rho)^{2}=\hbar^{2}g(X_{H}(\rho),X_{H}(\rho))$.
\end{proof}
Note that for a pure state the field $\xi^{\perp}_{H}=0$, since the vertical bundle is one- dimensional and so we have $\Delta H^{2}=\hbar^{2}g(X_{H},X_{H})$ which is almost coincides with result given in \cite{Anandan_etal1990}.

\subsection{Quantum distance measure}
In this section, we will consider measuring distance between density operators defined on $\mathcal{D}(\sigma)$ which we have called dynamic distance measure \cite{DD}. The distance of a curves in $\mathcal{D}(\sigma)$ is a geodesic distance and it is defined as the infimum of the lengths of all curves that connect them.
  \begin{thm}
Let $\rho_{0},\rho_{1}\in \mathcal{D}(\sigma)$ be two density operators and $\hat{H}$ be the Hamiltonian of a quantum system. Then distance between $\rho_{0}$ and $\rho_{1}$ is given by
\begin{equation}
\mathrm{Dist}(\rho_{0},\rho_{1})=\inf_{\hat{H}}\frac{1}{\hbar}\int^{t=\tau}_{t=0}
\Delta H(\rho) dt,
\end{equation}
where the infimum is taken over all $\hat{H}$ that solve the following boundary value von Neumann  problem: $\dot{\rho}=X_{H}(\rho)$ with $\rho(0)=\rho_{0}$ and $\rho(1)=\rho_{1}$.

\end{thm}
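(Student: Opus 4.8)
The plan is to prove the two inequalities separately, using Theorem \ref{them1} as the bridge between the Riemannian length element on $\mathcal{D}(\sigma)$ and the energy dispersion $\Delta H$. For the bound $\mathrm{Dist}(\rho_0,\rho_1)\le\inf_{\hat H}\frac{1}{\hbar}\int_0^\tau\Delta H(\rho)\,dt$, let $\hat H$ be any (possibly time-dependent) Hermitian operator solving the boundary value von Neumann problem $\dot\rho=X_H(\rho)$, $\rho(0)=\rho_0$, $\rho(\tau)=\rho_1$. Its solution curve $t\mapsto\rho(t)$ joins $\rho_0$ to $\rho_1$, so by the definition of the geodesic distance as an infimum of lengths, $\mathrm{Dist}(\rho_0,\rho_1)\le\int_0^\tau\sqrt{g(X_H(\rho),X_H(\rho))}\,dt$. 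By Theorem \ref{them1} we have $\hbar^2 g(X_H(\rho),X_H(\rho))=\Delta H(\rho)^2-\xi^\perp_H(\rho)\cdot\xi^\perp_H(\rho)$, and $\xi^\perp_H\cdot\xi^\perp_H\ge 0$ since the pairing in (\ref{metric}) is positive semidefinite (it equals $2\hbar\,\mathrm{Tr}(\xi^\dagger\xi P(\sigma))$ with $P(\sigma)\ge 0$); hence $g(X_H,X_H)\le\hbar^{-2}\Delta H(\rho)^2$ and $\mathrm{Dist}(\rho_0,\rho_1)\le\frac{1}{\hbar}\int_0^\tau\Delta H(\rho)\,dt$. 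Taking the infimum over all admissible $\hat H$ gives this inequality.

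For the reverse inequality I would take an arbitrary smooth curve $\gamma\colon[0,\tau]\to\mathcal{D}(\sigma)$ with $\gamma(0)=\rho_0$, $\gamma(\tau)=\rho_1$, and exhibit an admissible $\hat H$ whose dispersion integral equals $\mathrm{Length}(\gamma)$. Fix $\psi_0\in\pi^{-1}(\rho_0)$ and let $\psi(t)\in\mathcal{S}(\sigma)$ be the horizontal lift of $\gamma$ through $\psi_0$ for the mechanical connection (\ref{MC}); since $\mathcal{U}(\mathcal{H})$ acts transitively on $\mathcal{S}(\sigma)$ this lift is the solution of $\dot\psi=X_{\hat H}(\psi)$ for some Hermitian $\hat H(t)$ (concretely, writing $\psi(t)=U(t)\psi_0$ one may take $\hat H(t)=i\hbar\,\dot U(t)U(t)^\dagger$), and then $\dot\rho=\frac{1}{i\hbar}[\hat H,\rho]=X_H(\rho)$, so $\hat H$ solves the boundary value problem. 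Because $\psi(t)$ is horizontal, $\mathcal{A}_{\psi(t)}(X_{\hat H}(\psi(t)))=0$, hence the intrinsic field $\xi_H$ vanishes along $\gamma$ and in particular $\xi^\perp_H=0$, so Theorem \ref{them1} gives $\hbar^2 g(X_H(\rho),X_H(\rho))=\Delta H(\rho)^2$ along $\gamma$. Consequently $\mathrm{Length}(\gamma)=\int_0^\tau\sqrt{g(\dot\gamma,\dot\gamma)}\,dt=\frac{1}{\hbar}\int_0^\tau\Delta H(\rho)\,dt$, whence $\inf_{\hat H}\frac{1}{\hbar}\int_0^\tau\Delta H(\rho)\,dt\le\mathrm{Length}(\gamma)$; taking the infimum over $\gamma$ yields $\inf_{\hat H}\frac{1}{\hbar}\int_0^\tau\Delta H(\rho)\,dt\le\mathrm{Dist}(\rho_0,\rho_1)$, and combined with the first bound this proves the theorem. (Both integrals are reparametrization invariant, so one may normalize $\tau=1$, reconciling the $\rho(1)=\rho_1$ in the statement.)

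The step I expect to be the main obstacle is the reverse inequality, specifically the claim that every curve in $\mathcal{D}(\sigma)$ is the von Neumann trajectory of a time-dependent Hamiltonian and that its lift can be chosen horizontal, so that the inequality in Theorem \ref{them1} becomes an equality. This rests on transitivity of the $\mathcal{U}(\mathcal{H})$-action on $\mathcal{S}(\sigma)$ together with existence and uniqueness of horizontal lifts for the principal bundle $\mathcal{U}(\sigma)\hookrightarrow\mathcal{S}(\sigma)\to\mathcal{D}(\sigma)$ equipped with the mechanical connection (\ref{MC}); one should also verify that $U(t)$, and hence $\hat H(t)$, is regular enough for the length integral to be defined, and observe that restricting the infimum to curves realized this way loses nothing, which is precisely what transitivity supplies.
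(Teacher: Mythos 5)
Your proof is correct and follows essentially the same route as the paper: bound the length of any von Neumann trajectory by $\frac{1}{\hbar}\int\Delta H\,dt$ via Theorem \ref{them1} and positivity of $\xi^{\perp}_{H}\cdot\xi^{\perp}_{H}$, then achieve equality by realizing an arbitrary curve through a horizontal lift generated by some Hamiltonian (using transitivity of the $\mathcal{U}(\mathcal{H})$-action). Your version is in fact more carefully organized than the paper's terse argument, splitting the claim cleanly into the two inequalities and making explicit the construction $\hat{H}(t)=i\hbar\,\dot{U}(t)U(t)^{\dagger}$ that the paper leaves implicit.
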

\begin{proof}
First we note that the length of a curve $\rho=\rho(t)\in\mathcal{D}(\sigma) $ is given by
\begin{equation}
\mathrm{Length}[\rho]=\int^{t=\tau}_{t=0}\sqrt{g(X_{H}(\rho),X_{H}(\rho))}
 dt.
\end{equation}
Now, we will use the result of the theorem \ref{them1}: if $\rho$ is the integral curve of the vector field $X_{H}$, then the length of $\rho$ is given by
\begin{equation} \label{leng}
\mathrm{Length}[\rho]\leq\frac{1}{\hbar}\int^{t=\tau}_{t=0}\Delta H(\rho)
 dt.
\end{equation}
For a Hamiltonian that generates a horizontal lift of $\rho$ we have also equality in  (\ref{leng}) by the theorem \ref{them1}. However, if $\rho$ is a shortest geodesic, then we will have
\begin{equation}
\mathrm{Dist}(\rho_{0},\rho_{1})=\frac{1}{\hbar}\int^{t=\tau}_{t=0}
\Delta H(\rho) dt.
\end{equation}
Thus we have proved the theorem.
\end{proof}
The following theorem is proved in \cite{DD}.
  \begin{thm}
The distance measure $\mathrm{Dist}(\rho_{0},\rho_{1})$ is a proper measure.

\end{thm}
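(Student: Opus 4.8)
The plan is to show that $\mathrm{Dist}$ satisfies the axioms of a metric on $\mathcal{D}(\sigma)$ --- non-negativity, vanishing precisely on the diagonal, symmetry, and the triangle inequality --- which is what is meant by a \emph{proper} distance measure. Two of these are immediate. Non-negativity holds because $\Delta H(\rho)\geq 0$ pointwise for every admissible Hamiltonian, so every competitor in the infimum is non-negative. For $\mathrm{Dist}(\rho_{0},\rho_{0})=0$ one uses the trivial Hamiltonian $\hat{H}=0$, whose von Neumann flow is the constant curve $\rho(t)\equiv\rho_{0}$; then $\Delta H(\rho)\equiv 0$ along it and its length vanishes, so the infimum is $0$. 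Finiteness of $\mathrm{Dist}$ is handled separately: since $\mathcal{D}(\sigma)$ is a single $\mathcal{U}(\mathcal{H})$-orbit, given $\rho_{0},\rho_{1}$ there is $U\in\mathcal{U}(\mathcal{H})$ with $\rho_{1}=U\rho_{0}U^{\dagger}$, and writing $U=e^{-i\hat{H}\tau/\hbar}$ for a fixed time-independent $\hat{H}$ produces an admissible curve of finite length.

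For the identity of indiscernibles, suppose $\mathrm{Dist}(\rho_{0},\rho_{1})=0$. By Theorem \ref{them1}, along any integral curve $\rho$ of $X_{H}$ one has $\Delta H(\rho)^{2}\geq \hbar^{2}g(X_{H}(\rho),X_{H}(\rho))$, hence $\frac{1}{\hbar}\int\Delta H(\rho)\,dt\geq \mathrm{Length}[\rho]$ where the length is measured by the Riemannian metric $g$. Therefore $\mathrm{Dist}(\rho_{0},\rho_{1})$ dominates the Riemannian distance on $(\mathcal{D}(\sigma),g)$, and since $g$ is a genuine positive-definite metric (it is the Riemannian-submersion image of $G(X,Y)=\hbar\,\mathrm{Tr}(X^{\dagger}Y+Y^{\dagger}X)$), that distance separates points, forcing $\rho_{0}=\rho_{1}$. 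In fact the inequality $\mathrm{Length}[\rho]\leq\frac{1}{\hbar}\int\Delta H(\rho)\,dt$ together with equality for horizontally lifting Hamiltonians identifies $\mathrm{Dist}$ with the honest geodesic distance of $(\mathcal{D}(\sigma),g)$, from which all axioms follow at once; I would present this identification as the backbone of the proof.

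Symmetry follows from time reversal. If $\rho(t)$, $t\in[0,\tau]$, solves $\dot{\rho}=X_{H}(\rho)$ with endpoints $\rho_{0},\rho_{1}$ and is generated by the possibly time-dependent $\hat{H}(t)$, then $\tilde{\rho}(t)=\rho(\tau-t)$ is generated by $-\hat{H}(\tau-t)$, runs from $\rho_{1}$ to $\rho_{0}$, and has the same length, because $\Delta H(\rho)=\sqrt{\mathrm{Tr}(\hat{H}^{2}\rho)-\mathrm{Tr}(\hat{H}\rho)^{2}}$ is invariant under $\hat{H}\mapsto-\hat{H}$; taking infima gives $\mathrm{Dist}(\rho_{0},\rho_{1})=\mathrm{Dist}(\rho_{1},\rho_{0})$. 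The triangle inequality comes from concatenation: for $\varepsilon>0$ pick curves realizing lengths within $\varepsilon$ of $\mathrm{Dist}(\rho_{0},\rho_{1})$ and $\mathrm{Dist}(\rho_{1},\rho_{2})$, reparametrize each to the interval appropriate for splicing, and join them at $\rho_{1}$ to obtain an admissible curve from $\rho_{0}$ to $\rho_{2}$ whose length is the sum of the two; letting $\varepsilon\to 0$ yields $\mathrm{Dist}(\rho_{0},\rho_{2})\leq\mathrm{Dist}(\rho_{0},\rho_{1})+\mathrm{Dist}(\rho_{1},\rho_{2})$.

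The main obstacle I anticipate is bookkeeping rather than conceptual: one must make precise the admissible class of (time-dependent) Hamiltonians and check that time-reversed, reparametrized, and spliced curves are again solutions of a von Neumann equation for some admissible $\hat{H}$, with lengths behaving additively under these operations. The cleanest way around this is to phrase the infimum directly over piecewise-smooth curves in $\mathcal{D}(\sigma)$, invoke the identity $\mathrm{Length}[\rho]=\frac{1}{\hbar}\int\Delta H(\rho)\,dt$ for horizontally lifting Hamiltonians to recover every piecewise-smooth curve length as a competitor, and thereby reduce $\mathrm{Dist}$ to the Riemannian geodesic distance of $(\mathcal{D}(\sigma),g)$, which is automatically a proper metric; a secondary point to verify is that $\mathcal{D}(\sigma)$ is connected so that the distance is everywhere finite, which again follows from transitivity of the $\mathcal{U}(\mathcal{H})$-action.
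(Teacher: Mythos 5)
Your proposal is correct, but note that the paper itself offers no proof of this theorem: it simply cites the reference [DD] and then lists positivity, non-degeneracy, symmetry, the triangle inequality, and unitary invariance as properties the measure satisfies. Your strategy --- use the inequality $\Delta H(\rho)^{2}\geq\hbar^{2}g(X_{H}(\rho),X_{H}(\rho))$ from Theorem \ref{them1}, together with equality for parallel (horizontally lifting) Hamiltonians and the fact that every curve in $\mathcal{D}(\sigma)$ is generated by some Hamiltonian (transitivity of the $\mathcal{U}(\mathcal{H})$-action), to identify $\mathrm{Dist}$ with the geodesic distance of the Riemannian submersion metric $g$ --- is precisely the argument implicit in the paper's preceding theorem and the one carried out in the cited work, so you are reconstructing the intended proof rather than finding a new route. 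Two small points: first, the ``bookkeeping'' you flag (closure of the admissible class under time reversal, reparametrization, and concatenation) is genuinely the only nontrivial content once the identification with geodesic distance is made, and your suggested fix of phrasing the infimum over piecewise-smooth curves is the clean way to do it; second, the paper also records unitary invariance, which you omit but which follows in one line from the invariance of $g$ under the left $\mathcal{U}(\mathcal{H})$-action (conjugating the whole competing curve and its generating Hamiltonian leaves $\Delta H$ pointwise unchanged). With those additions your argument is complete.
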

The distance measure $\mathrm{Dist}(\rho_{0},\rho_{1})$ also satisfies the following conditions
\begin{itemize}
  \item \textbf{Positivity:} $\mathrm{Dist}(\rho_{0},\rho_{1})\geq0$.
  \item \textbf{Non-degeneracy:} $\mathrm{Dist}(\rho_{0},\rho_{1})=0$ if and only if $\rho_{0}=\rho_{1}$.
  \item \textbf{Symmetry:} $\mathrm{Dist}(\rho_{0},\rho_{1})=\mathrm{Dist}(\rho_{1},\rho_{0})$.
  \item \textbf{Triangle inequality:} $\mathrm{Dist}(\rho_{0},\rho_{2})\leq \mathrm{Dist}(\rho_{0},\rho_{1})+\mathrm{Dist}(\rho_{1},\rho_{2})$.
  \item \textbf{Unitary invariance:} $\mathrm{Dist}(U\rho_{0}U^{\dagger},U\rho_{1}U^{\dagger})=\mathrm{Dist}(\rho_{0},\rho_{1})$.
\end{itemize}
\begin{example}
Consider a mixed quantum states with $\sigma=(\lambda_{1},\lambda_{2})$ and let $\varepsilon>0$: Then $\psi(t)\in \mathcal{S}(\sigma)$ is given by
\begin{equation}
\psi(t)=\left(
          \begin{array}{cc}
            \sqrt{\lambda_{1} }\cos(\varepsilon t)& \sqrt{\lambda_{2} }\sin(\varepsilon t) \\
             \sqrt{\lambda_{1} }\sin(\varepsilon t)&\sqrt{\lambda_{2} }\cos(\varepsilon t) \\
          \end{array}
        \right)
\end{equation}
for $0\leq t\leq1$. Now, if we set $\rho_{0}=\psi(0)\psi(0)^{\dagger}$ and $\rho_{1}=\psi(1)\psi(1)^{\dagger}$. Then for small $\varepsilon$ we have $\mathrm{Dist}(\rho_{0},\rho_{1})=\mathrm{Length}[\psi]=\varepsilon$.
To be able to compare $\mathrm{Dist}(\rho_{0},\rho_{1})$ with other well-known distance measure we will consider
an explicit formula for Bures distance for density operators on finite dimensional Hilbert space \cite{Dittmann1999}. In particular, the Bures distance on $\mathbb{C}^{2}$ is given by
\begin{equation}
D_{B}(\rho,\rho+\delta\rho)=\frac{1}{4}\mathrm{Tr}\left(\delta\rho\delta\rho+\frac{1}{\det\rho}(\delta\rho-
\rho\delta\rho)\delta\rho^{2}\right)
\end{equation}
An explicit expression for $\mathrm{D}_{B}(\rho_{0},\rho_{1})$ can be found in \cite{Dittmann1993}:
\begin{equation}
\mathrm{D}_{B}(\rho_{0},\rho_{1})=\frac{\lambda_{1}-\lambda_{2}}{\sqrt{2}}|\sin \varepsilon|\sqrt{2+\frac{(\lambda_{1}-
\lambda_{2})^{2}}{2\lambda_{1}\lambda_{2}}\sin^{2}\varepsilon}.
\end{equation}
\end{example}
The reader can find further information on the distance measure in our recent work on the subject \cite{DD}.
\\
A curve in $\mathcal{D}(\sigma)$ is a geodesic if and only
 if its horizontal lifts are geodesics in $\mathcal{S}(\sigma)$, and that the
  distance between two operators in $\mathcal{D}(\sigma)$ equals the length of
   the shortest geodesic that connects the fibers of $\pi$ over the two operators \cite{QSL}.
Let $\rho_{0},\rho\in \mathcal{D}(\sigma)$ with corresponding fibers $\ket{\psi_{0}}\in \pi^{-1}(\rho_{0})$ and $\ket{\psi}\in\pi^{-1}(\rho)$ defined on $ \mathcal{S}(\sigma)$. Moreover, let $\rho_{0}$ and  $\rho$  correspond to the points $p_{0}$ and $p$. Then one can consider a function $\delta_{p_{0}}:\mathcal{D}(\sigma)\longrightarrow \mathbb{R}_{+}$  defined by
 \begin{equation}\delta_{p_{0}}(p)=\mathrm{Tr}(\rho_{0}\rho)
\end{equation}
 called the quantum probability distribution on $\mathcal{D}(\sigma)$. The relation between the quantum probability distribution and the distance measure on the quantum phase space  $\mathcal{D}(\sigma)$ needs further investigation.

\subsection{Geometric uncertainty relation}
In this section, we discuss a geometric uncertainty relation for mixed quantum states \cite{GUR}.
Let $\hat{A}$ be a general  observable on the Hilbert space.  Then  an uncertainty function for  $\hat{A}$   is given by
   \begin{equation}
\Delta A(\rho)=\sqrt{\mathrm{Tr}(\hat{A}^{2}\rho)-\mathrm{Tr}(\hat{A}\rho)^{2}}.
\end{equation}
\begin{rmk}
Note that  almost all theory that we have discussed in section  \ref{qed} about $\mathbb{u}(\sigma)$-valued field $\xi_{H}$ can be applied here by  replacing the Hamiltonian $\hat{H}$ by $\hat{A}$.
\end{rmk}
Now, let  $\hat{A}$ and $\hat{B}$ be two observables. Moreover, let $(A,B)$ and $[A,B]$ be the expectation value functions of $(\hat{A},\hat{B})=\frac{1}{2}(\hat{A}\hat{B}+\hat{B}\hat{A})$ and $[\hat{A},\hat{B}]=\frac{1}{2i}(\hat{A}\hat{B}-\hat{B}\hat{A})$ respectively.  Then the Robertson-Schr\"{o}dinger uncertainty relation \cite{Robertson_1929} is given by
   \begin{equation}
\Delta A(\rho)\Delta B(\rho)\geq \sqrt{((A,B)-AB)^{2}+[A,B]^{2}},
\end{equation}
Next we want to derive a geometric uncertainty relation for mixed quantum states that involves Riemannian metric and symplectic form as we have derived for pure states.
   \begin{thm}
   Let $A$ and $B$ be two observables on the Hilbert space $H$. Then a geometric uncertainty relation for mixed quantum states is given by
   \begin{equation}\label{gur}
\Delta A(\rho)\Delta B(\rho)\geq \frac{\hbar}{2}\sqrt{\{A,B\}^{2}_{g}+\{A,B\}^{2}_{\omega}},
\end{equation}
where $\{A,B\}_{g}=g(X_{A},X_{B})$ is the Riemannian bracket and $\{A,B\}_{\omega}=\omega(X_{A},X_{B})$ is the Poisson bracket of $A$ and $B$.
\end{thm}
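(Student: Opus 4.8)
The plan is to mirror the pure-state derivation by descending an ``upstairs'' inequality from the purifying Hilbert space. First I would note that $\mathcal{L}(\mathcal{K},\mathcal{H})$ carries the Hilbert--Schmidt inner product, that every $\psi\in\mathcal{S}(\sigma)$ has unit norm there because $\mathrm{Tr}(\psi^{\dagger}\psi)=\mathrm{Tr}(P(\sigma))=1$, and that left multiplication $\hat{A}\colon\psi\mapsto\hat{A}\psi$ is a Hermitian operator on this Hilbert space with $\langle\psi\ket{\hat{A}\psi}=\mathrm{Tr}(\hat{A}\rho)=A(\rho)$ and $\langle\psi\ket{\hat{A}^{2}\psi}=\mathrm{Tr}(\hat{A}^{2}\rho)$, so its uncertainty in the ``state'' $\psi$ equals $\Delta A(\rho)$. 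Applying the Hilbert-space uncertainty relation~(\ref{inhs}) to the left-multiplication operators $\hat{A},\hat{B}$ at $\psi$ then gives $\Delta A(\rho)^{2}\Delta B(\rho)^{2}\geq\Omega(X_{\hat{A}},X_{\hat{B}})^{2}+\big(G(X_{\hat{A}},X_{\hat{B}})-AB\big)^{2}$, where $G$ and $\Omega$ are the real and imaginary parts of the Hilbert--Schmidt form on $\mathcal{L}(\mathcal{K},\mathcal{H})$ and $X_{\hat{A}}(\psi)=\frac{1}{i\hbar}\hat{A}\psi$ is the gauge-invariant lift of the Hamiltonian vector field $X_{A}$, projecting onto $X_{A}$ under $d\pi$ just as $\frac{1}{i\hbar}[\hat{H},\rho]$ does.

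The symplectic term descends cleanly. From $\Omega(X,Y)=-i\hbar\,\mathrm{Tr}(X^{\dagger}Y-Y^{\dagger}X)$ one gets $\Omega(X_{\hat{A}}(\psi),X_{\hat{B}}(\psi))=-\frac{i}{\hbar}\mathrm{Tr}([\hat{A},\hat{B}]\rho)$, a fixed multiple of $[A,B](\rho)$; and since symplectic reduction (see Theorem~\ref{srt}) gives $\pi^{*}\omega=\Omega|_{\mathcal{S}(\sigma)}$ with $d\pi\,X_{\hat{A}}=X_{A}$, the same quantity equals that multiple of $\omega(X_{A},X_{B})=\{A,B\}_{\omega}$. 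Thus $[A,B]$ is proportional to $\{A,B\}_{\omega}$ with no correction, and the second summand on the right is already of the desired form.

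For the Riemannian summand I would split $X_{\hat{A}}=\mathrm{hor}(X_{\hat{A}})+\psi\,\xi_{A}$ using the mechanical connection~(\ref{MC}), $\xi_{A}=\mathcal{A}_{\psi}(X_{\hat{A}})$. Since $\pi$ is a Riemannian submersion, $G(\mathrm{hor}\,X_{\hat{A}},\mathrm{hor}\,X_{\hat{B}})$ descends to $g(X_{A},X_{B})$, while the vertical--vertical piece is the invariant $\mathfrak{u}(\sigma)$-pairing $\xi_{A}\cdot\xi_{B}$ of equation~(\ref{metric}), so $G(X_{\hat{A}},X_{\hat{B}})=g(X_{A},X_{B})+\xi_{A}\cdot\xi_{B}$ up to the $\hbar$ normalization. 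Using the relation $\mathrm{Tr}(\hat{A}\rho)\propto(-i\mathbb{1})\cdot\xi_{A}$ extracted inside the proof of Theorem~\ref{them1}, the product $AB$ is precisely the contribution of the components of $\xi_{A}$ and $\xi_{B}$ along the unit vector $-i\mathbb{1}\in\mathfrak{u}(\sigma)$, so that $G(X_{\hat{A}},X_{\hat{B}})-AB$ collapses to a multiple of $\{A,B\}_{g}+\xi_{A}^{\perp}\!\cdot\xi_{B}^{\perp}$ --- exactly the off-diagonal analogue of Theorem~\ref{them1}. Feeding the two identities back into the upstairs inequality produces the claimed bound together with a leftover quadratic in $\xi_{A}^{\perp}$, $\xi_{B}^{\perp}$.

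That leftover is the real obstacle. For a pure state the vertical bundle is one-dimensional, so $\xi^{\perp}\equiv0$ and one recovers the pure-state theorem unchanged, and the same holds whenever $\hat{A}$ and $\hat{B}$ are parallel at $\rho$, consistent with the second clause of Theorem~\ref{them1}. For a genuinely mixed state I expect the honest route is to bypass the intermediate inequality above and argue directly from Theorem~\ref{them1}, writing $\Delta A(\rho)^{2}=\hbar^{2}g(X_{A},X_{A})+\xi_{A}^{\perp}\!\cdot\xi_{A}^{\perp}$ and likewise for $B$, and then combining the Cauchy--Schwarz inequalities $g(X_{A},X_{A})g(X_{B},X_{B})\geq g(X_{A},X_{B})^{2}$ for the Riemannian metric with $(\xi_{A}^{\perp}\!\cdot\xi_{A}^{\perp})(\xi_{B}^{\perp}\!\cdot\xi_{B}^{\perp})\geq(\xi_{A}^{\perp}\!\cdot\xi_{B}^{\perp})^{2}$ for the positive-definite $\mathfrak{u}(\sigma)$-metric, while retaining the symplectic contribution $\{A,B\}_{\omega}^{2}$ from the intermediate inequality. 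Arranging these pieces so that the cross terms carry the right sign, and tracking the powers of $\hbar$ correctly, is the step I expect to be the most delicate.
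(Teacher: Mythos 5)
You have assembled the right ingredients --- the Hilbert--Schmidt structure on $\mathcal{L}(\mathcal{K},\mathcal{H})$, the splitting of $X_{\hat{A}}$ into horizontal and vertical parts via the mechanical connection, the identity $G(X_{\hat{A}},X_{\hat{B}})=g(X_{A},X_{B})+\xi_{A}\cdot\xi_{B}$ up to normalization, and the identification of $A(\rho)B(\rho)$ with the contribution of the components of $\xi_{A},\xi_{B}$ along the distinguished unit vector of $\mathfrak{u}(\sigma)$ --- and you correctly diagnose why the naive descent of (\ref{inhs}) fails: the Riemannian summand comes out as $\bigl(\{A,B\}_{g}+\xi_{A}^{\perp}\cdot\xi_{B}^{\perp}\bigr)^{2}$, which need not dominate $\{A,B\}_{g}^{2}$. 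But neither of your two routes actually closes the argument, and the second one cannot work as stated: the real Cauchy--Schwarz inequality for $g$ alone gives only $g(X_{A},X_{A})\,g(X_{B},X_{B})\geq g(X_{A},X_{B})^{2}$, and no combination of it with the Cauchy--Schwarz inequality for the $\mathfrak{u}(\sigma)$-pairing of the $\xi^{\perp}$'s will make the symplectic term $\{A,B\}_{\omega}^{2}$ appear. You also cannot ``retain'' that term from a separate inequality --- two different lower bounds on the same product do not add.

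The step you are missing is the one the paper uses. First discard the vertical contribution entirely, so that $\Delta A(\rho)^{2}\geq\frac{\hbar}{2}\{A,A\}_{g}=\frac{\hbar}{2}G\bigl(X^{\|}_{\hat{A}},X^{\|}_{\hat{A}}\bigr)$, where $X^{\|}_{\hat{A}}$ is the horizontal component of the lift, and likewise for $B$. Then apply the \emph{complex} (Hermitian) Cauchy--Schwarz inequality to the two horizontal vectors inside the Hilbert--Schmidt inner-product space, whose real and imaginary parts are $G$ and $\Omega$; this single application yields
\begin{equation*}
G\bigl(X^{\|}_{\hat{A}},X^{\|}_{\hat{A}}\bigr)\,G\bigl(X^{\|}_{\hat{B}},X^{\|}_{\hat{B}}\bigr)\;\geq\;G\bigl(X^{\|}_{\hat{A}},X^{\|}_{\hat{B}}\bigr)^{2}+\Omega\bigl(X^{\|}_{\hat{A}},X^{\|}_{\hat{B}}\bigr)^{2}=\{A,B\}_{g}^{2}+\{A,B\}_{\omega}^{2},
\end{equation*}
where the last equality uses that $\pi$ is a Riemannian submersion for the first term, and for the second that vertical vectors are null directions of $\Omega$ restricted to $\mathcal{S}(\sigma)$, so passing to horizontal components does not change $\Omega$, which already equals $\omega(X_{A},X_{B})$ by the reduction theorem. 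Chaining the two inequalities gives (\ref{gur}) directly, with no leftover quadratic in $\xi_{A}^{\perp},\xi_{B}^{\perp}$ to dispose of.
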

   \begin{proof}
First we calculate the expectation value of $\hat{A}$:
\begin{eqnarray}
\nonumber
  A(\rho) &=& i\hbar \mathrm{Tr} (A_{\psi}(X_{\hat{A}}(\psi))P(\sigma))\\\nonumber&=&
  i\hbar \mathrm{Tr} (\xi_{A}(\rho)P(\sigma))
  \\&=& \frac{\hbar}{2}\chi\cdot\xi_{A}(\rho),
\end{eqnarray}
where $\chi=\frac{1}{i\sqrt{2\hbar}}\mathbf{1}$ is the unit vector in the Lie algebra $\mathbb{u}(\sigma)$. Thus the expectation value function of $\hat{A}$ is proportional to length of the projection of $\xi_{A}$ on $\chi$. Similarly for the observable $\hat{B}$ we have
\begin{eqnarray}
  B(\rho) &=&  \frac{\hbar}{2}\chi\cdot\xi_{B}(\rho).
\end{eqnarray}
We also need to estimate $(A,B)(\rho)$:
\begin{eqnarray}
  (A,B)(\rho) &=& \frac{\hbar}{2} G(X_{\hat{A}}(\psi),X_{\hat{B}}(\psi))\\\nonumber&=&
 \frac{\hbar}{2} g(X_{A}(\rho),X_{B}(\rho))+\frac{\hbar}{2}\xi_{A}(\rho)\cdot\xi_{B}(\rho),
\end{eqnarray}
and $[A,B](\rho)$:
\begin{eqnarray}
  [A,B](\rho) &=& \frac{\hbar}{2} \Omega(X_{\hat{A}}(\psi),X_{\hat{B}}(\psi))\\\nonumber&=&
 \frac{\hbar}{2} \omega(X_{A}(\rho),X_{B}(\rho)).
\end{eqnarray}
Let  $X^{\perp}_{A}(\rho)$ and $X^{\perp}_{B}(\rho)$ be the projection of $X_{A}(\rho)$ and $X_{B}(\rho)$ into the orthogonal  complement of the unit vector $\chi$ in $\mathbb{u}(\sigma)$. Then we have
\begin{eqnarray}
  (A,B)(\rho) - A(\rho)B(\rho)&=& \frac{\hbar}{2} \{A(\rho),B(\rho)\}_{g}+\frac{\hbar}{2}\xi^{\perp}_{A}(\rho)\cdot\xi^{\perp}_{B}(\rho)
\end{eqnarray}
and in particular for the observable $\hat{A}$ we get
\begin{eqnarray}\label{est}
 \Delta A^{2}= (A,A)(\rho) - A(\rho)A(\rho)\geq  \frac{\hbar}{2} \{A(\rho),A(\rho)\}_{g}.
\end{eqnarray}
Now, we let $X^{\|}_{A}$ and $X^{\|}_{B}$ denote the horizontal components of vector fields  $X_{A}$ and $X_{B}$ respectively. Then we have
\begin{eqnarray} \label{ineq}
\{A,A\}_{g}\{B,B\}_{g}
&=&G(X^{\|}_{\hat{A}},X^{\|}_{\hat{A}})G(X^{\|}_{\hat{B}},X^{\|}_{\hat{B}})\\\nonumber&\geq&
G(X^{\|}_{\hat{A}},X^{\|}_{\hat{B}})^{2}+\Omega(X^{\|}_{\hat{A}},X^{\|}_{\hat{B}})^{2}
\\\nonumber&=&\{A,B\}^{2}_{g}+\{A,B\}^{2}_{\omega},
\end{eqnarray}
where we have applied Cauchy-Schwarz inequality. Combining equations  (\ref{est}) and (\ref{ineq}) we get the geometric uncertainty relation given by equation (\ref{gur}).
\end{proof}
\begin{example}
For a mixed quantum state with $\sigma=(\lambda_{1},\lambda_{2})$ defined on $\mathbb{C}^{2}$, the geometric uncertainty relation for observables $S_{x}$ and $S_{y}$ is given by \cite{gur}
                              \begin{equation}\label{gur1}
\Delta S_{x}(\rho)\Delta S_{y}(\rho)\geq \frac{\hbar}{2}(\lambda_{1}-\lambda_{2}).
\end{equation}
\end{example}
For a detail comparison between geometric uncertainty and the Robertson-Schr\"{o}dinger uncertainty relation see \cite{GUR}.
\subsection{Geometric postulates of quantum mechanics for general mixed states}
We have introduced   a geometric formulation  for  mixed quantum states. We have shown that  $\mathcal{D}(\sigma)$ is a symplectic manifold equipped with a symplectic form $\omega $ and a Riemannian metric $g$. We can also show that there is an almost complex structure $J$ on $\mathcal{D}(\sigma)$ which is compatible with $\omega$ and
$g$. But we are not able to find an explicit expression for $J$.  We leave this question for further investigation and  we will write down  a set of  postulates which are a direct generalization of postulates for pure quantum states.
 \begin{itemize}
   \item \textbf{Physical state:} There is a one-one correspondent between points of the projective Hilbert space $ \mathcal{D}(\sigma)$, which is a symplectic manifold equipped with a symplectic form $\omega $ and a Riemannian metric $g$, and the physical states of mixed quantum systems.
 \item \textbf{Observables:} Let $f:\mathcal{D}(\sigma)\longrightarrow \mathbb{R}$ be a real-valued, smooth function on $\mathcal{D}(\sigma)$ which preserves  the symplectic form $\omega $ and the  Riemannian metric $g$. Then the observables or measurable physical quantity is presented by $f$.
   \item \textbf{Quantum evolution:} The evolution of closed mixed quantum systems is determined by the flow on $ \mathcal{D}(\sigma)$, which preserves the symplectic form $\omega $ and the  Riemannian metric $g$. Since we considering finite-dimensional cases, the flow is given by integrating Hamiltonian vector field $X_{H}$ of the observable $\hat{H}$.
 \end{itemize}
  \begin{rmk}
The measurement postulate needs  further investigation. In particular, we need to define a general metric  that is valid on different orbits.
\end{rmk}
\begin{rmk}
 A weakness in the above geometric postulate of quantum mechanics for mixed states is that we are not able to show that the quantum phase space $\mathcal{D}(\sigma)$ is  a K\"{a}hler manifold. But there is another geometric formulation of quantum mechanics  that is based on K\"{a}hler structures  \cite{Hosh}, where the quantum phase space is actually a K\"{a}hler manifold. Thus if one wants to make sure that the quantum phase space is K\"{a}hler manifold, then it would be a better choice to write down the geometric postulate of quantum mechanics based on that geometric framework.
\end{rmk}
To summarize, we have discussed the geometric postulate of quantum mechanics for mixed quantum states for the sake of completeness and this topic still needs further investigation.

\subsection{Geometric phase for mixed quantum states}
We have discussed a fiber bundle approach to geometric phase of pure quantum states in section \ref{GP}. In this section we will extended the discuss to mixed quantum states.
Uhlmann \cite{Uhlmann1986, Uhlmann(1991)} was among the first to
develop a theory for geometric phase of  mixed quantum states based on purification.
Another approach to geometric phase for mixed  quantum states was proposed in \cite{Sjoqvist} based on quantum interferometry. Recently, we have introduced an operational geometric phase for mixed quantum states, based on holonomies \cite{GP}. Our geometric phase generalizes the standard definition of geometric phase for mixed states, which is based on quantum interferometry and it is rigorous, geometrically elegant, and applies to general unitary evolutions of both non\-degenerate and degenerate mixed states. Here we give a short introduction to such a geometric phase for mixed quantum states.
Let $\rho$ be a curve in $\mathcal{D}(\sigma)$. Then the horizontal lifts of $\rho$ defines a parallel transport operator $\Pi[\rho]$ from the fibre over $\rho(0)$ onto the fibre $\rho(\tau)$ as follows
\begin{equation}
\Pi[\rho]\Psi_{0}=\Psi_{\|}(\tau),
\end{equation}
where $\psi_{\|}(\tau)$ is horizontal lift of $\rho$ extending from $\psi_{0}$ defined by
\begin{equation}\label{horiz}
\psi_{||}(t)=\psi(t)\exp_{+}\left(-\int_{0}^{t}\mathcal{A}_\psi(\dot \psi)dt\right),
\end{equation}
where $\exp_{+}$ is the positive time-ordered exponential and $\mathcal{A}$ is the mechanical connection defined by equation (\ref{MC}), see Figure 5.
\begin{figure}[t]
\centering
\includegraphics[scale=.65]{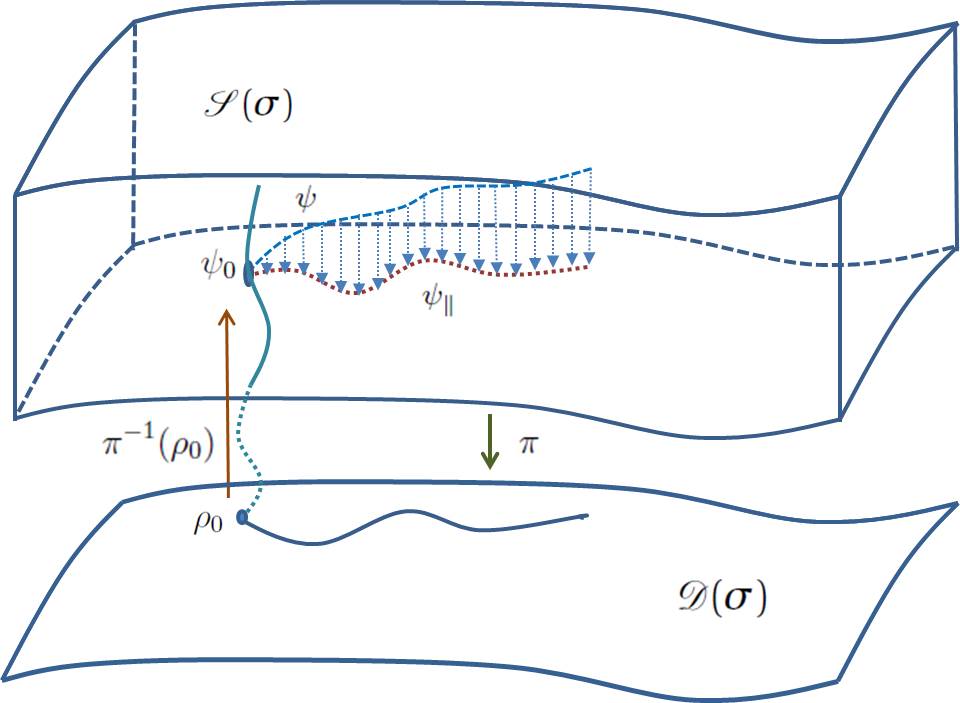}
%
%
\caption{Illustration of horizontal lift $\psi_{\|}(t)$.}
\label{fig11}      
\end{figure}
The geometric phase of $\rho$ is defined by
\begin{eqnarray}
\gamma_{g}(\rho)&=&\arg \mathrm{Tr}(P(\sigma)\mathrm{Hol}(\rho))
\\\nonumber&=&
\arg \mathrm{Tr}(\psi^{\dagger}_{0}\Pi[\rho]\psi_{0})
\\\nonumber&=&\arg \mathrm{Tr}(\psi^{\dagger}_{\|}(0)\psi_{||}(\tau))
\end{eqnarray}
where $\mathrm{Hol}(\rho)$ is the holonomy of $\rho$.
\begin{example}
Consider a mixed qubit state represented by
$\rho=\frac{1}{2}\left(
       \begin{array}{cc}
         1+\cos\vartheta & p\sin\vartheta \\
         p\sin\vartheta & 1-\cos\vartheta \\
       \end{array}
     \right)
$ with $0\leq \vartheta< 2\pi$ and a unitary operator $U(t)=\mathrm{diag}(e^{- it}, e^{ it})$ with $0\leq t \leq 2\pi$ . Then the geometric phase is given by \cite{GP2}
\begin{eqnarray}
\gamma_{g}(\rho)&=&\arg \mathrm{Tr}(\psi^{\dagger}_{\|}(0)\psi_{||}(\tau))
\\\nonumber&=&\arg \left(-\frac{1}{2}(1+p) e^{i \pi \cos \vartheta}-\frac{1}{2}(1-p) e^{-i \pi \cos \vartheta}\right).
\end{eqnarray}
\end{example}
Since we have access to all elements of the holonomy group of $\rho$, we are also able to defined higher order geometric phases for mixed quantum states. We will not discuss higher geometric phases here and refer the interested reader to \cite{GP}.

\section{Conclusion}\label{sec4}
In this work, we have given a concise introduction to geometric formulation of  quantum mechanics based on principal fiber bundle and momentum map. We divided our presentation in three parts. In the first part we have given an  introduction to Hamiltonian dynamics, principal fiber bundle, and momentum map.
In the second part of the text we  have discussed geometry of pure quantum systems including geometric characterization of quantum phase space, quantum dynamics, geometric phase, and quantum measurement  of pure states. We also have discussed some applications of geometric quantum mechanics of pure states such as geometric uncertainty relation and reviewed the geometric postulates of quantum mechanics. In the third  part of the text we have considered the geometric formulation of general quantum states represented by density operators. Our presentation was mostly based on our recent geometric formulation of mixed quantum states. After  a short introduction to the idea of the framework we moved to discuss the applications. We have discussed the quantum energy dispersion, geometric phase, and geometric uncertainty relation for mixed quantum states. We have also tried to extend the geometric postulates of quantum mechanics into mixed quantum states. But this topic definitely needs further investigation.\\
 The results  we have reviewed and discussed in this work give a  very interesting insight on geometrical structures of  quantum systems and on our understanding of geometrical nature  of quantum theory. We are also convinced  that geometric formulation of quantum theory will have an impact on our understanding of physical reality. The geometric framework will also provide us with many applications  waiting to be discovered. We hope that this work could encourage  reader to contributed to this exiting field of research.

\begin{flushleft}
\textbf{Acknowledgments:} The author acknowledges useful comments and discussions with O. Andersson and Professor R. Roknizadeh.
\end{flushleft}

\end{document}